\documentclass[11pt]{article}
\usepackage{preamble}
\usetikzlibrary{positioning}
\usepackage{xcolor}
\usepackage{placeins}
\definecolor{gacEdge}{RGB}{0,150,170}
\definecolor{ghostEdge}{RGB}{200,200,200}
\definecolor{treeEdge}{RGB}{33,113,181}
\definecolor{pathEdge}{RGB}{228,26,28}
\definecolor{vertexA}{RGB}{235,250,252}
\definecolor{vertexB}{RGB}{230,244,255}
\definecolor{vertexC}{RGB}{255,235,235}  
\definecolor{diaggray}{RGB}{110,110,110}
\definecolor{xblue}{RGB}{38,92,155}
\definecolor{zzorange}{RGB}{210,120,40}

\definecolor{datablue}{HTML}{2166AC}
\definecolor{bandblue}{HTML}{9ECAE1}
\definecolor{refcoral}{HTML}{D6604D}
\definecolor{gridgray}{HTML}{CCCCCC}
\usepackage{pgfplots}
\pgfplotsset{compat=1.18}
\pgfplotsset{plot coordinates/math parser=false}
\usepackage{subcaption}

\usepackage[margin=1in]{geometry}
\usepgfplotslibrary{groupplots}
\PassOptionsToPackage{noend}{algpseudocode}
\usepackage{algorithm}
\usepackage{algpseudocode}
\usepackage{amsmath, amssymb}
\usepackage{enumitem}
\usepackage{placeins}

\numberwithin{equation}{section}

\pgfplotsset{
  paperplot/.style={
    width              = 8cm,
    height             = 6cm,
    axis lines         = left,
    axis line style    = {semithick, black},
    tick align         = outside,
    major tick length  = 4pt,
    minor tick num     = 3,
    xtick style        = {black, semithick},
    ytick style        = {black, semithick},
    tick label style   = {font=\small},
    label style        = {font=\small},
    title style        = {font=\small\bfseries},
    legend style       = {
      font        = \small,
      draw        = gray!50,
      fill        = white,
      fill opacity= 0.9,
      text opacity= 1,
      inner sep   = 5pt,
      row sep     = 1pt,
    },
    grid               = both,
    grid style         = {gridgray, very thin},
    minor grid style   = {gridgray!40, ultra thin},
    clip               = false,
  },
}

\allowdisplaybreaks

\usepackage{defs}
\usepackage{float}

\usepackage{textgreek}
\usepackage{geometry}

\vspace{1.5cm}
\author{
Constantin Cedillo Vayson de Pradenne  \\
Caltech \\
\href{mailto:ccedillo@caltech.edu}{\texttt{ccedillo@caltech.edu}}
\and
Jordan Cotler \\
Harvard University \\
\href{mailto:jcotler@fas.harvard.edu}{\texttt{jcotler@fas.harvard.edu}}
\and
Hsin-Yuan Huang \\
Oratomic, Caltech \\
\href{mailto:hhuang@oratomic.com}{\texttt{hhuang@oratomic.com}}
}

\renewcommand{\wt}{\mathrm{wt}}

\renewcommand{\top}{\intercal}

\usepackage[titles]{tocloft}

\usepackage{stackengine}
\stackMath
\newcommand\tsup[2][2]{%
 \def\useanchorwidth{T}%
  \ifnum#1>1%
    \stackon[-.5pt]{\tsup[\numexpr#1-1\relax]{#2}}{\scriptscriptstyle\sim}%
  \else%
    \stackon[.5pt]{#2}{\scriptscriptstyle\sim}%
  \fi%
}

\title{Learning Hamiltonians at Long Times}

\begin{document}
\captionsetup[algorithm]{hypcap=false}
\emergencystretch=3em
\hbadness=10000
\hfuzz=60pt
\pagestyle{empty}
{
  \renewcommand{\thispagestyle}[1]{}
  \maketitle

\begin{abstract}
We study the problem of learning an unknown $n$-qubit Hamiltonian $H$ from $U = e^{-iHt}$ for a single time $t$, where $t$ may be arbitrarily large. For broad families of local Hamiltonians, we prove that, with high probability over $H$ and $t$, any sum of local observables $A$ that is normalized and orthogonal to $H$ satisfies $\tfrac{1}{2^n}\|[U(t),A]\|_F^2 \geq 1/\poly(n)$. The Hamiltonian is therefore the unique approximately conserved local observable, and we can efficiently recover~$H$, up to scale, as the approximate null vector of a data matrix built from random product-state inputs and classical shadows. As a corollary, we obtain a weak equilibration statement: the infinite-temperature autocorrelation of every sum of local observables orthogonal to $H$ decays by at least an inverse-polynomial amount.
\end{abstract}

}

\clearpage
\pagestyle{plain}
\pagenumbering{arabic}

\tableofcontents

\newpage

\begin{figure}
    \centering
\includegraphics[width=1.0\linewidth]{EquilFig1.pdf}
\caption{
\textbf{Hamiltonian learning from weak equilibration.}
(a) \emph{Schematic of the mechanism}. Under long-time evolution, local observables in the search space generically lose their infinite-temperature autocorrelation, while the Hamiltonian direction is exactly conserved. Classical shadow measurements on random product-state probes produce a data matrix whose approximate null vector recovers $H$.
(b) \emph{Weak-equilibration diagnostic for random local Hamiltonians}. For each system size~$n$, we track how strongly the most slowly decorrelating normalized local observable orthogonal to $H$ retains its initial value under the dynamics. This quantity starts at $1$ and rapidly settles to a plateau below $1$. The inset shows the late-time plateau values with a semi-log fit, indicating that the plateau of the strongest non-Hamiltonian local autocorrelation decays exponentially with $n$. Additional numerical results are in Appendix~\ref{Additional Numerics}.
}
    \label{fig:main}
\end{figure}

\section{Introduction}
\label{sec:intro}

A basic question in quantum learning is whether an unknown local Hamiltonian $H$ can be recovered from a single time-evolution unitary $U(t) = e^{-iHt}$, when the evolution time $t$ may be large and unknown. Since the scale of $H$ is inseparable from the unknown time, the natural target is the Hamiltonian \emph{direction} within a prescribed space of local operators. Without further assumptions, worst-case recovery is impossible or at least ill-posed: different Hamiltonians can yield the same $U(t)$ at isolated values of $t$, and nongeneric Hamiltonians may have many local conserved quantities, so identifying $H$ as a local direction fixed by the dynamics is underdetermined. We therefore consider an average-case version of the question, and prove that for broad ensembles of random local Hamiltonians, a single long-time evolution suffices to determine the Hamiltonian direction among local observables, with high probability over $H$ and~$t$.

The argument rests on a simple observation about commutators. For any local observable $A$ in the search space, the Frobenius norm $\|[U(t),A]\|_F$ quantifies how strongly the dynamics rotate $A$: if $A$ is approximately conserved, this commutator is small, while a lower bound on it certifies that $A$ is not preserved by the long-time evolution. The Hamiltonian itself is trivially conserved, since $[U(t),H]=0$. Thus, in the absence of other approximately conserved local quantities, the local observable most preserved by conjugation by $U(t)$ must be $H$, up to scale and sign. This motivates a learning procedure that minimizes $\|[U(t),A]\|_F$ over local observables~$A$. Our main contribution is to establish a weak but \emph{uniform} version of this principle, strong enough to make the procedure rigorous even when $t$ is large.

Quantitatively, for $n$ qubits and $d=2^n$, we study the normalized commutator $\tfrac{1}{d}\|[U(t),A]\|_F^2$. Our central result is that, for the random local Hamiltonian classes considered here, the Hamiltonian direction is the \emph{only} approximately conserved direction in the chosen space. More precisely, subject to the orthogonality and normalization conditions $\tr(AH)=0$ and $\tfrac{1}{d}\tr(A^2)=1$, we prove an inverse-polynomial lower bound on $\tfrac{1}{d}\|[U(t),A]\|_F^2$ that holds simultaneously for all admissible local observables $A$, with high probability over both the sampled Hamiltonian $H$ and the sampled time~$t$. The order of quantifiers matters: $t$ is sampled first, and the observable $A$ may be chosen afterward, so the lower bound must be uniform in $A$.

These bounds connect Hamiltonian learning to the broader study of equilibration. Closed quantum systems evolve reversibly, yet local observables in many-body systems often appear to lose memory of their initial values at long times. Foundational results establish that closed systems generically equilibrate under broad spectral or typicality assumptions~\cite{Reimann2008,LindenPopescuShortWinter2009,Short2011Equilibration,ShortFarrelly2012,GogolinEisert2016}, and the eigenstate thermalization hypothesis provides a microscopic mechanism for thermal behavior in chaotic systems~\cite{Deutsch1991,Srednicki1994}. These previous results, however, do not yield inverse-polynomial lower bounds on commutators of local observables with a long-time unitary. We prove such bounds for several random local Hamiltonian models, and use them to learn $H$ from a single long-time evolution.

The picture is more subtle in the worst case. Adversarially chosen Hamiltonians can carry many conserved charges, and integrable systems in particular often admit extensive families of local or quasi-local conservation laws. No theorem can therefore assert that every local Hamiltonian has a unique conserved local direction. Our viewpoint is instead average-case: even when a Hamiltonian admits additional conserved quantities, a small generic local perturbation typically breaks those conservation laws within the low-weight local sector.  We make this average-case statement precise for the model classes analyzed in Appendix~\ref{sec:commutator-gaps}. There, we show that generic local randomness removes accidental conserved directions in the low-weight sector, leaving the Hamiltonian direction isolated by an inverse-polynomial commutator gap. The same framework also gives smoothed-analysis guarantees: after a small continuous perturbation of the local coefficients, the same isolation holds with high probability.

Existing approaches to Hamiltonian learning span a wide range of access models, including local measurements on eigenstates and Gibbs states, short-time dynamics, controlled dynamics, and structured Hamiltonian classes~\cite{granade2012robust, wiebe2014hamiltonian, wang2017experimental, BaireyAradLindner2019, evans2019scalable, AnshuArunachalamKuwaharaSoleimanifar2021, HuangTongFangSu2023, yu2023robust, gu2024practical, caro2024learning, HaahKothariTang2024, BakshiLiuMoitraTang2024Structure, BakshiLiuMoitraTang2024Temperature, DutkiewiczOBrienSchuster2024, StilckFrancaetal2024, castaneda2025hamiltonian, zhao2025learning, ArunachalamDuttEscudero2025, hu2025ansatz}. Closest to our setting is the real-time evolution model: \cite{HuangTongFangSu2023} achieved Heisenberg-limited scaling using quantum control, \cite{DutkiewiczOBrienSchuster2024} analyzed the role of control more broadly, and \cite{BakshiLiuMoitraTang2024Structure} solved a structure-learning problem in this setting. Contemporaneous work~\cite{ShinLeeOh2026} also studies $e^{-iHt}$ without access to arbitrarily small times, but assumes that the evolution time can be chosen by the learner from a range bounded below by an inverse-polynomial, or by a constant, depending on the setting. All of these works leave open the question we address: whether an unknown $n$-qubit Hamiltonian $H$ can be learned from $U = e^{-iHt}$ at a single, unknown, and possibly arbitrarily large $t$. A more detailed comparison with prior work, including equilibration theory and conservation-law discovery, is given in Appendix~\ref{Related work}.

We prove that for broad random local Hamiltonian ensembles, a single unknown and possibly long-time evolution can identify the Hamiltonian direction (see Fig.~\ref{fig:main}). The proof has two main ingredients. First, a \emph{dynamical reduction} (Appendix~\ref{section-sinc}), based on a uniform sinc identity, converts a static commutator lower bound for $[H,A]$ into a dynamical commutator lower bound for $[U(t),A]$, uniformly over all local observables in the search space. Second, we establish an inverse-polynomial lower bound (Appendix~\ref{sec:commutator-gaps}) on the quantity
\begin{align}
\min_{\substack{A\in\mathcal{O}_{\mathcal{V}}^{\mathrm{op}} \\ \tfrac{1}{d}\tr(A^2)=1,\ \tr(AH)=0}}
\frac{1}{d}\,\|[A,H]\|_F^2 \, ,
\end{align}
where $\mathcal{O}_{\mathcal{V}}^{\rm op}$ is a vector space of local Paulis out of which the Hamiltonian can be built.  We establish the lower bound in two regimes. In the \emph{Pauli-injective regime} (Appendix~\ref{subsec:pauli-injective-commutator-gaps}), the Pauli support has a connected anticommutation graph and satisfies an edge-product injectivity condition: distinct anticommuting pairs of Pauli strings produce distinct Pauli products, up to a phase. Under this condition, the commutator decomposes as an orthogonal sum over the edges of the anticommutation graph, and the static gap reduces to a graph Poincar\'{e}, or spectral gap, estimate. In the \emph{geometrically local regime} (Appendix~\ref{subsec:geometric-local-nondegeneracy}), the main complication is that different commutator terms can overlap and interfere. We handle this by organizing the commutator locally: each bounded-size neighborhood gives a local constraint on possible conserved observables, and these local constraints are then patched together across the interaction graph to isolate the Hamiltonian direction. Certified examples include the transverse-field Ising family in the Pauli-injective regime; the periodic XYZ-chain support family certified by exact local rank checks; the dense finite-range Pauli families on the $1$D chain, square, triangular, honeycomb, and cubic lattices with support size at most $k\in \{2,3,4\}$ and and graph diameter at most $R\in \{1,2\}$; and the exact two-body no-field Pauli families for $R \in \{1,2\}$ on the chain, square, triangular, honeycomb, and cubic lattices, with open or periodic boundary conditions.

The resulting commutator gaps yield both a learning algorithm (Appendix~\ref{sec:learning-hamiltonian}) and a weak equilibration statement (Appendix~\ref{sec:equilibration}). Although we do not prove thermalization, nor that all local autocorrelations decay to zero, we show that every normalized local observable orthogonal to $H$ loses at least an inverse-polynomial fraction of its infinite-temperature autocorrelation at the sampled time. Within the local sector, the Hamiltonian direction is therefore the unique direction that remains almost fixed under the dynamics, and the algorithm recovers $H$, up to scale and sign, as the local observable least disturbed by conjugation by $U(t)$. Operationally, this minimization is implemented by building an empirical data matrix from random product-state probes and classical shadows, and extracting its approximate null vector. In this sense, weak equilibration in the local sector is what makes Hamiltonian learning from a single long-time evolution possible.

\section{Results}
\label{sec:results}

\subsection{Setup and notation}

We work on an $n$-qubit Hilbert space of dimension $d=2^n$, and write $\mathcal{P}_n$ for the set of $n$-qubit Pauli strings modulo overall phases. Throughout, we use the normalized Hilbert--Schmidt inner product $\langle A,B\rangle_{HS} := \tfrac{1}{d}\tr(A^\dagger B)$, with associated norm $\|A\|_{HS}^2 := \tfrac{1}{d}\tr(A^\dagger A)$.  For a given subset of Pauli strings $\mathcal{V} \subset \mathcal{P}_n \setminus \{I\}$, the associated real local span is
\begin{align}
\mathcal{O}_{\mathcal{V}}^{\mathrm{op}} := \operatorname{span}_{\mathbb{R}}\{P_u : u \in \mathcal{V}\}\,,
\end{align}
and we expand the Hamiltonian and observable in this basis as $H = \sum_{u\in\mathcal{V}} h_u P_u$ and $A = \sum_{u\in\mathcal{V}} a_u P_u$. To remove the exactly conserved Hamiltonian direction, we impose $\tr(AH)=0$, and we normalize local observables by $\tfrac{1}{d}\tr(A^2)=1$.
We adopt the extensive normalization $\tfrac{1}{d}\tr(H^2) = n$. Since the Pauli strings are orthonormal under the normalized Hilbert--Schmidt inner product, this is the same as requiring the coefficient vector to satisfy $\|h\|_2^2 = n$. It mirrors the physical convention that an $n$-qubit local Hamiltonian is a sum of $\Theta(n)$ bounded local terms.

In our analysis, we consider two regimes for the search space defined by the subset~$\mathcal{V}$, which we refer to as the search family. The unknown $n$-qubit Hamiltonian $H = \sum_{u \in \mathcal{V}} h_u P_u$ is in the span of $\mathcal{V}$ and is considered to be sampled according to a broad range of distributions over $h_u$. In the \emph{geometrically local regime}, $\mathcal{V}$ consists of Pauli strings of bounded support size and bounded graph diameter on a bounded-degree interaction graph. In the \emph{Pauli-injective regime}, $\mathcal{V}$ has a connected anticommutation graph and satisfies an edge-product injectivity condition. Throughout the main text, $\mathcal{V}$ denotes a generic finite Pauli family. We specialize to the Pauli-injective regime $\mathcal{V}_{\mathrm{inj}}$ in Appendix~\ref{subsec:pauli-injective-commutator-gaps}, and to the geometrically local regime $\mathcal{V}_{k,R}$ in Appendix~\ref{subsec:geometric-local-nondegeneracy}; the generic notation $\mathcal{V}$ and $\mathcal{O}^{\mathrm{op}}_{\mathcal{V}}$ are used in Appendices~\ref{section-sinc}, \ref{sec:learning-hamiltonian}, and~\ref{sec:equilibration}. The certified geometric examples are the support families for which the structural local nondegeneracy condition has been verified by the exact certificates described in Appendix~\ref{subsubsec:deterministic-certificates}. These include the dense finite-range cases on the listed lattices for the certified $(R,k)$ ranges stated there, and the exact two-body no-field families for $R \in \{1,2\}$.

\subsection{Learning from a single unknown time evolution}

Our learning algorithm searches for the local observable least disturbed by the unknown unitary. The Hamiltonian is always one such observable, and the commutator-gap theorem implies that, in the model classes considered here, it is essentially the only one.

\begin{theorem}[Single-time learning, informal]
\label{thm:informal-learning}
Suppose $U = e^{-iHt}$, where $H$ belongs to a local Hamiltonian ensemble in Appendix~\ref{sec:commutator-gaps} and $t$ is uniformly distributed in $[0, T]$ for any $T \geq 1/\text{\rm poly}(n)$. For any inverse-polynomial target accuracy~$\varepsilon$, there is a polynomial-time algorithm, using polynomially many random product-state inputs and randomized Pauli measurements~\cite{HuangKuengPreskill2020}, that outputs an estimate $\widehat H$ satisfying
\begin{align}
\min_{s\in\{\pm1\}} \|\widehat H - sH\|_{\mathrm{op}} \leq \varepsilon,
\end{align}
with high probability over the Hamiltonian $H$, time $t$, and randomized experimental outcomes.
\end{theorem}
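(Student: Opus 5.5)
The plan is to reduce learning to an approximate null-vector problem for a quadratic form on the coefficient space $\mathbb{R}^{\mathcal{V}}$, and then to control that form in two ways: its gap via the commutator-gap results, its empirical estimate via classical shadows.

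\textbf{Step 1: the quadratic form and its null vector.} Define the positive semidefinite form
\[
Q(a) := \tfrac{1}{d}\|[U,A]\|_F^2 = \tfrac{2}{d}\tr(A^2) - \tfrac{2}{d}\tr(U A U^\dagger A),
\]
where $A=\sum_u a_u P_u$. Write $Q(a)=a^\top M a$ with
\[
M_{uv} = 2\delta_{uv} - \tfrac{2}{d}\,\mathrm{Re}\,\tr(U P_u U^\dagger P_v).
\]
Since $[U,H]=0$, the vector $h$ satisfies $Mh=0$.

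\textbf{Step 2: the spectral gap of $M$.} Here I invoke the dynamical reduction of Appendix~\ref{section-sinc} together with the static commutator gap of Appendix~\ref{sec:commutator-gaps}. The static gap says that, with high probability over $H$, $\tfrac1d\|[H,A]\|_F^2 \ge \gamma = 1/\poly(n)$ for all normalized $A\perp H$ in $\mathcal{O}_{\mathcal{V}}^{\mathrm{op}}$. The uniform sinc identity then converts this, with high probability over $t\sim[0,T]$, into $Q(a)\ge \Delta=1/\poly(n)$ uniformly over unit $a\perp h$. The quantifier order must be respected: the bound is needed simultaneously for all $a$ once $t$ is fixed, which is exactly what the uniform version provides. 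Consequently $M$ has a one-dimensional kernel spanned by $h$, and every other eigenvalue is at least $\Delta$.

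\textbf{Step 3: estimating $M$.} Each entry $\tfrac1d\tr(UP_uU^\dagger P_v)$ is a product-state average. Expand $P_u$ on its support as
\[
P_u=\sum_{\sigma}\lambda_\sigma \rho_\sigma,
\]
a combination of at most $3^{|u|}\cdot 2$ product stabilizer states, each with $O(1)$ coefficients since $|u|$ is bounded. Equivalently, prepare random single-qubit stabilizer product states $\rho$ and reweight by the appropriate local signs. This gives $\tfrac1d\tr(UP_uU^\dagger P_v)$ as an expectation of $\tr(U\rho U^\dagger P_v)$ times a bounded local factor. Estimate $\tr(U\rho U^\dagger P_v)$ for all $v\in\mathcal{V}$ simultaneously using classical shadows~\cite{HuangKuengPreskill2020}; the variance is $3^{|v|}=O(1)$. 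By Hoeffding and a union bound over the $|\mathcal{V}|^2=\poly(n)$ entries, polynomially many samples give an empirical $\widehat M$ with $\|\widehat M-M\|_{\mathrm{op}}\le |\mathcal{V}|\max_{uv}|\widehat M_{uv}-M_{uv}|\le \eta$ for any inverse-polynomial $\eta$.

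\textbf{Step 4: recovery.} Take $\hat a$ to be the bottom eigenvector of the symmetrized $\widehat M$. By Davis--Kahan, $\min_s\|\hat a - s\,h/\|h\|\|_2\le 2\eta/\Delta$. Output $\widehat H=\sqrt n\sum_u \hat a_u P_u$, which fixes the scale since $\|h\|_2=\sqrt n$. Then
\[
\|\widehat H - sH\|_{\mathrm{op}} \le \sum_u |\widehat h_u - s h_u| \le \sqrt{|\mathcal{V}|}\,\sqrt n\,\cdot 2\eta/\Delta,
\]
and choosing $\eta=\varepsilon\Delta/(2\sqrt{n|\mathcal{V}|})$ keeps everything polynomial.

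\textbf{Main obstacle.} The hard part is Step 2, specifically the uniform conversion from the static gap to the dynamical gap at a random long time. Resonances $t(\lambda_i-\lambda_j)\in 2\pi\mathbb{Z}$ can make $Q$ small for some $A$ even when $[H,A]$ is large. My approach would be to average over $t$: $\mathbb{E}_t Q(a)$ involves $1-\mathrm{sinc}$ of the energy differences weighted by $|A_{ij}|^2$. That average should be lower bounded by the fraction of the weight of $A$ at energy differences $\gtrsim 1/T$, and this fraction is controlled by the static gap. To get uniformity in $a$, I would apply a Markov-type bound to $\lambda_{\min}$ of $M$ restricted to $h^\perp$. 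This works because $M$ is a finite $\poly(n)$-dimensional matrix whose entries are Lipschitz in $t$, and $\mathbb{E}_t M \succeq c\gamma$ on $h^\perp$. The missing quantitative step is to bound $\Pr_t[\lambda_{\min}(M|_{h^\perp})<\Delta]$ for some inverse-polynomial $\Delta$ smaller than $\gamma$.
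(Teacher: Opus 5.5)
Your proposal is correct. Its identifiability input is the same as the paper's. Step~2 cites the uniform sinc reduction, Theorem~\ref{relation-commutator-A-U}, together with the static gaps of Appendix~\ref{sec:commutator-gaps}. That theorem already gives the bound simultaneously for every $A\in\mathcal{A}_{\perp}$ once $t$ is drawn, so the ``missing step'' in your final paragraph is not needed. (Your sketched substitute would not close as stated. A lower bound on $\mathbb{E}_t M$ on $h^\perp$ does not by itself control $\lambda_{\min}(M(t)|_{h^\perp})$ with high probability, because the near-null direction can move with $t$. The paper instead applies Markov to the nonnegative scalar $Z(t)=\tr_{\mathcal{W}_H}K(t)$, the trace of the bad-coordinate projector compressed to the $r$-dimensional image $\mathcal{W}_H$. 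It then uses $\|K(t)\|_{\mathrm{op}}\le \tr_{\mathcal{W}_H}K(t)$ to get uniformity over $A$, at the price of a factor $r$.)

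The learning step, however, follows a genuinely different route. The paper takes the smallest right singular vector of the difference matrix with rows $x_v(\rho)=\tr(P_v\rho)-\tr(P_vU\rho U^\dagger)$. Its population covariance measures weighted squared norms of $U^\dagger AU-A$, so Lemma~\ref{lem:projection-heisenberg-difference} only yields a gap $3^{-k}\pi_U^2/4$. The argument then needs matrix Chernoff and a singular-vector perturbation bound. In exchange, $h$ is an exact null vector of the ideal matrix for every draw of probes, so only shot noise perturbs it.

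You instead estimate the restricted Pauli transfer matrix $R_{uv}=\tfrac1d\tr(UP_uU^\dagger P_v)=3^{|u|}\,\mathbb{E}_\rho\bigl[\tr(P_u\rho)\,\tr(P_vU\rho U^\dagger)\bigr]$. This is the precise form of your ``reweight by local signs,'' and it follows from $\mathbb{E}_\rho[\tr(P_u\rho)\rho]=3^{-|u|}P_u/d$ for the six-state ensemble. You then diagonalize $2I-R-R^\top$, whose quadratic form is exactly $\tfrac1d\|[U,A]\|_F^2$. Because this uses the linear quantity $\langle A,UAU^\dagger\rangle_{HS}$ rather than its square, your matrix has gap $\pi_U$ itself. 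Scalar concentration with a union bound then suffices, and the sample complexity scales as $\underline{\pi}_U^{-2}$ rather than the paper's $\underline{\pi}_U^{-4}$. The cost is that $h$ is only an approximate null vector of the finite-sample $\widehat M$, since probe sampling now perturbs it as well. (Minor: the Davis--Kahan form you use gives $2\sqrt{2}\,\eta/\Delta$ rather than $2\eta/\Delta$ for the vector distance, which changes nothing.)
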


The precise algorithm and proof are given in Appendix~\ref{sec:learning-hamiltonian}. The algorithm builds a data matrix from $M_{\mathrm{probe}}$ random product-state probes $\rho^{(\ell)}$ drawn from the six-state ensemble for $\ell = 1, \dots, M_{\mathrm{probe}}$. For each Pauli string $P_v$ in the search family $\mathcal{V}$, classical shadows~\cite{HuangKuengPreskill2020} provide an estimator $\widehat m_{\ell v}$ of the post-evolution expectation $m_{\ell v} := \tr\!\left(P_v U\rho^{(\ell)}U^\dagger\right)$. We then assemble the entries
\begin{align}
\widehat X_{\ell v} = \tr\!\left(P_v\rho^{(\ell)}\right) - \widehat m_{\ell v}.
\end{align}
In the ideal noiseless matrix $X$, the unit Hamiltonian direction $h_{\mathrm{unit}}:=h/\sqrt{n}$ is a right null vector, since $U^\dagger H U=H$. The estimator $\widehat h_{\mathrm{unit}}$ is the right singular vector of $\widehat X$ with the smallest singular value, and we set $\widehat H=\sqrt{n}\sum_{v\in\mathcal V}(\widehat h_{\mathrm{unit}})_v P_v$.

The analysis rests on a covariance gap. For any unit coefficient vector $a$ with associated observable $A = \sum_v a_v P_v$, consider the row variance
\begin{align}
\operatorname{Var}_\rho(A; U) := \mathbb{E}_{\rho}\!\left[ \bigl( \tr(A\rho) - \tr(A\, U\rho U^\dagger) \bigr)^2 \right],
\end{align}
where the expectation is over $\rho$ drawn from the six-state ensemble. A second-moment identity for this ensemble lower bounds $\operatorname{Var}_\rho(A;U)$ by the local projection of $U^\dagger A U - A$, and a projection lemma in turn lower bounds the latter by the dynamical commutator gap
\begin{align}
\pi_U := \min_{\substack{A \in \mathcal{O}_{\mathcal{V}}^{\mathrm{op}} \\ \tfrac{1}{d}\tr(A^2)=1,\ \tr(AH)=0}} \tfrac{1}{d}\|[U,A]\|_F^2 \,.
\end{align}
The commutator gap yields a strictly positive lower bound on the probe variance in every direction $a \perp h$, equivalently a spectral gap in the expected row covariance. A matrix Chernoff bound then transfers this covariance gap to the finite-sample matrix $X$, giving
\begin{align}
\sigma_2(X) \geq \sqrt{\frac{M_{\mathrm{probe}}\, 3^{-k}\, \pi_U^2}{8}}
\end{align}
with high probability, provided $M_{\mathrm{probe}} = \poly(n)$. Standard singular-vector perturbation bounds then show that the smallest right singular vector of $\widehat X$ remains close to $h_{\mathrm{unit}}$, provided the entrywise classical-shadow error is sufficiently small. As a byproduct, within the promised model class, the spectrum of $\widehat X$ provides a consistency check: if $U = e^{-iHt}$ with $H$ in the promised class, then $\widehat X$ has an isolated approximately null direction.

Appendix~\ref{sec:time-estimation} gives an optional post-processing routine that assigns a representative time once a normalized Hamiltonian direction has been learned. This routine is not part of the product-state classical-shadow learning guarantee: it uses a stronger coherent access model, including preparation of a maximally entangled state and coherent implementation of $e^{i\widehat H g}$. However, it does not require controlled access to $U(t_\star)$. With ordinary coherent uses of $U(t_\star)$, global phase is not observable, so the natural target is the projective Frobenius loss
\begin{equation}
D_{\rm proj}(s,g) := \min_{\phi \in \mathbb R} \frac{1}{2d}\left\|e^{i\phi}e^{-is\widehat H g} - U(t_\star)\right\|_F^2 = 1 - \left|\frac{1}{d}\operatorname{tr}\!\left(e^{is\widehat H g}U(t_\star)\right)\right|.
\end{equation}
The routine estimates the associated Choi overlap by preparing $\ket{\Phi_d}$, applying $U(t_\star)$ and then $e^{is\widehat H g}$ to one half, and measuring the projector onto $\ket{\Phi_d}$.

\subsection{Commutator gaps and local identifiability}

Theorem~\ref{thm:informal-learning} depends on showing that no local observable other than $H$ is nearly conserved by $U(t)$. The next result gives this identifiability statement as a commutator gap: every normalized local observable orthogonal to $H$ has an inverse-polynomial commutator with the sampled unitary.

\begin{theorem}[Commutator-gap theorem, informal]
\label{thm:informal-commutator-gap}
Let $H = \sum_{u \in \mathcal{V}} h_u P_u$ with $\tfrac{1}{d}\tr(H^2) = n$, drawn from either \text{\rm (i)} a Pauli-injective ensemble or \text{\rm (ii)} a certified geometrically local ensemble considered in Appendix~\ref{sec:commutator-gaps}. With high probability over $H$,
\begin{align}
\text{\rm (static commutator gap)} \quad \min_{\substack{A \in \mathcal{O}_{\mathcal{V}}^{\mathrm{op}} \\ \tfrac{1}{d}\tr(A^2)=1,\ \tr(AH)=0}} \frac{1}{d}\|[A,H]\|_F^2 \geq \frac{1}{\poly(n)}.
\end{align}
Furthermore, if $t$ is then sampled uniformly from $[0,T]$ for any $T \geq 1/\text{\rm poly}(n)$, the uniform sinc reduction of Appendix~\ref{section-sinc} implies that, with high probability over $t$,
\begin{align}
\text{\rm (dynamic commutator gap)} \quad \min_{\substack{A \in \mathcal{O}_{\mathcal{V}}^{\mathrm{op}} \\ \tfrac{1}{d}\tr(A^2)=1,\ \tr(AH)=0}} \frac{1}{d}\|[U(t),A]\|_F^2 \geq \frac{1}{\poly(n)}.
\end{align}
\end{theorem}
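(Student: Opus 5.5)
The plan has two parts, matching the statement: a static gap for $[A,H]$, then a transfer to $[U(t),A]$ via a sinc average over $t$.

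\textbf{Static gap: reduction to a quadratic form.} Write $A=\sum_{v\in\mathcal V}a_vP_v$ with $\|a\|_2=1$ and $a\perp h$. Then $[A,H]=\sum_{u,v}a_vh_u[P_v,P_u]$, and only anticommuting pairs contribute, each as $\pm 2i\,P_vP_u$. Hence $\tfrac1d\|[A,H]\|_F^2=a^\top M(h)a$ for a positive semidefinite matrix $M(h)$ whose entries are quadratic in $h$. By construction $M(h)h=0$, so the static gap is the second-smallest eigenvalue $\lambda_2(M(h))$.

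\textbf{Static gap, Pauli-injective regime.} Distinct anticommuting edges give distinct products $P_vP_u$ up to phase, so the commutator is an orthogonal sum over edges $\{u,v\}$ of the anticommutation graph $G$. This gives
\begin{align}
a^\top M(h)a=4\sum_{\{u,v\}\in E(G)}(a_vh_u-a_uh_v)^2 .
\end{align}
This is a weighted graph Laplacian form in the ratios $a_u/h_u$. With weights $h_u^2h_v^2$, it is lower bounded by a Poincar\'e inequality on the connected graph $G$. I would require anticoncentration of the coefficients, $|h_u|\ge 1/\poly(n)$ for all $u$ with high probability by a union bound; this holds, e.g., for Gaussian or bounded-density coefficients. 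Combined with the polynomial bound on the spectral gap of a connected graph on $\poly(n)$ vertices, this gives $\lambda_2\ge 1/\poly(n)$ on the complement of $h$.

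\textbf{Static gap, geometrically local regime.} Here interference between terms blocks the edge decomposition. Instead I would group terms by bounded neighborhoods $B$ of the interaction graph. For each $B$, the restricted form $M_B(h_B)$, meaning the commutator terms supported in $B$, has entries that are polynomials in finitely many coefficients. The certified local nondegeneracy says that for generic $h_B$ its kernel on the local variables is exactly $\mathrm{span}(h_B)$. Anti-concentration for a nonzero polynomial of bounded degree (Carbery--Wright) then gives a local gap $\ge 1/\poly(n)$ except with probability $1/\poly(n)$ per neighborhood, and a union bound covers all neighborhoods. Finally I would patch the local bounds. Locally, $a$ is close to $c_Bh_B$, and overlapping neighborhoods force $c_B\approx c_{B'}$ with loss proportional to the local gaps. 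Summing along the connected overlap graph, again a Poincar\'e argument with polynomial diameter, forces $a\approx c\,h$, which contradicts $a\perp h$ unless $a^\top Ma\ge1/\poly(n)$. \textbf{I expect this patching step, with quantitative control of the constants, to be the main obstacle.}

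\textbf{Dynamic gap.} Diagonalize $H=\sum_jE_j|j\rangle\langle j|$. Then
\begin{align}
\tfrac1d\|[U(t),A]\|_F^2=\tfrac1d\sum_{j,k}|A_{jk}|^2\,\bigl|e^{-iE_jt}-e^{-iE_kt}\bigr|^2=\tfrac1d\sum_{j,k}|A_{jk}|^2\,2\bigl(1-\cos(\omega_{jk}t)\bigr),
\end{align}
with $\omega_{jk}=E_j-E_k$. Averaging over $t\sim[0,T]$ gives the sinc identity
\begin{align}
\mathbb E_t\,\tfrac1d\|[U(t),A]\|_F^2=\tfrac1d\sum_{j,k}2|A_{jk}|^2\bigl(1-\mathrm{sinc}(\omega_{jk}T)\bigr).
\end{align}
Since $1-\mathrm{sinc}(x)\ge c\min(x^2,1)$ and $\omega_{jk}^2\le\|H\|_{\mathrm{op}}^2\cdot 4=\poly(n)$, we get $1-\mathrm{sinc}(\omega T)\ge c\,\omega^2T^2/\poly(n)$ when $T\ge 1/\poly(n)$. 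Because $\tfrac1d\|[A,H]\|_F^2=\tfrac1d\sum_{j,k}|A_{jk}|^2\omega_{jk}^2$, the expectation is at least $\tfrac1d\|[A,H]\|_F^2/\poly(n)$.

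\textbf{Uniformity in $A$.} The bound above is for a fixed $A$, but $t$ is sampled before $A$ is chosen. The dynamic quantity equals $a^\top N(t)a$ for a positive semidefinite matrix $N(t)$ acting on a space of dimension $|\mathcal V|=\poly(n)$, again with $h$ in the kernel. By the expectation bound and the static gap, $\mathbb E_tN(t)\succeq (1/\poly(n))\,\Pi_{h^\perp}$. Entries of $N(t)$ are bounded by $O(1)$. I would then show $\lambda_{\min}(N(t)|_{h^\perp})$ is not small for most $t$, using a Paley--Zygmund/Markov argument on the trace of $N(t)$ restricted to each eigen-direction. Alternatively, and more robustly, I would combine a net over the unit sphere of $h^\perp$ with a union bound, exploiting that $a^\top N(t)a\le 4$ so that Markov on $4-a^\top N(t)a$ gives constant-probability lower bounds. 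If these fail to give high probability, I would average over $\poly(n)$ independent times, or weaken the conclusion to probability $1-1/\poly(n)$ via the trace bound $\mathbb E\,\mathrm{tr}\bigl(N(t)|_{h^\perp}\bigr)$.
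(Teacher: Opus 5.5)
Your static-gap plan is essentially the paper's. In the Pauli-injective case you use the edge-orthogonal decomposition and a Poincar\'e inequality for the ratios $a_u/h_u$. In the geometric case you use local Carbery--Wright bounds, a union bound, and patching via a Poincar\'e inequality on the interaction graph. One ingredient is missing there. Since $\lambda_2(\Gamma_c)$ is not a polynomial, the paper applies Carbery--Wright to the cofactor sum $S_c=\sum_j\det\Gamma_c^{(j)}=\prod_{j\ge2}\lambda_j(\Gamma_c)$. It then converts $\lambda_2\le\xi$ into $S_c\le\xi M_\star^{d_c-2}$ via an operator-norm bound on the event $\max_u|h_u|\le H_{\max}$.

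The genuine gap is the uniformity step of the dynamic part, which the statement requires because $t$ is sampled before $A$ is chosen; this, not the patching, is the real obstacle. A bound $\mathbb{E}_tN(t)\succeq\mu\,\Pi_{h^\perp}$ with bounded entries does not control $\lambda_{\min}(N(t)|_{h^\perp})$ for typical $t$. For example, $N(t)=v(t)v(t)^\top$ with $v(t)=(\cos t,\sin t)$ has $\mathbb{E}_{t\sim[0,2\pi]}N(t)=\tfrac12 I$, yet it is singular for every $t$. Your net argument also fails, for two reasons:
\begin{itemize}
\item Markov applied to $4-a^\top N(t)a$ only gives $\Pr[a^\top N(t)a>\mu/2]\ge\mu/8$, an inverse-polynomial probability rather than a constant.
\item A net on the sphere of $h^\perp$ has $e^{\Omega(|\mathcal V|)}$ points, so a union bound needs exponentially small per-point failure, which Markov cannot supply.
\end{itemize}
Paley--Zygmund along a fixed basis does not control $\lambda_{\min}$ either. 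Averaging over many independent times changes the statement.

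The missing idea is to bound the \emph{bad} part from above rather than $N(t)$ from below, using the structure of $N(t)$. By the exact identity $\|[U(t),A]\|_F^2=t^2\sum_{p,q}|[A,H]_{pq}|^2\operatorname{sinc}^2(\Delta_{pq}t/2)$, the dynamic form is a nonnegative diagonal reweighting of $[A,H]$ in energy-basis matrix-entry coordinates. Here $[A,H]$ ranges over the fixed space $\mathcal W_H$ of dimension $r\le|\mathcal V|-1$.
\begin{itemize}
\item Call a coordinate bad if its weight is below $\tau=\theta^2/(64\Delta_{\max}^2)$. Counting the intervals where $|\sin(\Delta_{pq}t/2)|$ is small shows each coordinate is bad with probability $O(\theta+\theta/(\Delta_{\max}T))$.
\item Take a Frobenius-orthonormal basis $B_1,\dots,B_r$ of $\mathcal W_H$. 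Then $Z(t)=\sum_j\|\Pi_{\rm bad}(t)B_j\|_F^2$ is the trace of the positive compression $K(t)=P_{\mathcal W_H}\Pi_{\rm bad}(t)P_{\mathcal W_H}$.
\item Hence $\mathbb{E}_tZ=O(r\theta(1+1/(\Delta_{\max}T)))$, and Markov gives $Z(t)<\tfrac12$ with probability $1-O(r\theta(1+1/(\Delta_{\max}T)))$.
\item On that event, $\|\Pi_{\rm bad}(t)B\|_F^2\le\|K(t)\|_{\rm op}\|B\|_F^2\le Z(t)\|B\|_F^2<\tfrac12\|B\|_F^2$ simultaneously for every $B\in\mathcal W_H$. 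Therefore $\|[U(t),A]\|_F^2\ge\tfrac{\tau}{2}\|[A,H]\|_F^2$ for all $A\perp H$ at once.
\end{itemize}
This works because $\|K\|_{\rm op}\le\tr K$ for $K\succeq 0$. No analogous inequality extracts a lower bound on $\lambda_{\min}$ from an expectation. Choosing $\theta\propto\delta\min\{1,\sqrt nT\}/r$ and using $2\sqrt n\le\Delta_{\max}\le2\sqrt{n|\mathcal V|}$ costs only a factor $r^2|\mathcal V|$, so the reduction stays polynomial. Your averaged identity with $1-\operatorname{sinc}(\omega T)$ is correct, but it only yields the fixed-$A$ statement.
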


\noindent The precise dynamical reduction is Theorem~\ref{relation-commutator-A-U}, and the Pauli-injective and geometric gap statements are Theorem~\ref{thm:pauli-injective-gap}, Lemma~\ref{lem:dominating-extension}, Proposition~\ref{prop:deterministic-geometric-gap-from-local-gaps}, and Theorem~\ref{thm:geo-global-gap}.

The two regimes have different sources of genericity. In the Pauli-injective regime, the static gap is in fact \emph{deterministic}: Theorem~\ref{thm:pauli-injective-gap} gives an explicit gap whenever the anticommutation graph is connected, edge-product injectivity holds, and the nonzero coefficients are bounded above and below in absolute value. This regime therefore includes non-random examples, such as the uniform-coefficient transverse-field Ising model on a connected bounded-degree graph with non-vanishing couplings; randomness enters only as one convenient way to ensure coefficient regularity. By contrast, the geometrically local regime relies on random or smoothed coefficients to supply the local nondegeneracy and anti-concentration estimates required by the patching argument.

The proof of Theorem~\ref{thm:informal-commutator-gap} begins with an exact identity. Diagonalizing $H = \sum_{p = 1}^{d} E_p\, |p\rangle\langle p|$ and writing $\Delta_{pq} = E_p - E_q$, one has
\begin{align}
\|[U(t),A]\|_F^2 = t^2 \sum_{p,q = 1}^{d} |[A,H]_{pq}|^2\, \operatorname{sinc}^2\!\left(\frac{\Delta_{pq}\, t}{2}\right).
\end{align}
For a fixed $A$, the only obstruction to comparing $[U(t),A]$ with $[H,A]$ is that some of the relevant sine factors may be small at the sampled time. The learning problem requires a uniform statement: $t$ is sampled once, and then $A$ may be chosen adversarially from the local search space. We therefore have to show that the small-sinc coordinates cannot dominate any vector in the commutator image
\begin{align}
\mathcal{W}_H := \{[A,H] : A \in \mathcal{O}_{\mathcal{V}}^{\mathrm{op}},\ \tr(AH) = 0\}.
\end{align}
For each $t$, we mark an energy-gap coordinate $(p,q)$ as \emph{bad} if its sinc factor is too small. A trace bound on the bad-coordinate projector restricted to $\mathcal{W}_H$ then shows that, with high probability over $t$, the bad coordinates carry less than half the Frobenius mass of any vector in $\mathcal{W}_H$. The surviving good coordinates therefore retain a constant fraction of the static commutator, and the static commutator gap lifts uniformly to the dynamic commutator gap.

It remains to establish the static commutator gap. In the Pauli-injective regime, the commutator decomposes orthogonally over the edges of the anticommutation graph $\mathcal{G} = (\mathcal{V}, \mathsf{E})$,
\begin{align}
\frac{1}{d}\|[A,H]\|_F^2 = 4 \sum_{\{u,v\} \in \mathsf{E}} (a_u h_v - a_v h_u)^2,
\end{align}
and the change of variables $c_u = a_u/h_u$ recasts this sum as a weighted graph Dirichlet form $\sum_{\{u,v\} \in \mathsf{E}} h_u^2 h_v^2 (c_u - c_v)^2$. Orthogonality to $H$ rules out the constant mode, and connectivity of $\mathcal{G}$ supplies a Poincar\'{e} inequality, which yields an explicit inverse-polynomial lower bound whenever $|\mathcal V| \leq \operatorname{poly}(n)$, $h_{\max} \leq \operatorname{poly}(n)$, and $h_{\min} \geq 1/\operatorname{poly}(n)$. This argument first controls observables supported on the same Pauli family as $H$. The dominating-extension lemma then extends the bound to a larger search family $\mathcal{V}_{\max} \supseteq \mathcal{V}$, provided every added Pauli direction anticommutes with at least one Hamiltonian term and the resulting commutator products do not cancel.

In the geometrically local regime, distinct anticommuting pairs may produce the same output Pauli string, so the corresponding commutator terms can interfere rather than add independently. To keep track of these collisions, we use a commutator matrix $B(h)$ satisfying
\begin{align}
\frac{1}{d}\|[A,H]\|_F^2 = 4\|B(h)\,a\|_2^2 = 4\,a^\top \Gamma(h)\,a, \quad \Gamma(h) := B(h)^\top B(h).
\end{align}
Here $\Gamma(h)$ is the Gram matrix associated with the commutator map. The Pauli strings are supported on a bounded-degree interaction graph $G = (\Lambda,\mathsf E)$, whose vertices are the qubits, and this locality lets us decompose the Gram matrix into local positive semidefinite pieces, $\Gamma(h) = \sum_{c \in \Lambda} \Gamma_c(h)$. Each $\Gamma_c(h)$ only involves Pauli coefficients supported near the site $c$. For each local patch, a cofactor polynomial $S_c$ detects whether the only local null direction is the restriction of the Hamiltonian coefficient vector $h$. Anti-concentration gives a spectral gap for each local piece with high probability. A patching argument then promotes these local gaps to a global commutator gap.

Appendix~\ref{sec:commutator-gaps} also gives a smoothed version of the commutator-gap result, adopting the perspective of recent work~\cite{chen2025quantum}. Worst-case Hamiltonians may have many conserved quantities, so the theorem cannot hold uniformly over all local Hamiltonians. Instead, the result is generic: once the relevant support-family hypotheses are satisfied, a small continuous perturbation of the local coefficients typically removes accidental conserved directions in the low-weight sector and leaves the Hamiltonian direction isolated by an inverse-polynomial commutator gap.

The structural local nondegeneracy assumption underlying the geometrically local regime is a finite-dimensional algebraic condition on the local Pauli support family. It is certified patch by patch using exact arithmetic. For each rooted local type $c$, the certificate uses the actual coordinate set $\mathcal U_c$ and verifies the existence of an integer witness vector $h_0^{(\mathcal U_c)}$ for which the local commutator matrix has rank $d_c - 1$. Floating-point computations may be used to search for candidate witnesses, but the certificate itself is exact: it consists of the integer null-vector check $B_c(h_0)h_0^{(\mathcal U_c)}=0$ over $\mathbb Z$ and the modular-rank check $\operatorname{rank}_{\mathbb F_p} B_c(h_0) = d_c-1$ over the prime field used in the certificate. Details are in Appendix~\ref{subsubsec:deterministic-certificates}.\footnote{The code used to perform these checks is available at \texttt{witness\_hamiltonians/} in \url{https://github.com/ConstantinCed/Learning-Hamiltonians-using-Quantum-Equilibration}.}

\subsection{Weak equilibration}

The commutator gap has an immediate consequence for weak equilibration: it rules out the persistence of any local observable orthogonal to $H$ that retains almost all of its infinite-temperature autocorrelation at the sampled time.

\begin{theorem}[Weak equilibration, informal]
\label{thm:informal-equilibration}
Assume the dynamic commutator gap of Theorem~\ref{thm:informal-commutator-gap} at a sampled time~$t$. Then every normalized Hermitian $A \in \mathcal{O}_{\mathcal{V}}^{\mathrm{op}}$ with $\tr(AH) = 0$ satisfies
\begin{align}
\frac{1}{d}\,\tr\bigl(A(t)\, A\bigr) \leq 1 - \frac{1}{\poly(n)},
\end{align}
where $A(t) := U(t)^\dagger A\, U(t)$. Equivalently, every normalized local observable orthogonal to $H$ loses at least an inverse-polynomial fraction of its infinite-temperature autocorrelation.
\end{theorem}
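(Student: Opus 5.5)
The argument is a direct algebraic identity relating the commutator with $U(t)$ to the autocorrelation, followed by an application of the assumed dynamic commutator gap. Fix the sampled $t$ and any Hermitian $A\in\mathcal{O}_{\mathcal{V}}^{\mathrm{op}}$ with $\tfrac1d\tr(A^2)=1$ and $\tr(AH)=0$. Because $U=U(t)$ is unitary, the Frobenius norm is invariant under left multiplication by $U^\dagger$:
\begin{align}
\|[U,A]\|_F^2 = \|UA - AU\|_F^2 = \|A - U^\dagger A U\|_F^2 = \|A - A(t)\|_F^2 .
\end{align}

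Next I expand the right-hand side. Both $A$ and $A(t)$ are Hermitian, and $\tr(A(t)^2)=\tr(A^2)$ by cyclicity. Also $\tr(A(t)A)$ is real, since it is the trace of a product of two Hermitian matrices, or equivalently the Hilbert--Schmidt inner product of two Hermitian operators. These facts give
\begin{align}
\frac1d\|[U(t),A]\|_F^2 = \frac1d\tr(A^2) + \frac1d\tr(A(t)^2) - \frac2d\tr\bigl(A(t)A\bigr) = 2\Bigl(1 - \frac1d\tr\bigl(A(t)A\bigr)\Bigr).
\end{align}

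Rearranging, $\tfrac1d\tr(A(t)A) = 1 - \tfrac{1}{2d}\|[U(t),A]\|_F^2$. The hypothesis is the dynamic commutator gap of Theorem~\ref{thm:informal-commutator-gap}, which says $\pi_{U}\ge 1/\poly(n)$ uniformly over all admissible $A$. Substituting it gives $\tfrac1d\tr(A(t)A)\le 1-\pi_U/2 \le 1-1/\poly(n)$ simultaneously for every such $A$, which is the claim. The equivalent phrasing follows because the initial autocorrelation is $\tfrac1d\tr(A^2)=1$, so the loss is exactly $\pi_U/2$ or more.

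There is no real obstacle here. The only points needing care are the reality of $\tr(A(t)A)$ and the uniformity in $A$. Uniformity is inherited because $\pi_U$ is already a minimum over the constrained set at the fixed sampled $t$, so the quantifier order (first $t$, then arbitrary $A$) matches the one under which the dynamic gap was established.
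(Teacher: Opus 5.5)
Your proposal is correct and follows the paper's proof of Lemma~\ref{lem:weak-equilibration} essentially verbatim: unitary invariance gives $\|[U(t),A]\|_F=\|A-A(t)\|_F$, expanding yields $\tfrac{1}{2d}\|[U(t),A]\|_F^2 = 1-\tfrac1d\tr(A(t)A)$, and the uniform dynamic gap then gives $\tfrac1d\tr(A(t)A)\le 1-\pi_U/2$. Your remarks on the reality of $\tr(A(t)A)$ and on inheriting uniformity from the fixed-$t$ minimum are exactly the right points of care.
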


\noindent The precise statement is Lemma~\ref{lem:weak-equilibration} and Corollary~\ref{cor:autocorrelation-decay-from-static-gap} in Appendix~\ref{sec:equilibration}. The proof is the identity
\begin{align}
\frac{1}{2d}\|[U(t),A]\|_F^2 = 1 - \frac{1}{d}\,\tr\bigl(A(t)\, A\bigr),
\end{align}
valid for normalized Hermitian $A$, which converts a dynamical commutator lower bound directly into an upper bound on the autocorrelation. The infinite-temperature autocorrelation $C_A(t) := \tfrac{1}{d}\tr(A(t)\, A)$ measures how close $A(t)$ remains to $A$. The commutator gap forces $C_A(t)$ to lie a uniform inverse-polynomial distance below $1$ for every normalized local $A \perp H$, so the local sector contains no almost-fixed direction other than $H$ itself.

This is genuinely weaker than conventional thermalization, in three distinct senses. First, we do not show that $C_A(t)$ is close to zero, only that it is bounded away from $1$. Second, we do not control cross-correlations $\tfrac{1}{d}\tr(A(t)\, B)$. Third, our bound does not rule out the possibility that $U(t)^\dagger A U(t)$ has large overlap with some other local observable $B$; it only rules out $A$ remaining close to itself.  What we do rule out is the obstruction that matters for the learning problem: a nontrivial local direction that remains nearly unchanged under the dynamics. This is precisely what the algorithm requires, since within the local search space the Hamiltonian direction is then the only surviving direction.

The strength of equilibration directly controls the strength of the learning guarantee. Our theorems prove only an inverse-polynomial loss of autocorrelation, which suffices for polynomial-time learning. The algorithm itself, however, depends on the \emph{actual} commutator gap, not on the lower bound we are able to prove. If the true autocorrelation loss is $\delta$ rather than $1/\poly(n)$, then the dynamic commutator gap is of order $\delta$, and the data matrix $\widehat X$ has a correspondingly larger separation between the Hamiltonian null direction and all orthogonal local directions. Stronger equilibration therefore translates immediately into better conditioning for the learning problem. This is consistent with our numerical experiments (Appendix~\ref{Additional Numerics}), which suggest faster convergence than the conservative inverse-polynomial bounds alone would predict.

\section{Discussion}
\label{sec:discuss}

We have argued that learning a local Hamiltonian from long-time dynamics is an equilibration problem in disguise. If local observables equilibrate aside from conserved quantities, and if $H$ is the only local conserved quantity, then $H$ is the unique local direction preserved by $U(t)$, and we can recover it. Our results make this strategy rigorous for broad local Hamiltonian ensembles.

The commutator gap ties the algorithmic and physical pictures together. Algorithmically, it is identifiability of $H$ within the search space spanned by local observables; physically, it is the statement that every normalized local observable orthogonal to $H$ loses at least an inverse-polynomial amount of its infinite-temperature autocorrelation under $U(t)$. Both are facts about the spectrum of $[U(t),\,\cdot\,]$ restricted to the space spanned by local observables. The same dynamic commutator gap controls both the amount of local equilibration and the conditioning of the learning problem. Our proof gives only a conservative inverse-polynomial lower bound on this gap. The algorithm, however, benefits from the actual gap, which may be much larger in typical instances.

Indeed, the inverse-polynomial gaps established here are conservative guarantees from general-purpose spectral and anti-concentration estimates, and the algorithm is sensitive to the \emph{actual} gap of $[U(t),\,\cdot\,]$ on local operators, not to the lower bound we prove. When equilibration is stronger, the approximate null direction of $H$ is more sharply isolated and the empirical data matrix is better conditioned. Our numerics suggest this stronger-equilibration regime occurs in practice, so any sharper understanding of long-time equilibration would translate directly into sharper sample-complexity and stability guarantees for Hamiltonian learning.

We work in average-case regimes since integrable and highly symmetric models carry many conserved charges, and no theorem of this shape can hold uniformly over all local Hamiltonians. Indeed the average-case picture is the natural fix since a small local perturbation breaks accidental conservation laws in the low-weight local sector, and our analysis turns this intuition into bounds via coefficient anti-concentration and local nondegeneracy. Our results are therefore best read as \emph{average-case identifiability} rather than worst-case uniqueness.

A few caveats deserve emphasis. The dynamic gap holds with high probability over $t$; special recurrence times can still produce accidental cancellations. The geometric theorem assumes a local nondegeneracy condition (Appendix~\ref{subsec:geometric-local-nondegeneracy}), which is checkable patch by patch but not automatic for sparse models. The equilibration statement controls only the autocorrelation of $A$, not arbitrary correlators $\tfrac{1}{d}\tr(A(t)\, B)$ or time-averaged projections onto a restricted commutant; extending it is a natural next step. The learning theorem is a promise result rather than a property tester: an isolated approximate null direction is a robust signature of $U = e^{-iHt}$ within our model classes, but \emph{soundness} against arbitrary unitaries far from any local Hamiltonian evolution remains open.

The broader picture is that conserved quantities and learnability are two faces of the same phenomenon: a local Hamiltonian is learnable from long-time dynamics precisely because its own direction persists while other local directions do not. Pushing this idea into systems with extensive conservation laws, hydrodynamic behavior, or weak ergodicity breaking may lead both to sharper learning algorithms and to a clearer operational interpretation of equilibration itself.

\subsection*{Acknowledgments}

CCVP thanks Abhiram Cherukupali for helpful discussions. JC is supported by an Alfred P.~Sloan Fellowship. H.H. acknowledges support from the Broadcom Innovation Fund and the U.S. Department of Energy, Office of Science, National Quantum Information Science Research Centers, Quantum Systems Accelerator. We acknowledge ChatGPT 5.5 Pro and Claude Opus 4.7 for supplying ideas for several proofs, checking typos, and helping write the Hamiltonian certification code. We have carefully written and verified the proofs, and responsibility for the correctness of the proofs is our own.

\newpage
\noindent {\LARGE \bfseries Appendices}

\appendix

\paragraph{Roadmap for Appendices}
\begin{itemize}[leftmargin=*]
    \item Appendix~\ref{Related work} reviews related work on equilibration, Hamiltonian learning, learning from real-time evolution, and conservation-law discovery.
    
    \item Appendix~\ref{section-sinc} proves the uniform sinc reduction, which converts a static commutator gap for $[H,A]$ into a dynamical commutator gap for $[U(t),A]$ at a single randomly sampled time, uniformly over the local observables in the chosen search space.
    
    \item Appendix~\ref{sec:commutator-gaps} proves the static commutator gaps used throughout the paper. It treats the Pauli-injective regime through the anticommutation graph and the edge-product injectivity condition, including the dominating-extension argument and the transverse-field Ising example. It then treats geometrically local Hamiltonians by constructing the collision-aware commutator matrix, decomposing its Gram form into local positive semidefinite pieces, proving local small-ball estimates from the cofactor polynomial, and patching local gaps into a global gap. We also specialize the bounds to independent uniform, independent Gaussian, and smoothed coefficient distributions, after stating the certified geometric support-families that are covered.
    
    \item Appendix~\ref{sec:learning-hamiltonian} gives the Hamiltonian-learning algorithm and its guarantees. The appendix proves the probe second-moment identity, converts the dynamical commutator gap into a population covariance gap, establishes the empirical singular-value gap for the learning matrix, and proves robustness to shadow-estimation noise. It also includes an auxiliary procedure for estimating a representative evolution time once the Hamiltonian direction is known.
    
    \item Appendix~\ref{sec:equilibration} explains the weak equilibration consequence of the commutator gap: normalized local observables orthogonal to $H$ cannot retain almost all of their infinite-temperature autocorrelation at the sampled time.
    
    \item Appendix~\ref{Additional Numerics} collects additional numerical experiments supporting the reconstruction and commutator-gap behavior.
\end{itemize}

\section{Related work}\label{Related work}

As mentioned in the main text, the physical starting point for our work is equilibration theory. The results of~\cite{Reimann2008,LindenPopescuShortWinter2009,Short2011Equilibration,ShortFarrelly2012,GogolinEisert2016} give broad conditions under which closed quantum systems equilibrate, controlling expectation values or reduced states in various senses. These results do not, however, provide the uniform inverse-polynomial commutator gaps for local observables that are needed for our learning argument.

A separate line of work concerns Hamiltonian learning, beginning with \cite{BaireyAradLindner2019}, who showed that a local Hamiltonian can be recovered from local measurements on a single eigenstate; subsequent results have established sample-efficient learning of quantum many-body systems~\cite{AnshuArunachalamKuwaharaSoleimanifar2021}, optimal sample complexity for learning from high-temperature Gibbs states and short-time evolutions~\cite{HaahKothariTang2024}, and the first polynomial-time algorithm at arbitrary temperature~\cite{BakshiLiuMoitraTang2024Temperature}. For real-time evolution specifically, \cite{HuangTongFangSu2023} achieved Heisenberg-limited scaling using quantum control, the role of control was analyzed more broadly in \cite{DutkiewiczOBrienSchuster2024}, and \cite{BakshiLiuMoitraTang2024Structure} solved a structure-learning problem in this setting. Related questions of testing and learning structured Hamiltonians, Hamiltonian symmetry testing, and conservation-law discovery were studied in \cite{ArunachalamDuttEscudero2025,LaBordeWilde2022,ZhanElbenHuangTong2024}. We work in a different access model, recovering the Hamiltonian direction from a single time evolution at an unknown, large time, rather than from the controlled short-time queries or multiple chosen evolution times assumed in prior real-time learning work.

\section{Relating $\lbrack U(t),A \rbrack$ to $\lbrack H, A \rbrack$}\label{section-sinc}

The main result of this section is as follows.
\begin{theorem}[Uniform reduction from $\lbrack U(t),A \rbrack$ to $\lbrack H,A \rbrack$]
\label{relation-commutator-A-U}
Let $\mathcal{V}$ be a finite traceless Pauli family, and let
\begin{align}
\mathcal{O}_{\mathcal{V}}^{\mathrm{op}} := \operatorname{span}_{\mathbb{R}}\{P_u : u \in \mathcal{V}\}.
\end{align}
Let $H \in \mathcal{O}_{\mathcal{V}}^{\mathrm{op}}$ be traceless. Throughout, we adopt the normalization convention $\tfrac{1}{d}\tr(H^2)=n$; equivalently $\|H\|_{HS}=\sqrt{n}$ under the normalized Hilbert--Schmidt norm. Let $U(t) = e^{-iHt}$, and define
\begin{align}
\mathcal{A}_{\perp} := \{A \in \mathcal{O}_{\mathcal{V}}^{\mathrm{op}} : \tr(AH) = 0\}.
\end{align}
Let
\begin{align}
\mathcal{W}_H := \{[A,H] : A \in \mathcal{A}_{\perp}\}, \quad r := \dim_{\mathbb{R}}(\mathcal{W}_H) \leq |\mathcal{V}| - 1.
\end{align}
Further, let $0 < \delta \leq 1$. Assume $H\neq 0$, $T>0$, and $r\ge 1$.  If $t \sim \operatorname{Unif}([0,T])$, then with probability at least $1-\delta$ over $t$, one has, simultaneously for every $A \in \mathcal{A}_{\perp}$,
\begin{align}
\|[U(t),A]\|_F^2 \geq \frac{\delta^2\,\min \{1/n, T^2\} }{2048 \, r^2 |\mathcal{V}|}\, \|[A,H]\|_F^2.
\label{eq:uniform-UH-reduction}
\end{align}
Consequently, if
\begin{align}
\min_{\substack{A \in \mathcal{A}_{\perp}\\ \tfrac{1}{d}\tr(A^2)=1}} \frac{1}{d}\|[A,H]\|_F^2 \geq \underline{\pi}_H,
\end{align}
then, with the same probability,
\begin{align}
\min_{\substack{A \in \mathcal{A}_{\perp}\\ \tfrac{1}{d}\tr(A^2)=1}} \frac{1}{d}\|[U(t),A]\|_F^2 \geq \frac{\delta^2 \,\min \{1/n, T^2 \}}{2048 \, r^2 |\mathcal{V}|}\,\underline{\pi}_H.
\label{eq:uniform-piU-from-piH-main}
\end{align}
\end{theorem}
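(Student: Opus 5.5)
The plan is to start from the entrywise form of the two commutators in the eigenbasis of $H=\sum_p E_p|p\rangle\langle p|$. Writing $\Delta_{pq}=E_p-E_q$, we have $[A,H]_{pq}=-\Delta_{pq}A_{pq}$ and $[U(t),A]_{pq}=(e^{-iE_pt}-e^{-iE_qt})A_{pq}$. Hence
\begin{align}
|[U(t),A]_{pq}|^2 = 4\sin^2\!\bigl(\tfrac{\Delta_{pq}t}{2}\bigr)\,|A_{pq}|^2 = t^2\operatorname{sinc}^2\!\bigl(\tfrac{\Delta_{pq}t}{2}\bigr)\,|[A,H]_{pq}|^2 ,
\end{align}
which is the sinc identity quoted in the main text. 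I would use both forms of the weight. When $|\Delta_{pq}|t\lesssim 1$, the sinc form gives weight $\gtrsim t^2$. When $|\Delta_{pq}|t$ is large, the sine form gives $|[U,A]_{pq}|^2\ge 4\eta|A_{pq}|^2$ whenever $\sin^2(\Delta_{pq}t/2)\ge\eta$.

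To close the large-$\Delta$ case I would use the crude a priori bound $\|[A,H]\|_F^2\le 4|\mathcal V|\,n\,\|A\|_F^2$. This follows by expanding in Paulis and using $\|a\|_1\|h\|_1\le |\mathcal V|\,\|a\|_2\|h\|_2$ with $\|h\|_2=\sqrt n$. It is exactly where the factors $|\mathcal V|$ and $1/n$ in $\min\{1/n,T^2\}$ should originate.

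Next I would define, for a threshold $\eta$ to be tuned (of order $\delta/r$), the \emph{bad} set
\begin{align}
\mathcal B(t)=\{(p,q):\Delta_{pq}\neq 0,\ \sin^2(\Delta_{pq}t/2)<\eta\}\cup\{(p,q): t\text{ too small}\},
\end{align}
with the second clause handled separately: $t\ge \delta T/2$ holds with probability $1-\delta/2$. Let $D_{\mathcal B}$ be the coordinate projector on matrix entries in $\mathcal B(t)$, and let $\Pi_W$ be the Frobenius-orthogonal projector onto the real space $\mathcal W_H$ of dimension $r$.

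The key uniform step is the trace bound
\begin{align}
\sup_{0\ne X\in\mathcal W_H}\frac{\|D_{\mathcal B}X\|_F^2}{\|X\|_F^2}\le \operatorname{tr}(\Pi_W D_{\mathcal B}\Pi_W)=\sum_{(p,q)\in\mathcal B(t)} m_{pq},
\end{align}
where $m_{pq}=\sum_i|(X_i)_{pq}|^2$ for an orthonormal basis $\{X_i\}$ of $\mathcal W_H$, so that $\sum_{pq}m_{pq}=r$. For each fixed $\Delta\ne0$, the set of $t\in[0,T]$ with $\sin^2(\Delta t/2)<\eta$ has measure fraction $O(\sqrt\eta)$, up to an edge term when $|\Delta|T\lesssim1$. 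Taking expectations, $\mathbb E_t\sum_{\mathcal B}m_{pq}\lesssim r\sqrt\eta$. Markov's inequality then gives, with probability $1-\delta/2$, that this sum is at most $1/2$ once $\sqrt\eta\asymp\delta/r$. This is where $\delta^2/r^2$ appears. On that event, every $X=[A,H]\in\mathcal W_H$ keeps at least half its Frobenius mass on good coordinates, and simultaneously so for all $A\in\mathcal A_\perp$, which gives the required order of quantifiers.

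Finally, on good coordinates I would combine the two weight forms. Coordinates with $|\Delta_{pq}|\le \Lambda$ contribute at least $\min\{c\,t^2,\ 4\eta/\Lambda^2\}\,|[A,H]_{pq}|^2$. Coordinates with $|\Delta_{pq}|>\Lambda$ contribute at least $4\eta|A_{pq}|^2$. Choosing $\Lambda^2\asymp n|\mathcal V|$ lets the a priori bound convert $\|A\|_F^2$ back into $\|[A,H]\|_F^2/(4|\mathcal V|n)$. Collecting constants should produce the factor $2048$. The consequence \eqref{eq:uniform-piU-from-piH-main} is then immediate by dividing by $d$ and using the assumed static gap $\underline{\pi}_H$.

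The main obstacle is this last step. The a priori bound controls the full $\|A\|_F$, whereas we only retain $\|A_{pq}\|$ on the large-$\Delta$ good coordinates. I would therefore argue via a second split: either at least half of the good mass of $[A,H]$ lies at $|\Delta|\le\Lambda$, where the $t^2$ or $\eta/\Lambda^2$ bound applies directly, or it lies at $|\Delta|>\Lambda$. In the latter case $|A_{pq}|^2=|[A,H]_{pq}|^2/\Delta^2$ there is summed against $\sin^2\ge\eta$, and I would bound $\Delta^2\le\|H\|$-type quantities only through the Pauli expansion. If this step fails, I would fall back on redefining the bad set to also include large-$\Delta$ coordinates and absorbing their $\mathcal W_H$-mass into the trace bound via $\sum_{pq}m_{pq}\Delta_{pq}^2$-moment estimates.
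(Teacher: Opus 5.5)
\textbf{Verdict: there is a genuine gap, located in your definition of the bad set.} Your overall structure matches the paper's: the sinc identity, the bound $\sup_{0\neq X\in\mathcal W_H}\|D_{\mathcal B}X\|_F^2/\|X\|_F^2\le\operatorname{tr}(\Pi_W D_{\mathcal B}\Pi_W)$ via an orthonormal basis of $\mathcal W_H$, Markov, and half the Frobenius mass surviving on good coordinates. But the bad set you define does not satisfy the probability estimate you need.

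With the $\Delta$-independent threshold $\sin^2(\Delta_{pq}t/2)<\eta$, any coordinate with $|\Delta_{pq}|T<2\arcsin\sqrt\eta$ is bad for \emph{every} $t\in[0,T]$. The ``edge term'' you discard is of order $\min\{1,\sqrt\eta/(|\Delta_{pq}|T)\}$, which is not uniformly small. Concretely, suppose $\Delta_{\max}T<2\arcsin\sqrt\eta$; this regime is covered by the theorem, on the $T^2$ branch of $\min\{1/n,T^2\}$. Then every coordinate with $\Delta_{pq}\neq0$ lies in $\mathcal B(t)$ for all $t$. Hence $\operatorname{tr}(\Pi_W D_{\mathcal B}\Pi_W)=r$ identically, and Markov gives nothing. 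Your final paragraph tacitly treats small-$|\Delta|t$ coordinates as good (weight $ct^2$), which contradicts your own definition of $\mathcal B(t)$.

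By contrast, the ``main obstacle'' you identify is not one. For every pair, $|\Delta_{pq}|\le\Delta_{\max}\le 2\|H\|_{\mathrm{op}}\le 2\sum_u|h_u|\le 2\sqrt{n|\mathcal V|}$. So any coordinate with $\sin^2(\Delta_{pq}t/2)\ge\eta$ already has weight $4\sin^2(\Delta_{pq}t/2)/\Delta_{pq}^2\ge \eta/(n|\mathcal V|)$. With $\Lambda^2=4n|\mathcal V|$ your large-$\Delta$ class is empty. No a priori bound on $\|A\|_F$ is needed (your $\ell_1$ derivation of it would in any case give $|\mathcal V|^2$ rather than $|\mathcal V|$), nor the second split or the moment fallback.

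\textbf{The fix: build the small-gap regime into the definition of badness.} The paper declares $(p,q)$ bad iff $t^2\operatorname{sinc}^2(\Delta_{pq}t/2)<\tau$ with $\tau=\theta^2/(64\Delta_{\max}^2)$, i.e.\ $|\sin(\Delta_{pq}t/2)|<\theta|\Delta_{pq}|/(16\Delta_{\max})$. Because this threshold shrinks with $|\Delta_{pq}|$:
\begin{itemize}
\item the per-coordinate bad probability is at most $\theta|\Delta_{pq}|/(16\Delta_{\max})+\pi\theta/(4\Delta_{\max}T)$, with an edge term uniform in $\Delta_{pq}$;
\item every good coordinate has weight at least $\tau$ for free.
\end{itemize}
The edge term is then tamed by the \emph{lower} bound $\Delta_{\max}\ge 2\sqrt n$ (Popoviciu, from tracelessness and $\tfrac{1}{d}\operatorname{tr}(H^2)=n$), together with $\theta=\tfrac{\delta}{2r}\min\{1,\sqrt n\,T\}$. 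The upper bound $\Delta_{\max}^2\le 4n|\mathcal V|$ inside $\tau$ is what produces the $|\mathcal V|$ and $1/n$ factors.

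Alternatively, you can keep your uniform threshold (with $\eta<\sin^2\tfrac{1}{2}$) and intersect the bad set with $\{|\Delta_{pq}|t>1\}$. Then:
\begin{itemize}
\item the $m=0$ arc is excluded, so the per-coordinate bad probability is $O(\sqrt\eta)$ uniformly in $\Delta_{pq}$;
\item the remaining small-gap coordinates have weight at least $\operatorname{sinc}^2(\tfrac{1}{2})\,t^2\ge\tfrac{1}{5}\delta^2T^2$ on $\{t\ge\delta T/2\}$.
\end{itemize}
With $\sqrt\eta\asymp\delta/r$ you obtain $\|[U(t),A]\|_F^2\ge\tfrac{1}{2}\min\{\tfrac{1}{5}\delta^2T^2,\ \eta/(n|\mathcal V|)\}\,\|[A,H]\|_F^2$, which implies the stated bound.
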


\noindent We will prove the above through a series of lemmas. Write $H$ in its energy basis as
\begin{align}
H = \sum_{p=1}^{d} E_p \, |p\rangle \langle p|, \quad \Delta_{pq} = E_p - E_q.
\end{align}
In this basis,
\begin{align}
[A,H]_{pq} = A_{pq} (E_q - E_p) = - \Delta_{pq} A_{pq}.
\end{align}
Then we have the following, useful lemma.

\begin{lemma}[Sinc decomposition of $\|\lbrack U(t),A \rbrack\|_F^2$]
\label{sinc-bound-equality}
Define $\operatorname{sinc}(x) = \sin(x)/x$ for $x\neq0$, with $\operatorname{sinc}(0)=1$. Then for every $A \in \mathcal{O}_{\mathcal{V}}^{\mathrm{op}}$,
\begin{align}
\bigl\| [U(t),A] \bigr\|_{F}^{2} = t^{2} \sum_{p,q=1}^{d} \bigl| [A,H]_{pq} \bigr|^{2} \operatorname{sinc}^{2}\!\left( \frac{\Delta_{pq} t}{2} \right).
\end{align}
If $[A,H] \neq 0$, then
\begin{align}
\bigl\| [U(t),A] \bigr\|_{F}^{2} = t^{2} \bigl\| [A,H] \bigr\|_{F}^{2} \, \mathbb{E}_{w} \!\left[ \operatorname{sinc}^{2}\!\left( \frac{\Delta_{pq} t}{2} \right) \right],
\end{align}
where
\begin{align}
w_{pq} = \frac{\bigl| [A,H]_{pq} \bigr|^{2}}{\bigl\| [A,H] \bigr\|_{F}^{2}}.
\end{align}
\end{lemma}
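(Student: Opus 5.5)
We will compute $[U(t),A]$ entrywise in the energy eigenbasis of $H$. Since $U(t)=e^{-iHt}$ is diagonal in this basis with entries $e^{-iE_pt}$, we have
\begin{align}
[U(t),A]_{pq} = \bigl(e^{-iE_pt}-e^{-iE_qt}\bigr)A_{pq}.
\end{align}
The first step is to evaluate the modulus of this prefactor. Factoring out $e^{-i(E_p+E_q)t/2}$ gives
\begin{align}
\bigl|e^{-iE_pt}-e^{-iE_qt}\bigr| = 2\bigl|\sin(\Delta_{pq}t/2)\bigr|.
\end{align}
Next we write $\sin^2(\Delta_{pq}t/2) = (\Delta_{pq}t/2)^2\operatorname{sinc}^2(\Delta_{pq}t/2)$, which also covers $\Delta_{pq}=0$ with the convention $\operatorname{sinc}(0)=1$, since both sides then vanish. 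This yields
\begin{align}
\bigl|[U(t),A]_{pq}\bigr|^2 = t^2\,\Delta_{pq}^2 |A_{pq}|^2\operatorname{sinc}^2(\Delta_{pq}t/2).
\end{align}

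The second step substitutes the identity $[A,H]_{pq}=-\Delta_{pq}A_{pq}$ recorded just before the lemma, so that $\Delta_{pq}^2|A_{pq}|^2=|[A,H]_{pq}|^2$. Summing over all $p,q$ and using that the Frobenius norm is basis-invariant, so it may be computed in the eigenbasis, gives the first displayed identity.

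For the second identity we assume $[A,H]\neq 0$. Then the weights $w_{pq}$ are nonnegative and sum to $1$ by the definition of $\|[A,H]\|_F^2$, so they form a probability distribution on pairs $(p,q)$. Factoring $\|[A,H]\|_F^2$ out of the sum rewrites it as $\|[A,H]\|_F^2\,\mathbb{E}_w[\operatorname{sinc}^2(\Delta_{pq}t/2)]$.

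No step here is a genuine obstacle. The only care needed is the degenerate case $\Delta_{pq}=0$, which the sinc convention handles. Note also that the argument uses nothing about $A$ lying in $\mathcal{O}_{\mathcal{V}}^{\mathrm{op}}$, so the identity holds for every matrix $A$.
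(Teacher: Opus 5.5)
Your proposal is correct and follows essentially the same route as the paper: compute $[U(t),A]$ entrywise in the energy eigenbasis, factor out the half-angle phase to obtain $t\,\Delta_{pq}\operatorname{sinc}(\Delta_{pq}t/2)$, substitute $[A,H]_{pq}=-\Delta_{pq}A_{pq}$, and sum over $p,q$. The only cosmetic difference is that you take moduli immediately instead of carrying the phase $e^{-i(E_p+E_q)t/2}$ through, which changes nothing.
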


\begin{proof}
Since $H = \sum_{p=1}^{d} E_p \, |p\rangle \langle p|$, for any scalar function $f$ one has
\begin{align*}
[f(H),A]_{pq} = \bigl( f(E_p) - f(E_q) \bigr) A_{pq}.
\end{align*}
Applying this with $f(x) = e^{- i t x}$, we obtain
\begin{align*}
[U(t),A]_{pq} &= \bigl( e^{- i E_p t} - e^{- i E_q t} \bigr) A_{pq} = - i t \, \Delta_{pq} \, e^{- \frac{i}{2} (E_p + E_q) t} \operatorname{sinc}\!\left( \frac{\Delta_{pq} t}{2} \right) A_{pq}.
\end{align*}
Using $[A,H]_{pq} = - \Delta_{pq} A_{pq}$, this becomes
\begin{align*}
[U(t),A]_{pq} = i t \, [A,H]_{pq} \, e^{- \frac{i}{2} (E_p + E_q) t} \operatorname{sinc}\!\left( \frac{\Delta_{pq} t}{2} \right).
\end{align*}
Taking absolute values and summing over $p,q$ gives
\begin{align*}
\bigl\| [U(t),A] \bigr\|_{F}^{2} = t^{2} \sum_{p,q=1}^{d} \bigl| [A,H]_{pq} \bigr|^{2} \operatorname{sinc}^{2}\!\left( \frac{\Delta_{pq} t}{2} \right).
\end{align*}
If $[A,H] \neq 0$, factoring out $\bigl\| [A,H] \bigr\|_{F}^{2}$ yields the second identity.
\end{proof}

This lemma isolates the oscillatory part of the problem. For a fixed observable $A$, it expresses $\|[U(t),A]\|_F^2$ as a sum over energy-gap coordinates, with each coordinate weighted by a sinc factor. To sample $t$ first and then optimize over $A$, however, we need a uniform version over the whole finite-dimensional space $\mathcal{W}_H=\{[A,H]:A \in \mathcal{A}_{\perp}\}$. The next lemma proves this uniform lower bound by showing that, with high probability over $t$, the bad energy-gap coordinates cannot capture more than half the Frobenius mass of any vector in $\mathcal{W}_H$.

\begin{lemma}[Uniform lower bound for the sinc factor]
\label{lem:uniform-sinc-factor}
In the setup of Theorem~\ref{relation-commutator-A-U}, write
\begin{align}
H = \sum_p E_p |p\rangle\langle p|, \quad \Delta_{pq} := E_p - E_q, \quad \Delta_{\max} := \max_{p,q}|\Delta_{pq}|.
\end{align}
Fix $0<\theta\leq1$, and sample $t\sim \operatorname{Unif}([0,T])$. Then, with probability at least
\begin{align}
1 - \frac{r\theta}{8} - \frac{r\pi\theta}{2\Delta_{\max}T},
\end{align}
one has, simultaneously for every $A \in \mathcal{A}_{\perp}$,
\begin{align}
\|[U(t),A]\|_F^2 \geq \frac{\theta^2}{128\,\Delta_{\max}^2}\,\|[A,H]\|_F^2.
\label{eq:uniform-sinc-factor-bound}
\end{align}
\end{lemma}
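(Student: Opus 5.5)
The plan is to combine the exact identity of Lemma~\ref{sinc-bound-equality} with a projector-trace argument over the real subspace $\mathcal{W}_H$. First I rewrite the sinc weight without the factor $t^2$:
\[
t^2\operatorname{sinc}^2(\Delta_{pq}t/2)=f_{pq}(t):=\frac{4\sin^2(\Delta_{pq}t/2)}{\Delta_{pq}^2}\qquad(\Delta_{pq}\neq0).
\]
Coordinates with $\Delta_{pq}=0$ never matter, since $[A,H]_{pq}=-\Delta_{pq}A_{pq}=0$ there for every $A$. For each $t$ I declare a pair $(p,q)$ with $\Delta_{pq}\neq0$ \emph{bad} if $f_{pq}(t)<\theta^2/(64\Delta_{\max}^2)$, and \emph{good} otherwise. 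If the bad coordinates carry at most half of $\|X\|_F^2$ for every $X=[A,H]\in\mathcal{W}_H$, then Lemma~\ref{sinc-bound-equality} gives
\[
\|[U(t),A]\|_F^2\ \ge\ \frac{\theta^2}{64\Delta_{\max}^2}\cdot\frac12\,\|[A,H]\|_F^2 ,
\]
which is exactly \eqref{eq:uniform-sinc-factor-bound}.

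\textbf{Uniformity over $A$.} I view $d\times d$ complex matrices as a real Euclidean space with inner product $\operatorname{Re}\tr(X^\dagger Y)$. Let $\Pi$ be the orthogonal projector onto the $r$-dimensional real subspace $\mathcal{W}_H$. Let $P_{B(t)}$ be the coordinate projector onto the bad entries, which is an orthogonal projector in this real structure. For any $X\in\mathcal{W}_H$,
\[
\|P_{B(t)}X\|_F^2\le \|P_{B(t)}\Pi\|_{\mathrm{op}}^2\|X\|_F^2\le \tr\bigl(\Pi P_{B(t)}\Pi\bigr)\|X\|_F^2 .
\]
The trace is $\sum_{(p,q)}\mathbf 1[(p,q)\in B(t)]\,\pi_{pq}$, where $\pi_{pq}\ge0$ is the diagonal weight of $\Pi$ on the two real coordinates of entry $(p,q)$, and $\sum_{pq}\pi_{pq}=\tr\Pi=r$. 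Hence
\[
\mathbb{E}_t\,\tr(\Pi P_{B(t)}\Pi)\le r\max_{pq}\Pr_t[(p,q)\text{ bad}],
\]
and Markov's inequality gives $\Pr_t[\tr(\Pi P_B\Pi)\ge 1/2]\le 2r\max_{pq}\Pr_t[(p,q)\text{ bad}]$. Outside this event, the half-mass condition holds simultaneously for all of $\mathcal{W}_H$, which is the required uniformity.

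\textbf{Single-pair estimate (main obstacle).} Fix $\Delta=\Delta_{pq}\neq0$. Badness means $|\sin(\Delta t/2)|<\sigma$ with
\[
\sigma:=\frac{|\Delta|\,\theta}{16\,\Delta_{\max}}\le\frac{\theta}{16}.
\]
The catch is that small $|\Delta|$ must not produce a $1/|\Delta|$ factor; choosing $\sigma$ proportional to $|\Delta|$ is what cancels it. The set of such $t\in[0,T]$ is a union of windows of length $4\arcsin(\sigma)/|\Delta|\le 2\pi\sigma/|\Delta|$, centered at multiples of $2\pi/|\Delta|$, and at most $|\Delta|T/(2\pi)+1$ of them meet $[0,T]$. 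Dividing by $T$,
\[
\Pr_t[\text{bad}]\le \sigma+\frac{2\pi\sigma}{|\Delta|T}\le\frac{\theta}{16}+\frac{\pi\theta}{8\Delta_{\max}T}.
\]
Multiplying by $2r$ gives failure probability at most $r\theta/8+r\pi\theta/(4\Delta_{\max}T)$, which is below the claimed bound. The only delicate bookkeeping is this window count, including the partial window at $t=0$ and at $t=T$.
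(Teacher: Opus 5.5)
Your proof is correct and follows the paper's argument essentially step for step: the same threshold $\theta^2/(64\Delta_{\max}^2)$ on $t^2\operatorname{sinc}^2(\Delta_{pq}t/2)$, the same trace-plus-Markov bound over $\mathcal W_H$ (your $\tr(\Pi P_{B(t)}\Pi)$ is exactly the paper's $Z(t)=\tr_{\mathcal W_H}K(t)$), and the same single-pair window count, which the paper carries out in the variable $y=|\Delta_{pq}|t/2$. One bookkeeping remark: up to $\lfloor |\Delta|T/(2\pi)\rfloor+2$ windows can actually meet $[0,T]$, but in that case the first window (centered at $t=0$) and the last (centered beyond $T$) each meet $[0,T]$ in at most half their length, so your measure bound and its factor-$2$ improvement on the second term still hold; the paper's cruder count of $L/\pi+2$ intervals already gives exactly the stated constants.
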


\begin{proof}
By Lemma~\ref{sinc-bound-equality}, for every $A$,
\begin{align}
\|[U(t),A]\|_F^2 = t^2\sum_{p,q}|[A,H]_{pq}|^2 \operatorname{sinc}^2\left(\frac{\Delta_{pq}t}{2}\right).
\label{eq:uniform-sinc-start}
\end{align}
Set
\begin{align}
\tau := \frac{\theta^2}{64\,\Delta_{\max}^2}.
\end{align}
For each pair $(p,q)$, define
\begin{align}
b_{pq}(t) := \mathbf{1}\left\{t^2 \operatorname{sinc}^2\left(\frac{\Delta_{pq}t}{2}\right) < \tau\right\}.
\end{align}
If $\Delta_{pq}=0$, then $B_{pq}=0$ for every $B \in \mathcal{W}_H$, since every $B \in \mathcal{W}_H$ is of the form $B=[A,H]$. Hence these coordinates do not contribute.

Now fix $(p,q)$ with $|\Delta_{pq}|>0$. If $b_{pq}(t)=1$, then
\begin{align}
\frac{4\sin^2(|\Delta_{pq}| t/2)}{|\Delta_{pq}|^2} = t^2 \operatorname{sinc}^2\left(\frac{|\Delta_{pq}| t}{2}\right) < \tau.
\end{align}
Therefore
\begin{align}
\left|\sin\!\left(\frac{|\Delta_{pq}| t}{2}\right)\right| < \frac{|\Delta_{pq}|\sqrt{\tau}}{2} = \frac{\theta|\Delta_{pq}|}{16\Delta_{\max}}.
\end{align}
Since $0<\theta\leq1$ and $|\Delta_{pq}|\leq\Delta_{\max}$, we have 
\begin{equation}
    \frac{\theta|\Delta_{pq}|}{16\Delta_{\max}}\le \frac{1}{16}.
\end{equation}Let $\alpha_{pq}:=\arcsin(\frac{\theta|\Delta_{pq}|}{16\Delta_{\max}})$. The bad set for $y=\Delta t/2$ is contained in
\begin{align}
\bigcup_{m\in\mathbb{Z}}(m\pi-\alpha_{pq},m\pi+\alpha_{pq}).
\end{align}
The variable $y = |\Delta_{pq}|t/2$ is uniform on $[0,L]$, where $L = |\Delta_{pq}|T/2$. At most $L/\pi+2$ of these intervals intersect $[0,L]$, and each has length $2\alpha_{pq}$. Hence
\begin{align}
\mathbb{P}_{t}\big[b_{pq}(t)=1\big] \leq \frac{(L/\pi+2)2\alpha_{pq}}{L} = \frac{2\alpha_{pq}}{\pi} + \frac{4\alpha_{pq}}{L}.
\end{align}
Using $\arcsin(u)\leq \frac{\pi}{2}u$ for $u\in[0,1]$, we obtain
\begin{align}
\mathbb{P}_{t}\big[b_{pq}(t)=1\big] \leq \frac{\theta|\Delta_{pq}|}{16\Delta_{\max}} + \frac{\pi\theta}{4\Delta_{\max}T}.
\label{eq:bad-pair-probability}
\end{align}

Let $\Pi_{\mathrm{bad}}(t)$ be the diagonal projector, in the energy-basis matrix-entry coordinates, that keeps precisely the entries $(p,q)$ with $b_{pq}(t) = 1$.  Let $B_1,\ldots,B_r$ be an orthonormal basis of the real vector space $\mathcal{W}_H$ with respect to the unnormalized Frobenius inner product $\langle X,Y\rangle_F := \operatorname{Re}\tr(X^\dagger Y)$.  Define
\begin{align}
Z(t) := \sum_{j=1}^r \|\Pi_{\mathrm{bad}}(t)B_j\|_F^2.
\end{align}
For each $j$, using \eqref{eq:bad-pair-probability} and $|\Delta_{pq}|\leq\Delta_{\max}$, we have
\begin{align}
\mathbb{E}_t\|\Pi_{\mathrm{bad}}(t)B_j\|_F^2 &= \sum_{p,q}|(B_j)_{pq}|^2\,\mathbb{P}_t[b_{pq}(t)=1] \\
&\leq \frac{\theta}{16} + \frac{\pi\theta}{4\Delta_{\max}T}.
\end{align}
Therefore
\begin{align}
\mathbb{E}_t Z(t) \leq r\left(\frac{\theta}{16} + \frac{\pi\theta}{4\Delta_{\max}T}\right).
\end{align}
By Markov's inequality,
\begin{align}
\mathbb{P}_t\left[Z(t)\geq\frac{1}{2}\right] \leq \frac{r\theta}{8} + \frac{r\pi\theta}{2\Delta_{\max}T}.
\label{eq:bad-trace-markov}
\end{align}
On the complementary event $Z(t)<1/2$, every $B\in\mathcal  W_H$ satisfies
\begin{align}
\|\Pi_{\rm bad}(t)B\|_F^2\le \frac12\|B\|_F^2 .
\end{align}
Indeed, let $P_{\mathcal W_H}$ be the Frobenius-orthogonal projector onto $\mathcal W_H$, and set
\begin{align}
K(t):=P_{\mathcal W_H}\Pi_{\rm bad}(t)P_{\mathcal W_H}:\mathcal W_H \longrightarrow \mathcal W_H .
\end{align}
Then $K(t)\succeq 0$, and, since $B_1,\ldots,B_r$ is Frobenius-orthonormal,
\begin{align}
\tr_{\mathcal W_H}K(t) =
\sum_{j=1}^r \langle B_j,\Pi_{\rm bad}(t) B_j \rangle_F = \sum_{j=1}^r \|\Pi_{\rm bad}(t)B_j\|_F^2 = Z(t).
\end{align}
Therefore, for any $B\in\mathcal  W_H$,
\begin{align}
\|\Pi_{\rm bad}(t)B\|_F^2 = \langle B,K(t)B\rangle_F \le \|K(t)\|_{\rm op}\|B\|_F^2 \le \tr_{\mathcal W_H}K(t)\|B\|_F^2 = Z(t)\|B\|_F^2.
\end{align}
Since $Z(t)<1/2$, the claim follows. Now fix $A\in\mathcal{A}_{\perp}$, and set $B=[A,H]$. If $B=0$, the claim is trivial. Otherwise, on the event $Z(t)<1/2$,
\begin{align}
\|(I-\Pi_{\mathrm{bad}}(t))B\|_F^2 \geq \frac{1}{2}\|B\|_F^2.
\end{align}
On every good coordinate, by definition,
\begin{align}
t^2 \operatorname{sinc}^2\left(\frac{\Delta_{pq}t}{2}\right) \geq \tau.
\end{align}
Using \eqref{eq:uniform-sinc-start}, we get
\begin{align}
\|[U(t),A]\|_F^2 &\geq \tau\,\|(I-\Pi_{\mathrm{bad}}(t))[A,H]\|_F^2 \geq \frac{\tau}{2}\|[A,H]\|_F^2 = \frac{\theta^2}{128\,\Delta_{\max}^2}\|[A,H]\|_F^2.
\end{align}
This proves \eqref{eq:uniform-sinc-factor-bound}.
\end{proof}

The above reduces the proof of the theorem to controlling the largest possible energy gap $\Delta_{\max}$. At this point the lower bound is already of the desired form, up to the appearance of $\Delta_{\max}^{-2}$. The final ingredient is therefore a simple structural estimate relating $\Delta_{\max}$ to the normalization and Pauli support of $H$.

\begin{lemma}[Bounds on $\Delta_{\max}$]
\label{Delta-max-bounds-lemma}
Let $H = \sum_{u \in \mathcal{V}} h_u P_u$ be a traceless Hermitian $n$-qubit Hamiltonian. Then
\begin{align}
2\sqrt{n} \leq \Delta_{\max} \leq 2\sqrt{n|\mathcal{V}|}.
\end{align}
\end{lemma}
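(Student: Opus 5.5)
The plan is to treat the two inequalities separately. Both use only the normalization $\tfrac1d\tr(H^2)=n$, tracelessness, and the Pauli expansion $H=\sum_{u\in\mathcal V}h_uP_u$. Throughout I use that $\Delta_{\max}=E_{\max}-E_{\min}$, where $E_{\max},E_{\min}$ are the extreme eigenvalues of $H$. I also use that orthonormality of Pauli strings under the normalized Hilbert--Schmidt inner product gives $\|h\|_2^2=\tfrac1d\tr(H^2)=n$.

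For the upper bound, I will bound the operator norm by the triangle inequality. Since $\|P_u\|_{\mathrm{op}}=1$,
\begin{align}
\|H\|_{\mathrm{op}}\le\sum_{u\in\mathcal V}|h_u|\le\sqrt{|\mathcal V|}\,\|h\|_2=\sqrt{n|\mathcal V|},
\end{align}
where the middle step is Cauchy--Schwarz. Then $\Delta_{\max}=E_{\max}-E_{\min}\le 2\|H\|_{\mathrm{op}}\le 2\sqrt{n|\mathcal V|}$.

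For the lower bound, I will use a Bhatia--Davis type argument on the spectrum. Set $a:=E_{\max}$ and $b:=-E_{\min}$. Tracelessness forces $a\ge0\ge -b$, since the eigenvalues average to zero. Every eigenvalue $E_p$ lies in $[-b,a]$, so $(E_p-a)(E_p+b)\le 0$, that is,
\begin{align}
E_p^2\le (a-b)E_p+ab .
\end{align}
Averaging over $p=1,\dots,d$ and using $\tfrac1d\sum_pE_p=\tfrac1d\tr H=0$ and $\tfrac1d\sum_pE_p^2=\tfrac1d\tr(H^2)=n$ gives $n\le ab$. By AM--GM, $ab\le\bigl(\tfrac{a+b}{2}\bigr)^2$, hence $\Delta_{\max}=a+b\ge2\sqrt n$.

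Neither step is a real obstacle. The only point needing care is the lower bound: one must use tracelessness so that the linear term vanishes after averaging. A naive bound such as $\max_p|E_p|\ge\sqrt n$ would give only $\Delta_{\max}\ge\sqrt n$, losing the factor of $2$.
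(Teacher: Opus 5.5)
Your proof is correct and follows the paper's argument. The upper bound is the same triangle-inequality and Cauchy--Schwarz estimate. For the lower bound the paper simply cites Popoviciu's inequality $\operatorname{Var}(E)\le \Delta_{\max}^2/4$, which your Bhatia--Davis step $n\le ab$ followed by AM--GM proves inline.
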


\begin{proof}
Let $E_1,\dots,E_d$ be the eigenvalues of $H$. Since $H$ is traceless,
\begin{align}
\frac{1}{d} \sum_{i=1}^{d} E_i = 0.
\end{align}
Recalling the normalization convention $\tfrac{1}{d}\tr(H^2)=n$,
\begin{align}
\frac{1}{d} \tr(H^{2}) = \frac{1}{d} \sum_{i=1}^{d} E_i^{2} = \sum_{u \in \mathcal{V}} h_u^{2} = n.
\end{align}
Thus the eigenvalue distribution has mean $0$ and variance $n$. By Popoviciu's inequality,
\begin{align}
n = \operatorname{Var}(E) \leq \frac{(\max_i E_i - \min_i E_i)^{2}}{4} = \frac{\Delta_{\max}^{2}}{4},
\end{align}
so $\Delta_{\max} \geq 2\sqrt{n} $.
For the upper bound,
\begin{align}
\Delta_{\max} \leq 2 \, \|H\|_{\mathrm{op}} \leq 2 \sum_{u \in \mathcal{V}} |h_u| \, \|P_u\|_{\mathrm{op}} = 2 \sum_{u \in \mathcal{V}} |h_u| \leq 2 \sqrt{|\mathcal{V}|} \left( \sum_{u \in \mathcal{V}} h_u^{2} \right)^{1/2} = 2\sqrt{n|\mathcal{V}|}.
\end{align}
\end{proof}

\noindent We now combine the ingredients above. Lemma~\ref{sinc-bound-equality} gives the exact sinc decomposition, Lemma~\ref{lem:uniform-sinc-factor} gives a uniform lower bound over the whole commutator image $\mathcal{W}_H$, and Lemma~\ref{Delta-max-bounds-lemma} converts the remaining dependence on $\Delta_{\max}$ into a bound in terms of $|\mathcal{V}|$.

\begin{proof}[Proof of Theorem~\ref{relation-commutator-A-U}]
By Lemma~\ref{lem:uniform-sinc-factor}, the failure probability is at most
\begin{align}
\frac{r\theta}{8} + \frac{r\pi\theta}{2\Delta_{\max}T}.
\end{align}
Choose 
\begin{align}
\theta := \frac{\delta}{2r}\min\{1,\sqrt{n}\,T\}.
\end{align}
By Lemma~\ref{Delta-max-bounds-lemma}, $\Delta_{\max}\geq2\sqrt{n}$. The failure probability is at most
\begin{align}
\frac{\delta}{16}\min\{1,\sqrt{n}\,T\} + \frac{\delta \pi}{8\sqrt{n}\,T}\min \{1,\sqrt{n}\,T\}.
\end{align}
If $\sqrt{n}\,T\le 1$ this is at most
\begin{align} \frac{\delta}{16}\sqrt{n}\,T+\frac{\delta \pi}{8\sqrt{n}\,T}\sqrt{n}\,T\le \delta\Big(\frac{1}{16}+\frac{\pi}{8}\Big)< \delta.\end{align}
And if $\sqrt{n}\,T\ge 1$ the failure probability is at most
\begin{align} \frac{\delta}{16}+\frac{\pi \delta}{8\sqrt{n}\,T}< \delta.\end{align}
Thus, with probability at least $1-\delta$, Lemma~\ref{lem:uniform-sinc-factor} gives
\begin{align}
\|[U(t),A]\|_F^2 \geq \frac{\theta^2}{128\,\Delta_{\max}^2}\,\|[A,H]\|_F^2
\end{align}
simultaneously for every $A\in\mathcal{A}_{\perp}$. By Lemma~\ref{Delta-max-bounds-lemma}, $\Delta_{\max}\leq2\sqrt{n|\mathcal{V}|}$, so
\begin{align}
\|[U(t),A]\|_F^2 \geq \frac{\theta^2}{512|\mathcal{V}|n}\,\|[A,H]\|_F^2
\end{align}
simultaneously for every $A\in\mathcal{A}_{\perp}$. Plugging in the value of $\theta$ proves \eqref{eq:uniform-UH-reduction}. Dividing by $d$, and then applying the assumed static lower bound, proves \eqref{eq:uniform-piU-from-piH-main}.
\end{proof}

\section{Commutator gaps}
\label{sec:commutator-gaps}

In this section we prove lower bounds on $\frac{1}{d}\|[A,H]\|_F^2$ for local observables $A$ orthogonal to the Hamiltonian. The results are stated for real spans of Hermitian Pauli strings. Thus, whenever $\mathcal{V}$ is a Pauli family, we assume that $\mathcal{V}$ consists of distinct Pauli strings modulo phase, with Hermitian representatives chosen so that $\frac{1}{d}\tr(P_u P_v) =  \delta_{u,v}$. We will use the notation $\mathcal{O}_{\mathcal{V}}^{\mathrm{op}} := \operatorname{span}_{\mathbb{R}}\{P_u:u\in\mathcal{V}\}$ from Theorem~\ref{relation-commutator-A-U}.

There are two regimes that we consider. In the \emph{Pauli-injective regime}, the generic Pauli family $\mathcal V$ is specialized to a support $\mathcal V_{\mathrm{inj}}$ whose anticommutation graph is connected and whose anticommuting pairs produce distinct Pauli strings, up to phase. In this case, the commutator decomposes into orthogonal contributions indexed by the edges of the anticommutation graph, and the commutator gap follows from a graph Poincar\'{e} inequality. In the \emph{geometrically local regime}, the generic family is specialized to a finite-range local support $\mathcal V_{k,R}$. Here different anticommuting pairs may produce the same Pauli string, so the proof instead works with a collision-aware commutator matrix, decomposes its Gram form into local positive semidefinite pieces, and patches the resulting local gaps.

\subsection{Pauli-injective commutator gaps}
\label{subsec:pauli-injective-commutator-gaps}

Let $\mathcal{V}_{\mathrm{inj}}$ be a finite Pauli family. Its anticommutation graph is $\mathcal{G} = (\mathcal{V}_{\mathrm{inj}},\mathsf{E})$ where we take $\{u,v\}\in\mathsf{E}$ if and only if $P_u P_v = - P_v P_u$. The injectivity condition needed below is only an injectivity condition on anticommuting edges, not on the full product map $(u,v)\mapsto P_uP_v$.

\begin{definition}[Edge-product injectivity]
\label{def:edge-product-injectivity}
Choose one orientation $e=(u_e,v_e)$ for each edge $e\in\mathsf{E}$, and define
\begin{align}
Q_e := P_{u_e} P_{v_e}.
\end{align}
We say that $\mathcal{V}_{\mathrm{inj}}$ is edge-product injective if distinct anticommuting edges produce distinct Pauli products, up to phase. Equivalently, after choosing one orientation $e = (u_e,v_e)$ for each anticommuting edge and setting $Q_e := P_{u_e}P_{v_e}$, we require
\begin{align}
\frac{1}{d}\tr(Q_e^\dagger Q_{e'}) = \delta_{e,e'}
\end{align}
for all edges $e,e'$. Equivalently, the map from anticommuting unordered pairs $\{u,v\}$ to the Pauli string $P_u P_v$ modulo phase is injective.
\end{definition}

\begin{lemma}[Exact commutator quadratic form]
\label{lem:exact-commutator-quadratic-form}
Let $\mathcal{V}_{\mathrm{inj}}$ be edge-product injective. Let
\begin{align}
H = \sum_{u\in\mathcal{V}_{\mathrm{inj}}} h_u P_u, \quad A = \sum_{u\in\mathcal{V}_{\mathrm{inj}}}a_uP_u,
\end{align}
with $a_u, h_u \in \mathbb{R}$. Define the positive semidefinite quadratic form $\Gamma(h)$ by
\begin{align}
a^\top\Gamma(h)a := \sum_{e\in\mathsf{E}} (a_{u_e} h_{v_e} - a_{v_e} h_{u_e})^2.
\end{align}
Then
\begin{align}
\frac{1}{d}\|[A,H]\|_F^2 = 4 a^\top \Gamma(h) a.
\label{eq:pauli-inj-exact-form}
\end{align}
Moreover, if $\mathcal{G}$ is connected and $h_u\neq0$ for every $u\in\mathcal{V}_{\mathrm{inj}}$, then
\begin{align}
\ker\Gamma(h) = \operatorname{span}\{h\}.
\end{align}
\end{lemma}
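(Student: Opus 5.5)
I will first expand the commutator over pairs of Pauli strings and then use edge-product injectivity to make the resulting sum orthogonal. By bilinearity,
\begin{align}
[A,H] = \sum_{u,v\in\mathcal{V}_{\mathrm{inj}}} a_u h_v [P_u,P_v].
\end{align}
Commuting pairs give zero, and an anticommuting pair gives $[P_u,P_v]=2P_uP_v$. So only edges of $\mathcal{G}$ survive. For an oriented edge $e=(u_e,v_e)$, the two ordered pairs $(u_e,v_e)$ and $(v_e,u_e)$ contribute
\begin{align}
2a_{u_e}h_{v_e}P_{u_e}P_{v_e} + 2a_{v_e}h_{u_e}P_{v_e}P_{u_e} = 2(a_{u_e}h_{v_e}-a_{v_e}h_{u_e})Q_e .
\end{align}
Hence $[A,H] = 2\sum_{e\in\mathsf{E}} c_e Q_e$ with real coefficients $c_e := a_{u_e}h_{v_e}-a_{v_e}h_{u_e}$.

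Edge-product injectivity gives $\tfrac1d\tr(Q_e^\dagger Q_{e'})=\delta_{e,e'}$. Expanding $\tfrac1d\tr([A,H]^\dagger[A,H])$ then yields $4\sum_e c_e^2 = 4a^\top\Gamma(h)a$, which is \eqref{eq:pauli-inj-exact-form}. The only points needing care are phase bookkeeping: $Q_e$ may be anti-Hermitian (it equals $\pm i$ times a Hermitian Pauli), but this is harmless because the computation uses only $Q_e^\dagger Q_{e'}$ and the coefficients $c_e$ are real. The same expansion also shows that $\Gamma(h)$ is positive semidefinite, since it is a sum of squares.

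For the kernel, the inclusion $h\in\ker\Gamma(h)$ is immediate, since every term $h_{u_e}h_{v_e}-h_{v_e}h_{u_e}$ vanishes. Conversely, suppose $a^\top\Gamma(h)a=0$. Then every edge satisfies $a_{u}h_{v}=a_{v}h_{u}$. Because all $h_u\neq0$, I set $c_u:=a_u/h_u$ and rewrite the edge condition as $c_u=c_v$ for each $\{u,v\}\in\mathsf{E}$.

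Connectivity of $\mathcal{G}$ then propagates this equality along paths, so $c_u\equiv\lambda$ is constant and $a=\lambda h$. Since $\Gamma(h)$ is positive semidefinite, $\ker\Gamma(h)=\{a: a^\top\Gamma(h)a=0\}$, which gives $\ker\Gamma(h)=\operatorname{span}\{h\}$. I expect no real obstacle here; the main thing to get right is the orthogonality step, making sure each unordered edge is counted exactly once after combining the two orderings.
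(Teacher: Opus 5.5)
Your proposal is correct and follows the paper's proof essentially step for step: you pair the two orderings of each anticommuting edge into $2(a_{u_e}h_{v_e}-a_{v_e}h_{u_e})Q_e$, use edge-product injectivity to make the $Q_e$ orthonormal, and propagate the constant ratio $a_u/h_u$ along paths in the connected graph $\mathcal{G}$. Your explicit remark that $\ker\Gamma(h)=\{a : a^\top\Gamma(h)a=0\}$ for a positive semidefinite form supplies a step the paper leaves implicit.
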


\begin{proof}
Only anticommuting pairs contribute to the commutator. For an oriented edge $e=(u_e,v_e)$,
\begin{align}
a_{u_e} h_{v_e} [P_{u_e},P_{v_e}] + a_{v_e} h_{u_e}[P_{v_e},P_{u_e}] &= 2 a_{u_e} h_{v_e} P_{u_e} P_{v_e} + 2 a_{v_e} h_{u_e} P_{v_e} P_{u_e} \\
&= 2(a_{u_e} h_{v_e} - a_{v_e} h_{u_e}) Q_e.
\end{align}
Therefore
\begin{align}
[A,H] = 2\sum_{e \in \mathsf{E}} (a_{u_e} h_{v_e} - a_{v_e} h_{u_e}) Q_e.
\end{align}
By edge-product injectivity, the $Q_e$'s are orthonormal for the normalized Hilbert--Schmidt inner product. This proves \eqref{eq:pauli-inj-exact-form}.

If $a^\top\Gamma(h)a=0$, then for every edge $e=(u_e,v_e)$,
\begin{align}
a_{u_e} h_{v_e} - a_{v_e} h_{u_e} = 0.
\end{align}
Since all $h_u$'s are nonzero, this is equivalent to
\begin{align}
\frac{a_{u_e}}{h_{u_e}} = \frac{a_{v_e}}{h_{v_e}}.
\end{align}
Connectivity implies that $a_u/h_u$ is constant on all vertices of $\mathcal{G}$. Hence $a\in\operatorname{span}\{h\}$. Conversely, if $a\in\operatorname{span}\{h\}$, then each term
$a_{u_e} h_{v_e} - a_{v_e} h_{u_e}$ vanishes, so $a \in \ker\Gamma(h)$.
\end{proof}

Our goal is to get a lower bound on $\frac{1}{d}\|[A,H]\|_F^2$, and so in light of~\eqref{eq:pauli-inj-exact-form} it is prudent to study properties of $\Gamma(h)$.  To this end, a useful and standard result is the Poincar\'{e} inequality, which we state for connected graphs below.

\begin{lemma}[Connected graph Poincar\'{e} inequality]
\label{lem:connected-graph-poincare}
Let $\mathcal{G} = (\mathcal{V}_{\mathrm{inj}},\mathsf{E})$ be a connected unweighted graph with $N := |\mathcal{V}_{\mathrm{inj}}| \geq 2$, and let $L_{\mathcal{G}}$ be its graph Laplacian. Let $\lambda_2(L_{\mathcal G})$ denote the second-smallest eigenvalue of $L_{\mathcal G}$. Then
\begin{align}
\lambda_2(L_{\mathcal{G}}) \geq \frac{4}{N^2}.
\end{align}
\end{lemma}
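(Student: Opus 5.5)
We prove $\lambda_2(L_{\mathcal G}) \geq 4/N^2$ by the standard path (canonical-path) argument applied to the Rayleigh quotient. By the Courant--Fischer characterization, and since the kernel of $L_{\mathcal G}$ on a connected graph is spanned by the constant vector $\mathbf 1$, it suffices to show
\begin{align}
\sum_{\{u,v\}\in\mathsf E}(x_u-x_v)^2 \;\geq\; \frac{4}{N^2}\sum_{u}x_u^2
\end{align}
for every $x\in\mathbb R^{N}$ with $\sum_u x_u=0$.

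\textbf{Step 1 (reduce to the extreme pair).} Fix such an $x\neq 0$. Pick vertices $s,t$ with $x_s=\max_u x_u$ and $x_t=\min_u x_u$. Since $\sum_u x_u=0$, we have $x_s\ge 0\ge x_t$, so $|x_u|\le x_s-x_t$ for every $u$. Hence
\begin{align}
\sum_u x_u^2\le N(x_s-x_t)^2 .
\end{align}
A sharper variant also holds: $\sum_u x_u^2 = \tfrac{1}{2N}\sum_{u,v}(x_u-x_v)^2 \le \tfrac N2 (x_s-x_t)^2$.

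\textbf{Step 2 (path bound).} By connectivity there is a simple path $s=w_0,w_1,\dots,w_\ell=t$ in $\mathcal G$ with $\ell\le N-1$ edges. Telescoping along the path and applying Cauchy--Schwarz gives
\begin{align}
(x_s-x_t)^2\le \ell\sum_{i=1}^{\ell}(x_{w_{i-1}}-x_{w_i})^2\le (N-1)\sum_{\{u,v\}\in\mathsf E}(x_u-x_v)^2 ,
\end{align}
where the last inequality uses that the path edges are distinct edges of $\mathsf E$.

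\textbf{Step 3 (combine).} Combining Steps 1 and 2,
\begin{align}
x^\top L_{\mathcal G}x\ge \frac{(x_s-x_t)^2}{N-1}\ge \frac{2}{N(N-1)}\sum_u x_u^2\ge \frac{2}{N^2}\sum_u x_u^2 .
\end{align}

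\textbf{Main obstacle: the constant.} The crude bound from the first inequality in Step 1 gives only $1/(N(N-1))$, and even the sharper variant gives $2/N^2$, a factor of $2$ short of the claimed $4/N^2$. To close this gap I would replace the single extreme path with the full canonical-path argument. Write
\begin{align}
\sum_u x_u^2=\frac1{2N}\sum_{u,v}(x_u-x_v)^2 .
\end{align}
For each ordered pair $(u,v)$, bound $(x_u-x_v)^2\le |\gamma_{uv}|\sum_{e\in\gamma_{uv}}(\nabla_e x)^2$, where $\gamma_{uv}$ is a path in a spanning tree. Swapping the order of summation, $\sum_u x_u^2$ is at most $\frac{1}{2N}\max_e\bigl(\sum_{\gamma\ni e}|\gamma|\bigr)$ times $x^\top L_{\mathcal G}x$.

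In a spanning tree, an edge that splits the vertex set into parts of sizes $a$ and $N-a$ is used by $2a(N-a)\le N^2/2$ ordered pairs, each path having length at most $N-1$. This alone gives roughly $4/N^3$, which is too weak. So instead I would fall back on the known sharp result: the path $P_N$ minimizes $\lambda_2$ among connected graphs on $N$ vertices. This follows because deleting edges only lowers $\lambda_2$, so one reduces to a spanning tree, and among trees $\lambda_2$ is minimized by the path (Fiedler's bound $\lambda_2\ge 2(1-\cos(\pi/N))$). Finally, $2(1-\cos(\pi/N))=4\sin^2(\pi/2N)\ge 4\cdot(2/\pi)^2(\pi/2N)^2=4/N^2$, using $\sin y\ge 2y/\pi$ on $[0,\pi/2]$. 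I would cite Fiedler for the tree/path comparison. For the paper's purposes the inverse-polynomial constant is all that matters, so the self-contained bound $\lambda_2\ge 2/N^2$ from Step 3 suffices downstream if the citation is avoided.
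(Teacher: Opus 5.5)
Your proof is correct, and Steps 1--3 are exactly the paper's argument: take the extreme pair, join it by a simple path of length at most $N-1$, and apply Cauchy--Schwarz. The fallback via Fiedler's theorem $\lambda_2\ge 2(1-\cos(\pi/N))=4\sin^2(\pi/2N)\ge 4/N^2$ is also valid, so the full proof stands modulo that citation.

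The route to the stated constant is where you and the paper differ. The factor-of-$2$ obstacle you identify comes from a loose Step 1, and the paper closes it elementarily with the range--variance (Popoviciu) inequality, which gives $\sum_u x_u^2\le \frac N4(x_s-x_t)^2$ rather than $\frac N2(x_s-x_t)^2$. Since $x_t\le x_u\le x_s$, you have $(x_s-x_u)(x_u-x_t)\ge 0$, i.e.\ $x_u^2\le (x_s+x_t)x_u-x_sx_t$. Summing and using $\sum_u x_u=0$ gives $\sum_u x_u^2\le -N x_s x_t$. By AM--GM, since $x_s\ge 0\ge x_t$, this is at most $\frac N4(x_s-x_t)^2$. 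Inserting this into Step 3 yields $x^\top L_{\mathcal G}x\ge \frac{4}{N(N-1)}\|x\|_2^2\ge \frac{4}{N^2}\|x\|_2^2$ directly.

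Your pairwise bound $\frac{1}{2N}\sum_{u,v}(x_u-x_v)^2\le \frac N2 (x_s-x_t)^2$ loses the factor because it charges every ordered pair, including the $N$ diagonal ones, the full range. Popoviciu is the sharp form of that estimate.

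So the trade-off is this. The paper's argument is self-contained and gives $4/(N(N-1))$. Yours outsources the constant to Fiedler's sharp path-extremality theorem. That theorem buys a better asymptotic constant, roughly $\pi^2/N^2$, but nothing downstream needs it. The canonical-path digression can be dropped entirely.
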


\begin{proof}
Let $\mathbf{1} \in \mathbb{R}^{\mathcal{V}_{\mathrm{inj}}}$ denote the all-ones vector, and let $x \perp \mathbf{1}$. Write
\begin{align}
x_{\max} := \max_{u\in\mathcal{V}_{\mathrm{inj}}} x_u, \quad x_{\min} := \min_{u\in\mathcal{V}_{\mathrm{inj}}} x_u.
\end{align}
Since all coordinates lie in $[x_{\min},x_{\max}]$ and $\sum_u x_u=0$,
\begin{align}
\frac{1}{N}\|x\|_2^2 \leq \frac{(x_{\max}-x_{\min})^2}{4}.
\label{eq:range-variance-bound}
\end{align}
Choose vertices attaining $x_{\max}$ and $x_{\min}$, and choose a simple path between them. Its length is at most $N-1$. By Cauchy--Schwarz along the path,
\begin{align}
(x_{\max}-x_{\min})^2 \leq (N-1) x^\top L_{\mathcal{G}} x.
\label{eq:path-range-bound}
\end{align}
Combining \eqref{eq:range-variance-bound} and \eqref{eq:path-range-bound},
\begin{align}
x^\top L_{\mathcal{G}} x \geq \frac{4}{N(N-1)}\|x\|_2^2 \geq \frac{4}{N^2}\|x\|_2^2.
\end{align}
Taking the minimum over $x\perp\mathbf{1}$ proves the result.
\end{proof}

We now apply this graph inequality to the anticommutation graph of an edge-product-injective Pauli family.

\begin{theorem}[Pauli-injective commutator gap]
\label{thm:pauli-injective-gap}
Let $\mathcal{V}_{\mathrm{inj}}$ be a finite Pauli family with connected anticommutation graph $\mathcal{G}=(\mathcal{V}_{\mathrm{inj}},\mathsf{E})$. Assume $\mathcal{V}_{\mathrm{inj}}$ is edge-product injective. Let
\begin{align}
H = \sum_{u\in\mathcal{V}_{\mathrm{inj}}} h_u P_u \,\, \text{with } h_u \in \mathbb{R}, \quad h_{\min} := \min_{u\in\mathcal{V}_{\mathrm{inj}}}|h_u| > 0
, \quad h_{\max} := \max_{u\in\mathcal{V}_{\mathrm{inj}}}|h_u|,
\end{align}
and assume the normalization $\tfrac{1}{d}\tr(H^2)=\sum_{u\in\mathcal{V}_{\mathrm{inj}}} h_u^2 = n$.
Further let $N := |\mathcal{V}_{\mathrm{inj}}| \geq 2$. Then
\begin{align}
\min_{\substack{ A \in \mathcal{O}_{\mathcal{V}_{\mathrm{inj}}}^{\mathrm{op}}\\ \tfrac{1}{d}\tr(A^2)=1,\ \tfrac{1}{d}\tr(AH)=0 }} \frac{1}{d}\|[A,H]\|_F^2
 \ge \frac{16\,h_{\min}^4}{h_{\max}^2 N^2\bigl(1+\frac{Nh_{\max}^2}{n}\bigr)}.
\label{eq:pauli-inj-gap-bound}
\end{align}
\end{theorem}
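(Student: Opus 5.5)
The plan is to combine Lemma~\ref{lem:exact-commutator-quadratic-form} with the Poincar\'e inequality of Lemma~\ref{lem:connected-graph-poincare}, using the change of variables $c_u := a_u/h_u$. This is well defined because $h_{\min}>0$.

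\textbf{Step 1: reduce to a weighted Dirichlet form.} Write $A=\sum_u a_uP_u$. By Lemma~\ref{lem:exact-commutator-quadratic-form},
\begin{align}
\frac{1}{d}\|[A,H]\|_F^2 = 4\sum_{e\in\mathsf E}(a_{u_e}h_{v_e}-a_{v_e}h_{u_e})^2 .
\end{align}
Substituting $a_u=h_uc_u$ gives
\begin{align}
\frac{1}{d}\|[A,H]\|_F^2 = 4\sum_{e}h_{u_e}^2h_{v_e}^2(c_{u_e}-c_{v_e})^2 \ \ge\ 4h_{\min}^4\, c^\top L_{\mathcal G}c .
\end{align}

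\textbf{Step 2: translate the constraints.} Because the Paulis are orthonormal, the two constraints become conditions on $c$ weighted by $h_u^2$. Normalization reads $\sum_u h_u^2c_u^2=1$. Orthogonality reads $\sum_u h_u^2 c_u=0$, so $c$ has zero mean under the weights $h_u^2$, whose total mass is $n$.

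\textbf{Step 3: apply the Poincar\'e inequality.} Decompose $c=\bar c\,\mathbf 1+x$ with $\bar c$ the unweighted mean, so that $x\perp\mathbf 1$. Since $L_{\mathcal G}\mathbf 1=0$, we have $c^\top L_{\mathcal G}c=x^\top L_{\mathcal G}x$. Connectivity of $\mathcal G$ and Lemma~\ref{lem:connected-graph-poincare} then give $x^\top L_{\mathcal G}x\ge \frac{4}{N^2}\|x\|_2^2$.

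\textbf{Step 4: lower bound $\|x\|_2$ (the main step).} This is where the mismatch between the weighted mean (from orthogonality to $H$) and the unweighted mean (from the Laplacian kernel) must be handled. I would use the weighted-variance identity: since $\sum_u h_u^2c_u=0$, for any scalar $s$,
\begin{align}
\sum_u h_u^2(c_u-s)^2=\sum_u h_u^2c_u^2+s^2n\ \ge\ 1 .
\end{align}
Taking $s=\bar c$ gives
\begin{align}
1\le \sum_u h_u^2x_u^2\le h_{\max}^2\|x\|_2^2 .
\end{align}
Chaining Steps 1--4 yields
\begin{align}
\frac{1}{d}\|[A,H]\|_F^2\ \ge\ \frac{16h_{\min}^4}{h_{\max}^2N^2}.
\end{align}
This implies \eqref{eq:pauli-inj-gap-bound}, because the extra factor $\bigl(1+Nh_{\max}^2/n\bigr)^{-1}$ is at most $1$.

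\textbf{Cross-check on the extra factor.} That factor is what a cruder route produces. One would bound $|\bar c|$ from the orthogonality constraint, $n|\bar c|=|\sum_u h_u^2x_u|\le h_{\max}^2\sqrt N\|x\|_2$. One would then expand $1=\sum_u h_u^2(\bar c+x_u)^2$ directly, absorbing the $\bar c$ contribution at a cost of $Nh_{\max}^2/n$. If the shortcut in Step 4 looks suspicious when written out in full, I would fall back on this expansion, which matches the stated constant.
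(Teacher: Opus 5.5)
Your proof is correct and shares the paper's skeleton: the substitution $c_u=a_u/h_u$, the lower bound $h_{\min}^4\, c^\top L_{\mathcal G}c$ on the weighted Dirichlet form, and the Poincar\'e inequality applied to $c^\perp=c-\bar c\,\mathbf 1$. The genuine difference is how you lower bound $\|c^\perp\|_2$, and there your route is sharper.

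\textbf{Paper's route.} The paper bounds $|\bar c|$ by Cauchy--Schwarz from $n\bar c=-\sum_u h_u^2c^\perp_u$. This gives $\|c\|_2^2\le(1+NS_4/n^2)\|c^\perp\|_2^2$ with $S_4=\sum_u h_u^4$, which it then combines with $1\le h_{\max}^2\|c\|_2^2$. That detour through $\|c\|_2$ is exactly where the factor $(1+Nh_{\max}^2/n)^{-1}$ enters, after the further estimate $S_4\le h_{\max}^2 n$.

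\textbf{Your route.} You use the Pythagorean identity $\sum_u h_u^2(c^\perp_u)^2=1+n\bar c^2$, equivalently $\|a-\bar c\,h\|_2^2=\|a\|_2^2+\bar c^2\|h\|_2^2\ge 1$ for $a\perp h$. The paper itself uses this device in the patching lemma (Lemma~\ref{lem:geo-patching}). It bounds the weighted norm of $c^\perp$ directly and gives $\|c^\perp\|_2^2\ge h_{\max}^{-2}$ with no loss. You therefore obtain the strictly stronger bound $16h_{\min}^4/(h_{\max}^2N^2)$, which implies \eqref{eq:pauli-inj-gap-bound}.

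\textbf{How much you gain.} Since $Nh_{\max}^2\ge\sum_u h_u^2=n$, the factor you avoid is always at least $2$. It grows with the heterogeneity of the coefficients, for example to order $\log N$ for i.i.d.\ Gaussian coefficients.

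Your Step~4 is sound as written, so the fallback in your last paragraph is unnecessary. The paper's cruder route reproduces the stated constant but buys nothing further.
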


\begin{proof}
Write $A = \sum_{u \in \mathcal{V}_{\mathrm{inj}}} a_u P_u$. Since the Pauli strings are orthonormal,
\begin{align}
\frac{1}{d}\tr(A^2) = \|a\|_2^2\,, \quad \frac{1}{d}\tr(AH) =  a\cdot h.
\end{align}
Thus the constraints are $\|a\|_2=1$ and $a\perp h$.

Define $c_u := \frac{a_u}{h_u}$. Then we have $\sum_{u \in\mathcal{V}_{\mathrm{inj}}} c_u h_u^2 = 0$ and $\sum_{u \in\mathcal{V}_{\mathrm{inj}}} c_u^2 h_u^2 = 1$. We further let $\bar{c} := \frac{1}{N}\sum_{u\in\mathcal{V}_{\mathrm{inj}}} c_u$ and $c^\perp := c - \bar{c}\,\mathbf{1}$. The weighted mean-zero condition gives
\begin{align}
\bar{c}\,\sum_{u \in\mathcal{V}_{\mathrm{inj}}} h_u^2 =  - \sum_{u \in\mathcal{V}_{\mathrm{inj}}} c_u^\perp h_u^2.
\end{align}
By Cauchy--Schwarz, we find
\begin{align}\label{cauchy-c-bar}
|\bar{c}| \,\sum_{u\in\mathcal{V}_{\mathrm{inj}}} h_u^2 \leq \|c^\perp\|_2 \left(\sum_{u \in\mathcal{V}_{\mathrm{inj}}} h_u^4 \right)^{1/2}.
\end{align}
Let $S_4=\sum_u h_u^4$ and recall from normalization $\sum_u h_u^2=n$. Squaring and multiplying by $N$, \eqref{cauchy-c-bar} gives $N \bar{c}^2 \leq \frac{NS_4}{n^2}\|c^\perp\|_2^2$, and therefore\label{c-norm-lower-bound-c-bar}
\begin{align}
\|c\|_2^2 = \|c^\perp\|_2^2 + N \bar{c}^2 \leq \left(1 + \frac{NS_4}{n^2}\right) \|c^\perp\|_2^2.
\end{align}
On the other hand,
\begin{align}\label{c-norm-lower-bound}
1 = \sum_{u\in\mathcal{V}_{\mathrm{inj}}} c_u^2 h_u^2 \leq h_{\max}^2\|c\|_2^2\,,
\end{align}
Combining \eqref{c-norm-lower-bound-c-bar} and \eqref{c-norm-lower-bound} gives
\begin{align}
\|c^\perp\|_2^2 \geq \frac{1}{h_{\max}^2(1+\frac{NS_4}{n^2})}.
\label{eq:pauli-cperp-lower}
\end{align}
By Lemma~\ref{lem:exact-commutator-quadratic-form}, we find
\begin{align}
a^\top\Gamma(h)a &= \sum_{\{u,v\}\in\mathsf{E}} h_u^2 h_v^2 (c_u-c_v)^2 \geq h_{\min}^4 \sum_{\{u,v\}\in\mathsf{E}}(c_u-c_v)^2 = h_{\min}^4(c^\perp)^\top L_{\mathcal{G}} c^\perp\,.
\end{align}
Since $c^\perp \perp \mathbf{1}$, Lemma~\ref{lem:connected-graph-poincare} and \eqref{eq:pauli-cperp-lower} imply
\begin{align}
a^\top\Gamma(h)a \geq \frac{4h_{\min}^4}{h_{\max}^2(1+\frac{NS_4}{n^2})N^2}\,.
\end{align}
Multiplying by $4$ via \eqref{eq:pauli-inj-exact-form} and using that $S_4=\sum_u h_u^4\le h_{\max}^2 \sum_u h_u^2=h_{\max}^2 n$, we obtain $1+\frac{NS_4}{n^2}\le 1+\frac{Nh_{\max}^2}{n}$ (by the normalization), which proves \eqref{eq:pauli-inj-gap-bound}.
\end{proof}

The preceding theorem controls observables supported on the Hamiltonian support $\mathcal{V}_{\mathrm{inj}}$. In applications, one may want to search over a larger Pauli family $\mathcal{V}_{\max} \supseteq \mathcal{V}_{\mathrm{inj}}$. The following extension gives a sufficient condition under which the same lower bound holds on the larger span. The two required hypotheses are that every added Pauli direction anticommutes with at least one Hamiltonian term, and that the corresponding commutator products do not cancel.

\begin{definition}[Relative edge-product injectivity]
\label{def:relative-edge-product-injectivity}
Let $\mathcal{V}_{\mathrm{inj}} \subseteq \mathcal{V}_{\max}$, and set $\mathcal{F} := \mathcal{V}_{\max} \setminus \mathcal{V}_{\mathrm{inj}}$. Choose one orientation for every anticommuting edge internal to $\mathcal{V}_{\mathrm{inj}}$, and denote the resulting oriented edge set by $\mathsf{E}_{\mathcal{V}_{\mathrm{inj}}}^{\rightarrow}$. Define the labelled index set
\begin{align}
\mathcal{I} := \mathsf{E}_{\mathcal{V}_{\mathrm{inj}}}^{\rightarrow} \sqcup \{ (f, u) \in \mathcal{F} \times \mathcal{V}_{\mathrm{inj}} : P_f P_u = - P_u P_f \}.
\end{align}
For $e = (u_e, v_e) \in \mathsf{E}_{\mathcal{V}_{\mathrm{inj}}}^{\rightarrow}$, set $Q_e := P_{u_e}P_{v_e}$. For $(f, u)$ in the second component, set $Q_{(f, u)} := P_fP_u$. We say that $\mathcal{V}_{\mathrm{inj}} \subseteq \mathcal{V}_{\max}$ is relatively edge-product injective if
\begin{align}
\frac{1}{d}\tr(Q_\alpha^\dagger Q_\beta) = \delta_{\alpha,\beta}
\end{align}
for all $\alpha, \beta \in \mathcal{I}$.
\end{definition}

\begin{lemma}[Commutator gap extension from a dominating subfamily]
\label{lem:dominating-extension}
Let $\mathcal{V}_{\mathrm{inj}} \subseteq \mathcal{V}_{\max}$, and let
\begin{align}
H = \sum_{u \in \mathcal{V}_{\mathrm{inj}}} h_u P_u
\end{align}
(normalized as $\tfrac{1}{d}\tr(H^2) = n$ as usual), satisfy the assumptions of Theorem~\ref{thm:pauli-injective-gap} on $\mathcal{V}_{\mathrm{inj}}$. Assume that $\mathcal{V}_{\mathrm{inj}}$ dominates $\mathcal{V}_{\max}$, meaning that for every $f \in \mathcal{V}_{\max} \setminus \mathcal{V}_{\mathrm{inj}}$, there exists $u \in \mathcal{V}_{\mathrm{inj}}$ such that
\begin{align}
P_f P_u = -P_u P_f.
\end{align}
Assume also that $\mathcal{V}_{\mathrm{inj}} \subseteq \mathcal{V}_{\max}$ is relatively edge-product injective. Then
\begin{align}
\min_{\substack{A \,\in\, \operatorname{span}_{\mathbb{R}}\{ P_x : x \in \mathcal{V}_{\max} \}\\ \frac{1}{d}\tr(A^2) = 1,\ \tr(AH) = 0}} \frac{1}{d}\|[A,H]\|_F^2 \geq \frac{16\,h_{\min}^4}{h_{\max}^2 N^2\bigl(1+\frac{Nh_{\max}^2}{n}\bigr)}.
\end{align}
\end{lemma}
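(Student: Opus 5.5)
Write $A=A_{\mathrm{in}}+A_{\mathcal F}$ with $A_{\mathrm{in}}=\sum_{u\in\mathcal V_{\mathrm{inj}}}a_uP_u$ and $A_{\mathcal F}=\sum_{f\in\mathcal F}b_fP_f$, where $\mathcal F=\mathcal V_{\max}\setminus\mathcal V_{\mathrm{inj}}$. The constraints become $\|a\|_2^2+\|b\|_2^2=1$ and $a\cdot h=0$, since $H$ has no $\mathcal F$-component.

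The first step is to expand the commutator in the labelled basis $\{Q_\alpha\}_{\alpha\in\mathcal I}$. As in Lemma~\ref{lem:exact-commutator-quadratic-form}, $[A_{\mathrm{in}},H]=2\sum_{e}(a_{u_e}h_{v_e}-a_{v_e}h_{u_e})Q_e$. For the added directions, $[P_f,P_u]=2P_fP_u=2Q_{(f,u)}$ when $P_f$ and $P_u$ anticommute, and the commutator vanishes otherwise. Hence
\begin{align}
[A,H]=2\sum_{e}(a_{u_e}h_{v_e}-a_{v_e}h_{u_e})Q_e+2\sum_{(f,u)}b_fh_uQ_{(f,u)}.
\end{align}
Relative edge-product injectivity says that all these $Q_\alpha$ are orthonormal in the normalized Hilbert--Schmidt inner product. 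Therefore
\begin{align}
\frac1d\|[A,H]\|_F^2=4a^\top\Gamma(h)a+4\sum_{f\in\mathcal F}b_f^2\sum_{u:\,\{f,u\}\text{ anticommute}}h_u^2 .
\end{align}
There are no cross terms, which is exactly why the non-cancellation hypothesis is needed.

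The second step bounds each piece. Domination guarantees that each $f$ has at least one anticommuting partner $u$, so the second sum is at least $4h_{\min}^2\|b\|_2^2$. For the first piece, $a\perp h$, and the proof of Theorem~\ref{thm:pauli-injective-gap} is homogeneous in $a$. Applying it to $a/\|a\|_2$ gives $4a^\top\Gamma(h)a\ge g\|a\|_2^2$, where $g$ is the right-hand side of \eqref{eq:pauli-inj-gap-bound}. When $a=0$ the bound is trivial.

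The last step combines the two pieces:
\begin{align}
\frac1d\|[A,H]\|_F^2\ge g\|a\|_2^2+4h_{\min}^2\|b\|_2^2\ge\min\{g,4h_{\min}^2\}.
\end{align}
It remains to check that $g\le 4h_{\min}^2$. Since $h_{\min}\le h_{\max}$ and $N\ge2$,
\begin{align}
g\le\frac{16h_{\min}^4}{h_{\max}^2N^2}\le\frac{4h_{\min}^2}{N^2/4}\le 4h_{\min}^2 ,
\end{align}
so the minimum equals $g$, which is the claimed bound.

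The main obstacle is the orthogonality bookkeeping in the first step. One must make sure that no $Q_{(f,u)}$ coincides up to phase with some $Q_e$ or with another $Q_{(f',u')}$, and that pairs of elements of $\mathcal F$ contribute nothing because $H$ has no $\mathcal F$-terms. The definition of $\mathcal I$ is set up precisely so that both points hold.
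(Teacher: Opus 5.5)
Your proposal is correct and follows essentially the same route as the paper's proof. Both split $A$ into its $\mathcal V_{\mathrm{inj}}$ and $\mathcal F$ parts, use relative edge-product injectivity to remove the cross terms, apply Theorem~\ref{thm:pauli-injective-gap} to the rescaled internal part and domination to the added part, and conclude from $\pi_0\le 4h_{\min}^2$ (using $N\ge 2$) that the weighted combination is at least $\pi_0$.
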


\begin{proof}
Set $\mathcal{F} := \mathcal{V}_{\max} \setminus \mathcal{V}_{\mathrm{inj}}$, and decompose
\begin{align}
A = A_{\mathcal{V}_{\mathrm{inj}}} + A_{\mathcal{F}}, \quad A_{\mathcal{V}_{\mathrm{inj}}} = \sum_{u \in \mathcal{V}_{\mathrm{inj}}} a_u P_u, \quad A_{\mathcal{F}} = \sum_{f \in \mathcal{F}} a_f P_f.
\end{align}
Let
\begin{align}
\lambda := \sum_{u \in \mathcal{V}_{\mathrm{inj}}} a_u^2, \quad 1 - \lambda := \sum_{f \in \mathcal{F}} a_f^2.
\end{align}
Since $H$ is supported on $\mathcal{V}_{\mathrm{inj}}$, the condition $\tr(AH) = 0$ implies $\tr(A_{\mathcal{V}_{\mathrm{inj}}}H) = 0$. Relative edge-product injectivity gives orthogonality between the internal and cross commutator products. Therefore
\begin{align}
\frac{1}{d}\|[A,H]\|_F^2 = \frac{1}{d}\|[A_{\mathcal{V}_{\mathrm{inj}}},H]\|_F^2 + \frac{1}{d}\|[A_{\mathcal{F}},H]\|_F^2.
\label{eq:dominating-extension-split}
\end{align}
If $\lambda > 0$, then $A_{\mathcal{V}_{\mathrm{inj}}}/\sqrt{\lambda}$ is normalized and orthogonal to $H$, so Theorem~\ref{thm:pauli-injective-gap} gives
\begin{align}
\frac{1}{d}\|[A_{\mathcal{V}_{\mathrm{inj}}},H]\|_F^2 \geq \lambda \pi_0, \quad \pi_0 := \frac{16\,h_{\min}^4}{h_{\max}^2 N^2\bigl(1+\frac{Nh_{\max}^2}{n}\bigr)}.
\end{align}
The same inequality is trivial when $\lambda = 0$.

For the cross contribution, only anticommuting pairs contribute, so
\begin{align}
[A_{\mathcal{F}},H] = 2 \sum_{\substack{f \in \mathcal{F},\, u \in \mathcal{V}_{\mathrm{inj}}\\ P_fP_u = -P_uP_f}} a_f h_u P_fP_u.
\end{align}
By relative edge-product injectivity, the products $P_fP_u$ in this sum are mutually orthonormal. Hence
\begin{align}
\frac{1}{d}\|[A_{\mathcal{F}},H]\|_F^2 = 4 \sum_{f \in \mathcal{F}} a_f^2 \sum_{\substack{u \in \mathcal{V}_{\mathrm{inj}}\\ P_fP_u = -P_uP_f}} h_u^2.
\end{align}
By the dominance assumption, for each $f \in \mathcal F$ there is at least one $u \in \mathcal V_{\mathrm{inj}}$ that anticommutes with $f$. Hence each inner sum is nonempty, and since every nonzero Hamiltonian coefficient on $\mathcal V_{\mathrm{inj}}$ has magnitude at least $h_{\min}$, each inner sum is at least $h_{\min}^2$. Thus
\begin{align}
\frac{1}{d}\|[A_{\mathcal{F}},H]\|_F^2 \geq 4 h_{\min}^2 (1 - \lambda).
\end{align}
Combining this with \eqref{eq:dominating-extension-split}, we get
\begin{align}
\frac{1}{d}\|[A,H]\|_F^2 \geq \lambda \pi_0 + 4 h_{\min}^2 (1 - \lambda).
\end{align}
Finally, since $|\mathcal{V}_{\mathrm{inj}}| \geq 2$ and $h_{\min} \leq h_{\max}$,
\begin{align}
\pi_0 &= \frac{16\,h_{\min}^4}{h_{\max}^2 N^2\bigl(1+\frac{Nh_{\max}^2}{n}\bigr)}\leq \frac{4 h_{\min}^4}{h_{\max}^2}=4h_{\min}^2\Big(\frac{h_{\min}^2}{h_{\max}^2}\Big)  \leq 4 h_{\min}^2\,.
\end{align}
Therefore $\lambda \pi_0 + 4 h_{\min}^2 (1 - \lambda) \geq \pi_0$, which proves the claim.
\end{proof}

\subsection{Geometrically local commutator gaps}
\label{subsec:geometric-local-nondegeneracy}

We now turn to more general geometrically local Hamiltonians. The main difference from the Pauli-injective setting is that distinct anticommuting pairs of Pauli strings can produce the same output Pauli string. Thus the commutator need not decompose as an orthogonal sum over anticommuting edges. Instead, we group together all contributions that produce the same output Pauli string and work with the resulting commutator matrix.

Let $G = (\Lambda, \textsf{E})$ be a connected simple interaction graph with $|\Lambda| = n$ and maximum degree $\Delta$. We assume $\Delta = O(1)$. Let $\operatorname{dist}_G$ denote graph distance, and define
\begin{align}
B(q, r) := \{ p \in \Lambda : \operatorname{dist}_G(p, q) \leq r \}, \quad b_r := \max_{q \in \Lambda} |B(q, r)|.
\end{align}
For $S \subseteq \Lambda$, write
\begin{align}
\operatorname{diam}_G(S) := \max_{p, q \in S} \operatorname{dist}_G(p, q).
\end{align}
Now for fixed $k, R = O(1)$, with $R \geq 1$, let $\mathcal{V}_{k,R}$ be a finite family of distinct Hermitian Pauli strings modulo phase satisfying
\begin{align}
1 \leq |\operatorname{supp}(P_u)| \leq k, \quad \operatorname{diam}_G(\operatorname{supp}(P_u)) \leq R.
\end{align}
Set
\begin{align}
\mathcal{O}_{k,R}^{\mathrm{op}} := \mathcal{O}^{\mathrm{op}}_{\mathcal{V}_{k,R}} =\operatorname{span}_{\mathbb{R}}\{ P_u : u \in \mathcal{V}_{k,R} \}\,,
\end{align}
and let
\begin{align}
H = \sum_{u \in \mathcal{V}_{k,R}} h_u P_u, \quad A = \sum_{u \in \mathcal{V}_{k,R}} a_u P_u.
\end{align}
As before, we define
\begin{align}
h_{\min} := \min_{u \in \mathcal{V}_{k,R}} |h_u|, \quad h_{\max} := \max_{u \in \mathcal{V}_{k,R}} |h_u|.
\end{align}

\subsubsection{Global commutator matrix}

For $u, v \in \mathcal{V}_{k,R}$, write
\begin{align}
P_u P_v = (-1)^{\chi(u, v)} P_v P_u, \quad \chi(u, v) \in \{0,1\}.
\end{align}
If $\chi(u, v) = 1$, define $u \oplus v$ and $\sigma_{uv} \in \{\pm 1\}$ by
\begin{align}
[P_u, P_v] = 2\,i\,\sigma_{uv} P_{u \oplus v}\,,
\end{align}
where $P_{u \oplus v}$ is the Hermitian Pauli string obtained after removing the phase from the product. Let
\begin{align}
\mathcal{W}_{\oplus} := \{ u \oplus v : u, v \in \mathcal{V}_{k,R},\ \chi(u, v) = 1 \}.
\end{align}
Thus $\mathcal{W}_{\oplus}$ indexes the Pauli strings that can appear in the commutator. For each $w \in \mathcal{W}_{\oplus}$, define
\begin{align}
b_w := \sum_{\substack{u, v \in \mathcal{V}_{k,R}\\ u \oplus v = w\\ \chi(u, v) = 1}} \sigma_{uv} h_v a_u.
\end{align}
Equivalently, $b = B(h)a$, where $B(h)$ is the real matrix with rows indexed by $\mathcal{W}_{\oplus}$ and columns indexed by $\mathcal{V}_{k,R}$, defined by
\begin{align}\label{eq:def-B-matrix}
B(h)_{w,u} := \sum_{\substack{v \in \mathcal{V}_{k,R}\\ u \oplus v = w\\ \chi(u, v) = 1}} \sigma_{uv} h_v.
\end{align}
With this notation, all contributions producing the same output Pauli string have been combined, and the commutator is
\begin{align}
[A,H] = 2\,i\sum_{w \in \mathcal{W}_{\oplus}} b_w P_w.
\end{align}
Since the $P_w$'s are distinct Pauli strings modulo phase, they are orthonormal for the normalized Hilbert--Schmidt inner product. Therefore
\begin{align}
\frac{1}{d}\|[A,H]\|_F^2 = 4\|B(h)a\|_2^2 = 4 a^\top \Gamma(h) a\,, \quad \Gamma(h) := B(h)^\top B(h).
\label{eq:geo-global-commutator-identity}
\end{align}
Finally, since $[H,H] = 0$, substituting $a = h$ into the preceding commutator representation gives $B(h)h = 0$, and hence $\Gamma(h)h = 0$. It follows that
\begin{align}
\min_{\substack{A \in \mathcal{O}_{k,R}^{\mathrm{op}}\\ \tfrac{1}{d}\tr(A^2) = 1,\ \tr(AH) = 0}} \frac{1}{d}\|[A,H]\|_F^2 = 4 \min_{\substack{a \perp h\\ \|a\|_2 = 1}} a^\top \Gamma(h) a\,.
\label{eq:geo-gap-rayleigh}
\end{align}

\subsubsection{Local decomposition}

We next decompose the global commutator Gram matrix $\Gamma(h)$ into local positive semidefinite pieces. The purpose of this decomposition is to reduce the global gap problem to bounded-size local gap estimates, which can later be patched together using connectivity of the interaction graph.

Fix a support-respecting anchoring map $\alpha: \mathcal{V}_{k,R} \to \Lambda$, where $\alpha(u) \in \operatorname{supp}(P_u)$ for every $u \in \mathcal{V}_{k,R}$. We write
\begin{equation}
\operatorname{anc}(u) := \alpha(u).
\end{equation}
This gives a disjoint partition
\begin{equation}
\mathcal{V}_{k,R} = \bigsqcup_{q \in \Lambda} \mathcal{V}_q, \quad \mathcal{V}_q := \{ u \in \mathcal{V}_{k,R} : \operatorname{anc}(u) = q \}.
\end{equation}
We assume that the anchoring map has been chosen so that $\mathcal{V}_q \neq \varnothing$ for every $q \in \Lambda$. This nonempty-block condition is used only in the patching argument. The only other property of the anchoring map used below is that the anchor of each operator lies in the support of that operator.

\begin{lemma}[Nonempty anchoring for exact two-body finite-range families]
\label{lem:exact-two-body-nonempty-anchoring}
Let $G = (\Lambda,\mathsf E)$ be connected with $|\Lambda| \geq 2$, and let
\begin{equation}
\mathcal V^{(2)}_{R} := \{ \sigma_i^a \sigma_j^b : 1 \leq \operatorname{dist}_G(i,j) \leq R,\ a,b \in \{X,Y,Z\} \}
\end{equation}
with $R \geq 1$. Then there exists a support-respecting anchoring map $\alpha: \mathcal V^{(2)}_{R} \to \Lambda$ such that every anchor block $\mathcal V_q = \{ u : \alpha(u) = q \}$ is nonempty.
\end{lemma}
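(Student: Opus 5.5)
The plan is to build the anchoring map explicitly, by assigning each vertex $q$ one operator that witnesses it. After that, every remaining operator is anchored arbitrarily inside its support. The support-respecting requirement is $\alpha(u) \in \operatorname{supp}(\sigma_i^a\sigma_j^b) = \{i,j\}$, where $i \neq j$ because $\operatorname{dist}_G(i,j)\ge 1$. So it suffices to find an injective assignment $q \mapsto u_q$ with $q \in \operatorname{supp}(u_q)$.

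\textbf{Step 1 (a witness for each vertex).} Since $G$ is connected and $|\Lambda|\ge 2$, every vertex $q$ has at least one neighbour $j(q)$, with $\operatorname{dist}_G(q,j(q)) = 1 \le R$. Hence $q$ lies in the support of the nine operators $\sigma_q^a\sigma_{j(q)}^b$, $a,b\in\{X,Y,Z\}$, all of which belong to $\mathcal V^{(2)}_R$.

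\textbf{Step 2 (make the witnesses distinct).} The only risk is a collision $u_q = u_{q'}$ for $q\neq q'$. This can happen only when $\{q,j(q)\} = \{q',j(q')\}$, i.e.\ $q' = j(q)$ and $q = j(q')$. To avoid it, use the Pauli labels: set $u_q := \sigma_q^X\sigma_{j(q)}^Z$, meaning the letter $X$ sits on the site $q$ being anchored and $Z$ sits on the partner. If $u_q = u_{q'}$ with $q\neq q'$, then $q' = j(q)$ carries $X$ in $u_{q'}$ but $Z$ in $u_q$, which is a contradiction. So $q\mapsto u_q$ is injective.

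\textbf{Step 3 (define $\alpha$).} Put $\alpha(u_q) := q$ for each $q \in \Lambda$. This is well defined by the injectivity in Step 2 and support-respecting by Step 1. For every other $u = \sigma_i^a\sigma_j^b$, set $\alpha(u) := i$, or any element of its support. Then $u_q \in \mathcal V_q$, so every anchor block is nonempty.

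The only real obstacle is the collision issue in Step 2, which the asymmetric $X/Z$ labelling removes. An alternative would be to orient a spanning tree toward a root and treat the root separately, but the labelling trick is simpler.
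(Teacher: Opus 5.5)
Your proof is correct and follows essentially the same route as the paper: each vertex selects a neighbour, and the only possible collision, an edge selected by both of its endpoints, is resolved by assigning distinct Pauli labels. The paper does this with a case split, giving $X_pX_q$ and $Y_pY_q$ to the two endpoints of a doubly selected edge, whereas your asymmetric choice $\sigma_q^X\sigma_{j(q)}^Z$ handles every case uniformly.
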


\begin{proof}
For each vertex $q \in \Lambda$, choose one neighbor $p(q)$ of $q$, which exists because $G$ is connected and $|\Lambda| \geq 2$. The edge $\{q,p(q)\}$ has graph distance one, hence is allowed because $R \geq 1$. For every unordered edge $\{p,q\}$ that is selected by both of its endpoints, assign two distinct Pauli labels, for example $X_p X_q$ to one endpoint and $Y_p Y_q$ to the other. If the edge is selected by only one endpoint, assign $X_p X_q$ to that endpoint. These selected Pauli strings are distinct and each contains the vertex to which it is assigned. Anchor every remaining Pauli string in $\mathcal V^{(2)}_R$ arbitrarily at one of the sites in its support. This defines a support-respecting anchoring map, and by construction each vertex anchors at least one selected Pauli string.
\end{proof}

The block sizes are uniformly bounded. Indeed, if $\operatorname{anc}(u) = q$ and $|\operatorname{supp}(P_u)| = s$, then $q \in \operatorname{supp}(P_u)$ by the support-respecting property of the anchoring map, and the range condition implies $\operatorname{supp}(P_u) \subseteq B(q, R)$. Therefore
\begin{align}
|\mathcal{V}_q| \leq d_{\mathrm{blk}} := \sum_{s = 1}^{k} \binom{b_R - 1}{s - 1} 3^s.
\label{eq:geo-dblk}
\end{align}
For $c \in \Lambda$, define the local coordinate neighborhood
\begin{align}
\mathcal{U}_c := \{ u \in \mathcal{V}_{k,R} : \operatorname{supp}(P_u) \cap B(c, R) \neq \varnothing \}\,,
\end{align}
and let
\begin{align}
d_c := |\,\mathcal{U}_c|, \quad d_{\mathrm{loc}} := \max_{c \in \Lambda} d_c\,.
\end{align}
Then we have
\begin{align}
d_{\mathrm{loc}} \leq b_{2R} \, d_{\mathrm{blk}} = O_{\Delta,k,R}(1).
\label{eq:geo-dloc}
\end{align}
To see this, suppose $u \in \mathcal{U}_c$. Then some $s \in \operatorname{supp}(P_u)$ lies in $B(c, R)$. If $q = \operatorname{anc}(u)$, then $\operatorname{dist}_G(q, s) \leq R$, and hence $\operatorname{dist}_G(q, c) \leq 2R$. Thus there are at most $b_{2R}$ possible anchors $q$, and for each such anchor there are at most $d_{\mathrm{blk}}$ possible Pauli strings.

We assume
\begin{align}
d_c \geq 2 \quad \textnormal{for every } c \in \Lambda.
\label{eq:geo-dc-at-least-two}
\end{align}
For $w \in \mathcal{W}_{\oplus}$, let $r_w(h) \in \mathbb{R}^{\mathcal{V}_{k,R}}$ denote the transpose of the $w$-th row of $B(h)$, so that
\begin{align}
\Gamma(h) = \sum_{w \in \mathcal{W}_{\oplus}} r_w(h) r_w(h)^\top\,,
\end{align}
and for $c \in \Lambda$ define
\begin{align}
\mathcal{W}_c := \{ w \in \mathcal{W}_{\oplus} : c \in \operatorname{supp}(P_w) \}.
\end{align}
We also define
\begin{align}
w_{\mathrm{loc}} := \max_{c \in \Lambda} |\mathcal{W}_c|.
\end{align}
For fixed $(\Delta,k,R)$, the number $w_{\mathrm{loc}}$ is $O_{\Delta,k,R}(1)$. Indeed, if $w \in \mathcal{W}_c$, then $c \in \operatorname{supp}(P_w)$. Moreover, $w = u \oplus v$ for anticommuting $u, v \in \mathcal{V}_{k,R}$, so $\operatorname{supp}(P_u)\cap\operatorname{supp}(P_v)\neq\varnothing$. Since both supports have graph diameter at most $R$, the support of $P_w$ is contained in $B(c, 2R)$, and $|\operatorname{supp}(P_w)| \leq 2k - 1$. Thus $w_{\mathrm{loc}}$ is bounded by a constant depending only on $(\Delta,k,R)$.

The local contribution to $\Gamma(h)$ at $c$ is then
\begin{align}
\Gamma_c(h) := \sum_{w \in \mathcal{W}_c} \frac{1}{|\operatorname{supp}(P_w)|} \, r_w(h) r_w(h)^\top.
\label{eq:geo-Gammac}
\end{align}
At this point $\Gamma_c(h)$ is a matrix on the full coordinate space $\mathbb{R}^{\mathcal{V}_{k,R}}$. The next lemma shows that it is supported only on $\mathcal{U}_c$, so it may be identified with a $d_c \times d_c$ matrix.

\begin{lemma}[Local decomposition and row localization]
\label{lem:geo-local-decomposition}
The matrices $\Gamma_c(h)$ satisfy
\begin{align}
\Gamma(h) = \sum_{c \in \Lambda} \Gamma_c(h).
\end{align}
Moreover, every row vector entering $\Gamma_c(h)$ is supported on the coordinate set $\mathcal{U}_c$. Thus $\Gamma_c(h)$ may be identified with its restriction to $\mathbb{R}^{\mathcal{U}_c}$. Under this identification,
\begin{align}
\Gamma_c(h) h^{(\mathcal{U}_c)} = 0,
\end{align}
where $h^{(\mathcal{U}_c)}$ denotes the restriction of the coefficient vector $h$ to the coordinates in $\mathcal{U}_c$.
\end{lemma}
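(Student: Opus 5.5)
The lemma has three parts: the decomposition identity, row localization, and the local null vector. I would prove them in that order.

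\textbf{Decomposition identity.} Swap the order of summation. By definition,
\begin{align*}
\sum_{c\in\Lambda}\Gamma_c(h) = \sum_{c\in\Lambda}\sum_{w\in\mathcal W_c}\frac{1}{|\operatorname{supp}(P_w)|}\, r_w(h)r_w(h)^\top = \sum_{w\in\mathcal W_\oplus} \frac{|\{c : c\in\operatorname{supp}(P_w)\}|}{|\operatorname{supp}(P_w)|}\, r_w(h)r_w(h)^\top .
\end{align*}
The fraction equals $1$. This only requires $\operatorname{supp}(P_w)\neq\varnothing$ for every $w\in\mathcal W_\oplus$. That holds because $P_w$ arises from an anticommuting pair $P_u,P_v$. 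Their product cannot be proportional to the identity, since then $P_u$ and $P_v$ would be equal modulo phase and would commute. The sum therefore collapses to $\sum_w r_w r_w^\top=\Gamma(h)$.

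\textbf{Row localization.} Fix $w\in\mathcal W_c$ and a coordinate $u$ with $B(h)_{w,u}\neq 0$. By \eqref{eq:def-B-matrix}, there is some $v$ with $\chi(u,v)=1$ and $u\oplus v=w$. Since $P_w\propto P_uP_v$, we have $\operatorname{supp}(P_w)\subseteq\operatorname{supp}(P_u)\cup\operatorname{supp}(P_v)$. Because $c\in\operatorname{supp}(P_w)$, the site $c$ lies in $\operatorname{supp}(P_u)$ or in $\operatorname{supp}(P_v)$.
\begin{itemize}
\item If $c\in\operatorname{supp}(P_u)$, then $u\in\mathcal U_c$ immediately.
\item Otherwise $c\in\operatorname{supp}(P_v)$. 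Anticommutation forces $\operatorname{supp}(P_u)\cap\operatorname{supp}(P_v)\ni s$ for some site $s$. Since $\operatorname{diam}_G(\operatorname{supp}(P_v))\le R$, we get $\operatorname{dist}_G(s,c)\le R$. Hence $\operatorname{supp}(P_u)$ meets $B(c,R)$, and $u\in\mathcal U_c$.
\end{itemize}
So each $r_w(h)$ with $w\in\mathcal W_c$ is supported on $\mathcal U_c$. Each rank-one term $r_wr_w^\top$ is then supported on $\mathcal U_c\times\mathcal U_c$, which justifies the identification of $\Gamma_c(h)$ with a $d_c\times d_c$ matrix.

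\textbf{Local null vector.} We know from \eqref{eq:geo-global-commutator-identity} that $B(h)h=0$, so $r_w(h)^\top h=0$ for every $w$. Since $r_w(h)$ is supported on $\mathcal U_c$, this gives $r_w(h)^\top h = r_w(h)^{(\mathcal U_c)\top}h^{(\mathcal U_c)}=0$. Summing over $w\in\mathcal W_c$ yields $\Gamma_c(h)h^{(\mathcal U_c)}=0$.

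The only delicate point is the second case of the localization step. There one must use that the anticommuting partner $v$ overlaps $u$ and has diameter at most $R$. This is exactly why $\mathcal U_c$ is defined with $B(c,R)$ rather than with $\{c\}$. Everything else is bookkeeping.
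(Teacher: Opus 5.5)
Your proof is correct and matches the paper's argument essentially step for step: you swap the order of summation for the decomposition, localize rows using the overlap point $s$ together with the diameter bound, and restrict $B(h)h=0$ to $\mathcal U_c$. Your case split and your explicit check that $\operatorname{supp}(P_w)\neq\varnothing$ are minor refinements. The paper also remarks that the partner $v$ lies in $\mathcal U_c$, so that $\Gamma_c(h)$ depends only on $h^{(\mathcal U_c)}$; this is not needed for the statement as written, but it is used later when $S_c$ is treated as a polynomial in $h^{(\mathcal U_c)}$.
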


\begin{proof}
The identity $\Gamma(h) = \sum_{c \in \Lambda} \Gamma_c(h)$ follows from
\begin{align}
\sum_{c \in \Lambda} \mathbf{1}_{\{c \,\in\, \operatorname{supp}(P_w)\}} \frac{1}{|\operatorname{supp}(P_w)|} = 1
\end{align}
for every $w \in \mathcal{W}_{\oplus}$.

It remains to show that $\Gamma_c(h)$ only involves coordinates in $\mathcal U_c$. Suppose $c \in \operatorname{supp}(P_w)$ and $(r_w(h))_u \neq 0$. By the definition of $B(h)$, there exists $v \in \mathcal{V}_{k,R}$ such that $u \oplus v = w$ and $\chi(u, v) = 1$. Since $P_u$ and $P_v$ anticommute, their supports intersect. Choose $s \in \operatorname{supp}(P_u) \cap \operatorname{supp}(P_v)$. Both supports have diameter at most $R$, so
\begin{align}
\operatorname{supp}(P_u) \cup \operatorname{supp}(P_v) \subseteq B(s, R).
\end{align}
Because $c \in \operatorname{supp}(P_w)$ and $\operatorname{supp}(P_w) \subseteq \operatorname{supp}(P_u) \cup \operatorname{supp}(P_v)$, we have $c \in B(s, R)$. Thus $\operatorname{dist}_G(c, s) \leq R$. Since $s \in \operatorname{supp}(P_u)$, it follows that $\operatorname{supp}(P_u) \cap B(c, R) \neq \varnothing$, so $u \in \mathcal{U}_c$. The same argument gives $v \in \mathcal{U}_c$. Hence every vector $r_w(h)$ entering $\Gamma_c(h)$ is supported on $\mathcal{U}_c$.

Finally, each such vector has zero inner product with $h$, because $B(h)h = 0$. Since the vector is supported on $\mathcal{U}_c$, its inner product with $h^{(\mathcal{U}_c)}$ is also zero. Therefore $\Gamma_c(h) h^{(\mathcal{U}_c)} = 0$ after restricting to $\mathcal{U}_c$.
\end{proof}

\subsubsection{Local cofactor polynomial}

We now introduce a local polynomial that detects whether the local commutator form has exactly the expected one-dimensional nullspace. Throughout this discussion, $\Gamma_c(h)$ is viewed as the $d_c \times d_c$ matrix obtained by restricting to the coordinates in $\mathcal{U}_c$. Define
\begin{align}
S_c(h^{(\mathcal{U}_c)}) := \sum_{j = 1}^{d_c} \det\!\big(\Gamma_c(h)^{(j)}\big),
\label{eq:geo-Sc-def}
\end{align}
where $\Gamma_c(h)^{(j)}$ is the $(d_c - 1) \times (d_c - 1)$ principal minor obtained by deleting the $j$-th row and column. Since $\Gamma_c(h) \succeq 0$, all these principal minors are nonnegative.

We always have $\Gamma_c(h) h^{(\mathcal{U}_c)} = 0$. If $h^{(\mathcal{U}_c)} \neq 0$, this immediately implies $\operatorname{rank}(\Gamma_c(h)) \leq d_c - 1$. If $h^{(\mathcal{U}_c)} = 0$, then every row entering $\Gamma_c(h)$ vanishes by row localization, so $\Gamma_c(h) = 0$, and the same rank bound holds. Hence
\begin{align}
\operatorname{rank}(\Gamma_c(h)) \leq d_c - 1
\end{align}
for all $h$. Therefore
\begin{align}
S_c(h^{(\mathcal{U}_c)}) > 0 \quad \quad \text{if and only if } \quad \operatorname{rank}(\Gamma_c(h)) = d_c - 1.
\label{eq:geo-Sc-rank}
\end{align}
In the rank-$(d_c - 1)$ case, the eigenvalues satisfy $\lambda_1(\Gamma_c(h)) = 0$, and
\begin{align}
S_c(h^{(\mathcal{U}_c)}) = \prod_{j = 2}^{d_c} \lambda_j(\Gamma_c(h)).
\label{eq:geo-Sc-product}
\end{align}
Indeed, $S_c$ is the elementary symmetric polynomial of degree $d_c - 1$ in the eigenvalues of $\Gamma_c(h)$, and only the product of the nonzero eigenvalues survives when the rank is $d_c - 1$.

\begin{assumption}[Structural local nondegeneracy]
\label{ass:geo-structural-nondeg}
For every $c \in \Lambda$, we have $S_c \not\equiv 0$.
\end{assumption}
\noindent This is a structural condition on the local Pauli support family. It says that the local commutator constraints are rich enough that, for at least one choice of local coefficients, the only local null direction is the Hamiltonian direction. Equivalently, for each local type there exists $h_0^{(\mathcal{U}_c)}$ such that $\operatorname{rank}(\Gamma_c(h_0)) = d_c - 1$. This condition can be checked by exact finite-dimensional linear algebra on local patches. It fails for commuting support families and may fail for overly constrained sparse families, but it holds for sufficiently rich finite-range Pauli families.

Let $\mu_c$ be a local coefficient law on $\mathbb R^{\mathcal U_c}$ with a density with respect to Lebesgue measure. Under Assumption~\ref{ass:geo-structural-nondeg}, we have
\begin{align}
\ker \Gamma_c(h) = \operatorname{span}\{h^{(\mathcal{U}_c)}\}
\quad \mu_c\textnormal{-almost surely}.
\label{eq:geo-local-kernel}
\end{align}
Indeed, since $S_c$ is a nonzero polynomial, its zero set has Lebesgue measure zero. Therefore, for $\mu_c$-almost every local coefficient vector, $S_c(h^{(\mathcal U_c)}) \neq 0$, and hence $\operatorname{rank}(\Gamma_c(h)) = d_c - 1$. Also, $h^{(\mathcal U_c)} \neq 0$ $\mu_c$-almost surely, and $\Gamma_c(h)h^{(\mathcal U_c)} = 0$. Thus the kernel is exactly the one-dimensional span of $h^{(\mathcal U_c)}$.

\subsubsection{Local moment lower bounds}

The local small-ball estimate below requires a quantitative lower bound on $\|S_c\|_{L^1}$. We give two ways to obtain such a bound. The first applies to coefficient distributions whose local densities are bounded below on a fixed full-dimensional box, such as independent uniform variables and independent Gaussians at a fixed scale. The second applies to Gaussian smoothing around an arbitrary center.

First we clear the fixed rational denominators in $\Gamma_c(h)$. If $w = u \oplus v$ with $u, v \in \mathcal{V}_{k,R}$ anticommuting, then $P_u$ and $P_v$ overlap on at least one site, and hence
\begin{align}
1 \leq |\operatorname{supp}(P_w)| \leq 2k - 1.
\end{align}
Let us define
\begin{align}
L_\star := \operatorname{lcm}\{1, 2, \ldots, 2k - 1\}
\end{align}
as well as
\begin{align}
\widetilde{\Gamma}_c(h) := L_\star \Gamma_c(h), \quad \widetilde{S}_c(h) := \sum_{j = 1}^{d_c} \det\big(\widetilde{\Gamma}_c(h)^{(j)}\big) = L_\star^{d_c - 1} S_c(h).
\end{align}
Then $\widetilde{S}_c$ has integer coefficients and is homogeneous of degree $2(d_c - 1)$. Equivalently,
\begin{equation}\label{eq:Sc-homogeneity}
S_c = L_\star^{-(d_c - 1)} \widetilde{S}_c    
\end{equation}
is homogeneous of degree $2(d_c - 1)$.

For fixed $(\Delta,k,R)$, only finitely many rooted local support types can occur. Hence, for any fixed $r > 0$, Assumption~\ref{ass:geo-structural-nondeg} gives a uniform constant $\gamma_{\mathrm{loc}}(r) > 0$ such that
\begin{align}
\int_{[-r,r]^{d_c}} \widetilde{S}_c(x)\,dx \geq \gamma_{\mathrm{loc}}(r) \quad \textnormal{for every } c.
\label{eq:finite-local-mass-bound}
\end{align}
Indeed, each local type determines a nonzero polynomial $\widetilde S_c$ with integer coefficients, and $\widetilde S_c \geq 0$. Therefore the integral is positive for each type. Since the number of possible types is finite, the minimum of these positive integrals is positive. Assumption~\ref{ass:geo-structural-nondeg} makes each corresponding $\widetilde{S}_c$ a nonzero polynomial. Since $\widetilde{S}_c \geq 0$, each integral above is positive, and the minimum over the finite list is positive.

\begin{assumption}[Uniform local density lower bound]
\label{ass:geo-flat-core}
For every $c$, let $\mu_c$ denote the marginal law of the local coefficient vector $h^{(\mathcal U_c)} \in \mathbb R^{d_c}$, and assume that $\mu_c$ has a log-concave density $f_c$ with respect to Lebesgue measure. There exist constants $\kappa_0,r_0 > 0$, independent of $n$ and $c$, and a scale $s_n \in (0,1]$ such that
\begin{align}
f_c(x) \geq \kappa_0 s_n^{-d_c}
\quad
\textnormal{for all } x \in [-r_0 s_n,r_0 s_n]^{d_c}.
\label{eq:geo-flat-core}
\end{align}
\end{assumption}

This assumption says that every local coefficient law has a full-dimensional core, of side length proportional to $s_n$, on which its density is bounded below at the natural scale $s_n^{-d_c}$. The log-concavity assumption is used later in the Carbery--Wright small-ball estimate, while the lower bound \eqref{eq:geo-flat-core} is used in Lemma~\ref{lem:geo-flat-core-L1} to obtain a uniform lower bound on $\|S_c\|_{L^1(\mu_c)}$.

\begin{lemma}[Local moment bound from a local density lower bound]
\label{lem:geo-flat-core-L1}
Assume Assumptions~\ref{ass:geo-structural-nondeg} and~\ref{ass:geo-flat-core}. Then there exists $\eta_{\mathrm{dens}} > 0$, depending only on the finite list of local support types and on $\kappa_0, r_0$, such that
\begin{align}
\|S_c\|_{L^1(\mu_c)} \geq \eta_{\mathrm{dens}} s_n^{D_\star}, \quad D_\star := 2(d_{\mathrm{loc}} - 1),
\label{eq:density-lower-bound-L1}
\end{align}
for every $c$.
\end{lemma}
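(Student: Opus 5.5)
The bound follows from three facts: the density lower bound on the core, homogeneity of $S_c$, and the finite-type mass bound \eqref{eq:finite-local-mass-bound}. Since $S_c\ge 0$ (it is a sum of principal minors of a positive semidefinite matrix), we have $\|S_c\|_{L^1(\mu_c)}=\int S_c\,d\mu_c$. We restrict this integral to the core $Q:=[-r_0 s_n,r_0 s_n]^{d_c}$ and use Assumption~\ref{ass:geo-flat-core}:
\begin{align}
\|S_c\|_{L^1(\mu_c)} \ge \int_{Q} S_c(x) f_c(x)\,dx \ge \kappa_0 s_n^{-d_c}\int_{Q} S_c(x)\,dx .
\end{align}

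Next we rescale to a fixed box. We substitute $x=s_n y$ with $y\in[-r_0,r_0]^{d_c}$, so $dx=s_n^{d_c}dy$. By the homogeneity \eqref{eq:Sc-homogeneity}, $S_c(s_n y)=s_n^{2(d_c-1)}S_c(y)=s_n^{2(d_c-1)}L_\star^{-(d_c-1)}\widetilde S_c(y)$. The factors $s_n^{\pm d_c}$ cancel, which gives
\begin{align}
\|S_c\|_{L^1(\mu_c)} \ge \kappa_0\, L_\star^{-(d_c-1)}\, s_n^{2(d_c-1)} \int_{[-r_0,r_0]^{d_c}} \widetilde S_c(y)\,dy \ge \kappa_0\, L_\star^{-(d_{\mathrm{loc}}-1)}\,\gamma_{\mathrm{loc}}(r_0)\, s_n^{2(d_c-1)} ,
\end{align}
where the last step uses \eqref{eq:finite-local-mass-bound} with $r=r_0$ and $L_\star\ge1$, $d_c\le d_{\mathrm{loc}}$.

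Finally, because $s_n\in(0,1]$ and $2(d_c-1)\le D_\star$, we have $s_n^{2(d_c-1)}\ge s_n^{D_\star}$. This proves \eqref{eq:density-lower-bound-L1} with $\eta_{\mathrm{dens}}:=\kappa_0 L_\star^{-(d_{\mathrm{loc}}-1)}\gamma_{\mathrm{loc}}(r_0)$. This constant depends only on $\kappa_0,r_0$ and the finite list of local types, which determines $k$ (hence $L_\star$), $d_{\mathrm{loc}}$ and $\gamma_{\mathrm{loc}}$.

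The only step that needs care is the uniformity of $\gamma_{\mathrm{loc}}(r_0)$ over all sites $c$. This rests on the finiteness of rooted local types for fixed $(\Delta,k,R)$, together with the fact that $\widetilde S_c$ depends only on the type, up to relabeling coordinates, which preserves the box integral. Each such integral is positive because $\widetilde S_c\ge0$ is a nonzero polynomial by Assumption~\ref{ass:geo-structural-nondeg}, and a nonzero polynomial cannot vanish on an open box. Log-concavity is not needed here.
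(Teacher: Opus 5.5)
Your proof is correct and follows the paper's argument: you restrict to the core, apply the density lower bound, rescale using the degree-$2(d_c-1)$ homogeneity of $S_c$, invoke \eqref{eq:finite-local-mass-bound} with $r=r_0$, and finish with $s_n\le 1$. Your explicit constant $\eta_{\mathrm{dens}}=\kappa_0 L_\star^{-(d_{\mathrm{loc}}-1)}\gamma_{\mathrm{loc}}(r_0)$ simply makes concrete the constant the paper leaves implicit.
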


\begin{proof}
Since $S_c \geq 0$, the $L^1$-norm is just its expectation. By the local density lower bound in Assumption~\ref{ass:geo-flat-core} and the degree-$2(d_c - 1)$ homogeneity of $S_c$ (\eqref{eq:Sc-homogeneity}),
\begin{align}
\|S_c\|_{L^1(\mu_c)} &\geq \kappa_0 s_n^{-d_c} \int_{[-r_0 s_n, r_0 s_n]^{d_c}} S_c(x)\,dx \\
&= \kappa_0 s_n^{2(d_c - 1)} \int_{[-r_0, r_0]^{d_c}} S_c(z)\,dz.
\end{align}
Since $S_c = L_\star^{-(d_c - 1)} \widetilde{S}_c$, and using \eqref{eq:finite-local-mass-bound} with $r = r_0$ gives a positive lower bound for the last integral uniformly in $c$, we obtain
\begin{align}
\|S_c\|_{L^1(\mu_c)} \geq \eta_{\mathrm{dens}} s_n^{2(d_c - 1)}.
\end{align}
Because $s_n \in (0,1]$ and $2(d_c - 1) \leq D_\star$, this implies \eqref{eq:density-lower-bound-L1}.
\end{proof}

The following variant is the form used for smoothed analysis.

\begin{lemma}[Local moment bound under Gaussian smoothing]
\label{lem:geo-smoothed-L1}
Assume Assumption~\ref{ass:geo-structural-nondeg}. Let $x_0^{(c)} \in \mathbb{R}^{d_c}$ be arbitrary, and let
\begin{align}
X_c = x_0^{(c)} + \sigma Z_c, \quad Z_c \sim \mathcal{N}(0, I_{d_c}), \quad 0 < \sigma \leq 1.
\end{align}
Then there exists $\eta_{\mathrm{sm}} > 0$, depending only on the finite list of local support types, such that
\begin{align}
\mathbb{E}\,S_c(X_c) \geq \eta_{\mathrm{sm}} \sigma^{D_\star}
\label{eq:smoothed-L1-bound}
\end{align}
for every $c$ and every center $x_0^{(c)}$.
\end{lemma}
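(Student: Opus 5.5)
The plan is to reduce the Gaussian expectation to an integral of $S_c$ over a fixed box around the center, using a lower bound on the Gaussian density there together with homogeneity. The difficulty is that the center $x_0^{(c)}$ is arbitrary. A box of side $\sigma$ around $x_0^{(c)}$ will in general not contain the origin, so homogeneity alone does not give the bound. We therefore need a lower bound on $\int_{x_0+\sigma[-1,1]^{d_c}} S_c$ that holds uniformly in $x_0$. That is the main obstacle.

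\emph{Step 1: reduction to a box.} The density of $X_c$ at $x$ is $(2\pi\sigma^2)^{-d_c/2}e^{-\|x-x_0\|^2/(2\sigma^2)}$. On the box $x_0+\sigma[-1,1]^{d_c}$ this density is at least $(2\pi)^{-d_c/2}e^{-d_c/2}\sigma^{-d_c}$. Since $S_c\ge 0$, we get
\begin{align}
\mathbb{E}\,S_c(X_c)\ \ge\ c_{d_c}\,\sigma^{-d_c}\int_{x_0+\sigma[-1,1]^{d_c}} S_c(x)\,dx = c_{d_c}\int_{[-1,1]^{d_c}} S_c(x_0+\sigma z)\,dz .
\end{align}

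\emph{Step 2: uniform lower bound on the shifted integral.} Define $p(z):=S_c(x_0+\sigma z)$. This is a polynomial in $z$ of degree $D_c:=2(d_c-1)$. Its top homogeneous component is $\sigma^{D_c}S_c(z)$, because $S_c$ is homogeneous of degree $D_c$ by \eqref{eq:Sc-homogeneity}. On the finite-dimensional space of polynomials of degree at most $D_c$ in $d_c$ variables, all norms are equivalent. In particular, the $L^1([-1,1]^{d_c})$ norm dominates a constant times the coefficient norm of the top-degree part. Extracting the top-degree part is a linear map on this space, so it is bounded, and it gives
\begin{align}
\int_{[-1,1]^{d_c}}|p(z)|\,dz\ \ge\ C_{d_c,D_c}^{-1}\,\|\sigma^{D_c}S_c\|_{\mathrm{coef}} = C_{d_c,D_c}^{-1}\sigma^{D_c}\|S_c\|_{\mathrm{coef}} .
\end{align}
Because $S_c\ge0$ we have $|p|=p$, so the left side is exactly the integral from Step 1. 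This bound is independent of $x_0$, which resolves the obstacle. If one prefers to avoid the norm-equivalence argument, an alternative is Markov-type derivative extraction: $D_c$-th order finite differences of $p$ on a grid in $[-1,1]^{d_c}$ recover the top coefficients, and each difference is bounded by $\sup|p|$. Combined with Remez/Bernstein-type comparisons between $\sup$ and $L^1$ for polynomials of bounded degree, this gives the same conclusion.

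\emph{Step 3: uniformity and the final exponent.} By Assumption~\ref{ass:geo-structural-nondeg}, $S_c\not\equiv0$, so $\|S_c\|_{\mathrm{coef}}>0$. For fixed $(\Delta,k,R)$ there are only finitely many rooted local types, so the constants $c_{d_c}$, $C_{d_c,D_c}$ and $\|S_c\|_{\mathrm{coef}}$ take finitely many values. Their minimum over types gives $\eta_{\mathrm{sm}}>0$ with $\mathbb{E}\,S_c(X_c)\ge\eta_{\mathrm{sm}}\sigma^{D_c}$. Finally, $\sigma\le1$ and $D_c\le D_\star$ give $\sigma^{D_c}\ge\sigma^{D_\star}$, which yields \eqref{eq:smoothed-L1-bound}.
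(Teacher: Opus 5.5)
Your proposal is correct and follows essentially the same route as the paper. In both, the Gaussian density is bounded below on a unit box in the rescaled variable $z$. The polynomial $z\mapsto S_c(x_0+\sigma z)$ has top-degree part $\sigma^{D_c}S_c(z)$, which does not depend on $x_0$, so a finite-dimensionality argument gives a bound uniform in the center: you use norm equivalence with the bounded top-degree projection, the paper uses the positive $L^1([-1,1]^{d_c})$-distance of $\widetilde S_c$ from $\mathcal P_{<D_c}$. Minimizing over the finitely many local types then gives $\eta_{\mathrm{sm}}$.
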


\begin{proof}
It suffices to prove the claim for $\widetilde{S}_c$, and then divide by $L_\star^{d_c - 1}$. Let $D_c = 2(d_c - 1)$. For fixed $x_0$ and $\sigma$, define
\begin{align}
P_{x_0,\sigma}(z) := \sigma^{-D_c} \widetilde{S}_c(x_0 + \sigma z).
\end{align}
The degree-$D_c$ homogeneous part of $P_{x_0,\sigma}$ is $\widetilde{S}_c(z)$; the remaining lower-degree terms may depend on $x_0$ and $\sigma$. Therefore $P_{x_0,\sigma}$ lies in the affine space
\begin{align}
\widetilde{S}_c + \mathcal{P}_{<D_c},
\end{align}
where $\mathcal{P}_{<D_c}$ is the finite-dimensional space of polynomials of degree at most $D_c - 1$.

Since $\widetilde{S}_c \notin \mathcal{P}_{<D_c}$, its distance from $\mathcal{P}_{<D_c}$ in the $L^1([-1,1]^{d_c})$ norm is positive. Taking the minimum over the finite list of possible local polynomials gives a constant $\theta > 0$ such that
\begin{align}
\int_{[-1,1]^{d_c}} |P_{x_0,\sigma}(z)|\,dz \geq \theta
\end{align}
for every $c, x_0, \sigma$.

Since $\widetilde{S}_c \geq 0$, the polynomial $P_{x_0,\sigma}$ is nonnegative. The standard Gaussian density is bounded below by a positive constant on $[-1,1]^{d_c}$, uniformly because $d_c \leq d_{\mathrm{loc}}$. Hence
\begin{align}
\mathbb{E}\,\widetilde{S}_c(x_0 + \sigma Z_c) = \sigma^{D_c}\mathbb{E}\,P_{x_0,\sigma}(Z_c) \geq \theta' \sigma^{D_c}
\end{align}
for some $\theta' > 0$. Since $D_c \leq D_\star$ and $\sigma \in (0,1]$, this is at least $\theta' \sigma^{D_\star}$. Dividing by $L_\star^{d_c - 1}$ and taking the minimum over $d_c \leq d_{\mathrm{loc}}$ proves \eqref{eq:smoothed-L1-bound}.
\end{proof}

\subsubsection{Local small-ball bound}

We now turn the lower moment bound for $S_c$ into a lower-tail bound for the local spectral gap. The point is simple: $S_c$ is the product of the nonzero local eigenvalues. Thus, if the second eigenvalue of $\Gamma_c(h)$ is small and the operator norm of $\Gamma_c(h)$ is bounded, then $S_c$ must also be small. Carbery--Wright then controls the probability of this small-value event.

We use the Carbery--Wright inequality in the following form \cite[Theorem 8]{CW01}. If $\mu$ is log-concave and $Q$ is a nonzero polynomial of degree at most $D$, then
\begin{align}
\mu(\{ |Q| \leq \varepsilon \}) \leq C_{\mathrm{CW}} D \left(\frac{\varepsilon}{\|Q\|_{L^1(\mu)}}\right)^{1/D}.
\label{eq:carbery-wright}
\end{align}

For $H_{\max} > 0$, define the coefficient upper-bound event
\begin{align}
\mathcal{E}_{\max}(H_{\max}) := \left\{ \max_{u \in \mathcal{V}_{k,R}} |h_u| \leq H_{\max} \right\}.
\end{align}
Also define
\begin{align}
M_\star := \max\{1, d_{\mathrm{loc}} w_{\mathrm{loc}} H_{\max}^2\}.
\end{align}

\begin{lemma}[Local small-ball estimate]
\label{lem:geo-local-smallball}
Assume Assumption~\ref{ass:geo-structural-nondeg}. Suppose that the local coefficient laws have log-concave densities and satisfy the uniform moment lower bound
\begin{align}
\|S_c\|_{L^1(\mu_c)} \geq \eta_0 \quad \textnormal{for all } c.
\label{eq:local-L1-eta0}
\end{align}
Then, for every $\xi > 0$,
\begin{align}
\mathbb{P}\!\left[\mathcal{E}_{\max}(H_{\max}) \cap \{ \exists c \in \Lambda : \lambda_2(\Gamma_c(h)) \leq \xi \}\right] \leq n C_{\mathrm{CW}} D_\star \min\,\left\{1, \left(\frac{\xi M_\star^{d_{\mathrm{loc}} - 2}}{\eta_0}\right)^{1/D_\star}\right\}.
\label{eq:geo-local-smallball}
\end{align}
Here $\lambda_2(\Gamma_c(h))$ denotes the second eigenvalue of the restricted $d_c \times d_c$ matrix on $\mathcal{U}_c$, and $D_\star = 2(d_{\mathrm{loc}} - 1)$.
\end{lemma}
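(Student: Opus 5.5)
The proof is a union bound over $c\in\Lambda$ combined with a deterministic inclusion of events for each fixed $c$. Fix $c$ and work on $\mathcal E_{\max}(H_{\max})$. We first bound the operator norm of the restricted matrix $\Gamma_c(h)$. Each row vector $r_w(h)$ with $w\in\mathcal W_c$ is supported on $\mathcal U_c$ by Lemma~\ref{lem:geo-local-decomposition}. For fixed $u$ and $w$, the entry $B(h)_{w,u}$ involves at most one $v$, since $u\oplus v=w$ determines $P_v$ up to phase. Hence every entry is at most $H_{\max}$ in absolute value, and $\|r_w(h)\|_2^2\le d_{\mathrm{loc}}H_{\max}^2$. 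Using the weights $1/|\operatorname{supp}(P_w)|\le 1$ and the triangle inequality,
\begin{align}
\|\Gamma_c(h)\|_{\mathrm{op}}\le \sum_{w\in\mathcal W_c}\|r_w(h)\|_2^2\le w_{\mathrm{loc}}d_{\mathrm{loc}}H_{\max}^2\le M_\star .
\end{align}

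Next we relate $\lambda_2$ to $S_c$. Since $\operatorname{rank}\Gamma_c(h)\le d_c-1$, we have $\lambda_1(\Gamma_c(h))=0$. Because $S_c$ is the degree-$(d_c-1)$ elementary symmetric polynomial of the eigenvalues, $S_c=\prod_{j\ge2}\lambda_j$ in every case; if the rank is smaller, both sides vanish. Therefore, on $\mathcal E_{\max}$ and on the event $\lambda_2\le\xi$,
\begin{align}
0\le S_c(h^{(\mathcal U_c)})\le \xi\, M_\star^{d_c-2}\le \xi\,M_\star^{d_{\mathrm{loc}}-2},
\end{align}
where the last step uses $M_\star\ge1$ and $d_c\le d_{\mathrm{loc}}$. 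If $d_c=2$, the product has a single factor, and the bound $S_c\le\xi$ is consistent with this.

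We then apply Carbery--Wright \eqref{eq:carbery-wright} to $Q=S_c$ under the log-concave law $\mu_c$. Here $Q$ is a nonzero polynomial by Assumption~\ref{ass:geo-structural-nondeg}, of degree $2(d_c-1)\le D_\star$, so we may take $D=D_\star$. With $\varepsilon=\xi M_\star^{d_{\mathrm{loc}}-2}$ and $\|S_c\|_{L^1(\mu_c)}\ge\eta_0$, this gives
\begin{align}
\mu_c(S_c\le\varepsilon)\le C_{\mathrm{CW}}D_\star\,(\varepsilon/\eta_0)^{1/D_\star}.
\end{align}
The resulting probability bound for the intersection event follows from the inclusion above. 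For the $\min\{1,\cdot\}$ we use the trivial bound: each per-site probability is at most $1$. Since we may assume $C_{\mathrm{CW}}D_\star\ge1$, this gives $\min\{1,(\cdot)^{1/D_\star}\}$ after factoring out $C_{\mathrm{CW}}D_\star$. Summing over the $n$ sites $c\in\Lambda$ yields \eqref{eq:geo-local-smallball}.

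The main care point is the per-coordinate entry bound for $B(h)$, namely that $v$ is unique given $(u,w)$. The other is the degree bookkeeping in Carbery--Wright: using the upper bound $D_\star$ rather than the exact degree $2(d_c-1)$. This is valid because the inequality only requires the degree to be at most $D$. The edge case where the rank drops is harmless, since then $S_c=0$ and the event is still contained in $\{S_c\le\varepsilon\}$.
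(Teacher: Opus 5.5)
Your proposal is correct and follows the paper's proof essentially step for step. You bound $\|\Gamma_c(h)\|_{\mathrm{op}}\le d_{\mathrm{loc}}w_{\mathrm{loc}}H_{\max}^2\le M_\star$ by summing the rank-one pieces, where the paper writes $\Gamma_c=B_c^\top D_c^{\mathcal W}B_c$ and uses $\|B_c\|_F^2$, which is the same quantity. The rest matches: the identity $S_c=\prod_{j\ge2}\lambda_j$, the inclusion into $\{S_c\le \xi M_\star^{d_{\mathrm{loc}}-2}\}$, Carbery--Wright with degree $D_\star$, and a union bound over $c$, with the same implicit reliance on $C_{\mathrm{CW}}D_\star\ge 1$ for the $\min\{1,\cdot\}$.
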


\begin{proof}
First we prove a deterministic norm bound on $\mathcal{E}_{\max}(H_{\max})$. Fix a site $c \in \Lambda$. Let $B_c(h)$ be the submatrix of $B(h)$ with rows indexed by $\mathcal{W}_c$ and columns indexed by $\mathcal{U}_c$. Then
\begin{align}
\Gamma_c(h) = B_c(h)^\top D_c^{\mathcal W} B_c(h),
\end{align}
where $D_c^{\mathcal W}$ is the diagonal matrix indexed by $w \in \mathcal W_c$ with entries
\begin{equation}
(D_c^{\mathcal W})_{w,w} := \frac{1}{|\operatorname{supp}(P_w)|}.
\end{equation}
Since every $w \in \mathcal W_c$ is a nonidentity Pauli string, $|\operatorname{supp}(P_w)| \geq 1$, and hence $0 \preceq D_c^{\mathcal W} \preceq I$. Therefore
\begin{align}
\|\Gamma_c(h)\|_{\mathrm{op}} \leq \|B_c(h)\|_{\mathrm{op}}^2 \leq \|B_c(h)\|_F^2.
\end{align}
For fixed $w$ and $u$, the equation $u \oplus v = w$ determines $v$ uniquely. Thus, from \eqref{eq:def-B-matrix}, each entry of $B_c(h)$ is either $0$ or $\pm h_v$. On $\mathcal{E}_{\max}(H_{\max})$, each entry therefore has magnitude at most $H_{\max}$. Since $B_c(h)$ has at most $d_{\mathrm{loc}} w_{\mathrm{loc}}$ entries, we obtain
\begin{align}
\|\Gamma_c(h)\|_{\mathrm{op}} \leq d_{\mathrm{loc}} w_{\mathrm{loc}} H_{\max}^2 \leq M_\star.
\label{eq:local-opnorm-bound}
\end{align}

Since $\Gamma_c(h) h^{(\mathcal{U}_c)} = 0$, the restricted matrix $\Gamma_c(h)$ has at least one zero eigenvalue. Write its eigenvalues as
\begin{align}
0 = \lambda_1(\Gamma_c(h)) \leq \lambda_2(\Gamma_c(h)) \leq \cdots \leq \lambda_{d_c}(\Gamma_c(h)).
\end{align}
As $S_c$ is the sum of the $(d_c - 1)$-principal minors, it is the elementary symmetric polynomial of degree $d_c - 1$ in these eigenvalues. Since $\lambda_1 = 0$, this gives
\begin{align}
S_c(h^{(\mathcal{U}_c)}) = \prod_{j = 2}^{d_c} \lambda_j(\Gamma_c(h)).
\label{eq:Sc-product-all-cases}
\end{align}
This identity also covers the rank-deficient case, in which both sides vanish.

Therefore, on $\mathcal{E}_{\max}(H_{\max})$, the condition $\lambda_2(\Gamma_c(h)) \leq \xi$ implies
\begin{align}
S_c(h^{(\mathcal{U}_c)}) \leq \xi M_\star^{d_c - 2} \leq \xi M_\star^{d_{\mathrm{loc}} - 2},
\end{align}
where we used $M_\star \geq 1$ and $d_c \leq d_{\mathrm{loc}}$. Thus
\begin{align}
\mathcal{E}_{\max}(H_{\max}) \cap \{ \lambda_2(\Gamma_c(h)) \leq \xi \} \subseteq \{ S_c \leq \xi M_\star^{d_{\mathrm{loc}} - 2} \}.
\label{eq:lambda-small-implies-Sc-small}
\end{align}

By Assumption~\ref{ass:geo-structural-nondeg}, $S_c$ is a nonzero polynomial, and its degree is $2(d_c - 1) \leq D_\star$. Applying Carbery--Wright to $S_c$, using \eqref{eq:local-L1-eta0}, gives
\begin{align}
\mathbb{P}\left(S_c \leq \varepsilon\right) \leq C_{\mathrm{CW}} D_\star \min\left\{1, \left(\frac{\varepsilon}{\eta_0}\right)^{1/D_\star}\right\}.
\end{align}
Here the minimum with $1$ makes the bound trivial when $\varepsilon \geq \eta_0$, and when $\varepsilon < \eta_0$, replacing the local degree $2(d_c - 1)$ by $D_\star$ only weakens the estimate.

Finally, taking $\varepsilon = \xi M_\star^{d_{\mathrm{loc}} - 2}$ and applying a union bound over $c \in \Lambda$ proves \eqref{eq:geo-local-smallball}.
\end{proof}

\subsubsection{Patching local gaps to a global gap}

We now show how local spectral gaps imply a global commutator gap. The intuition is as follows. If $x$ has small local error $y_c$ on every neighborhood $\mathcal{U}_c$\,, then locally $x$ is close to a scalar multiple $\beta_c h$. Neighboring neighborhoods overlap on anchor blocks, so the scalars $\beta_c$ must vary slowly across the interaction graph. The graph Poincar\'e inequality then forces the $\beta_c$'s to be nearly constant. Since $x \perp h$, this is incompatible with $\|x\|_2 = 1$, unless the local errors $y_c$ have nontrivial total mass.

Let $L_G$ be the unnormalized graph Laplacian of $G$. Since $G$ is connected, for every $\beta \in \mathbb{R}^{\Lambda}$,
\begin{align}
\sum_{q \in \Lambda}(\beta_q - \bar{\beta})^2 \leq \frac{1}{\lambda_2(L_G)} \sum_{\{p, q\} \in \textsf{E}}(\beta_p - \beta_q)^2, \quad \bar{\beta} := \frac{1}{n}\sum_{q \in \Lambda}\beta_q.
\label{eq:geo-poincare}
\end{align}
Moreover, for every connected graph on $n \geq 2$ vertices, $\lambda_2(L_G) \geq \frac{4}{n^2}$.  Then we have the following.

\begin{lemma}[Patching inequality]
\label{lem:geo-patching}
Let $x \in \mathbb{R}^{|\mathcal{V}_{k,R}|}$ satisfy $x \perp h$ and $\|x\|_2 = 1$. Assume $h_{\min} > 0$. For each $c \in \Lambda$, decompose
\begin{align}
x^{(\mathcal{U}_c)} = \beta_c h^{(\mathcal{U}_c)} + y_c\,, \quad y_c \perp h^{(\mathcal{U}_c)}.
\label{eq:geo-local-decomp}
\end{align}
Then
\begin{align}
\sum_{c \in \Lambda}\|y_c\|_2^2 \geq \frac{\lambda_2(L_G) h_{\min}^2}{2 \lambda_2(L_G) h_{\min}^2 + 4 \Delta d_{\mathrm{blk}} h_{\max}^2}\,,
\label{eq:geo-patching-strong}
\end{align}
and in particular,
\begin{align}
\sum_{c \in \Lambda}\|y_c\|_2^2 \geq \frac{h_{\min}^2}{2 h_{\min}^2 + n^2 \Delta d_{\mathrm{blk}} h_{\max}^2}.
\label{eq:geo-patching-simple}
\end{align}
\end{lemma}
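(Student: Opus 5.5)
We argue by contradiction-free direct estimation: set $Y := \sum_c \|y_c\|_2^2$ and show that if $Y$ is small then the local scalars $\beta_c$ are nearly equal, which together with $x\perp h$ forces $\|x\|_2$ to be small, contradicting $\|x\|_2=1$.

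\emph{Step 1 (neighbouring scalars are close).} For an edge $\{p,q\}\in\mathsf E$, the anchor block $\mathcal V_q$ lies in both $\mathcal U_p$ and $\mathcal U_q$. Indeed, every $u\in\mathcal V_q$ has $q\in\operatorname{supp}(P_u)$, and $q\in B(p,R)\cap B(q,R)$ because $R\ge 1$. On the coordinates of $\mathcal V_q$, both decompositions \eqref{eq:geo-local-decomp} hold, so
\[
(\beta_p-\beta_q)\,h^{(\mathcal V_q)} = y_q^{(\mathcal V_q)} - y_p^{(\mathcal V_q)} .
\]
Since $\mathcal V_q\neq\varnothing$ (the nonempty-block assumption) and $h_{\min}>0$, we have $\|h^{(\mathcal V_q)}\|_2^2\ge h_{\min}^2$. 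Hence
\[
h_{\min}^2(\beta_p-\beta_q)^2 \le 2\|y_p\|_2^2+2\|y_q\|_2^2 .
\]
Summing over edges, each vertex appears in at most $\Delta$ edges, which gives $h_{\min}^2\sum_{\mathsf E}(\beta_p-\beta_q)^2\le 2\Delta Y$. The Poincar\'e inequality \eqref{eq:geo-poincare} then yields
\[
\sum_q(\beta_q-\bar\beta)^2\le \frac{2\Delta Y}{\lambda_2(L_G)h_{\min}^2}.
\]

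\emph{Step 2 (reconstructing $x$ from anchor blocks).} The blocks $\mathcal V_q$ partition $\mathcal V_{k,R}$, and each $\mathcal V_q\subseteq\mathcal U_q$. Restricting the decomposition at $c=q$ to $\mathcal V_q$ gives $x^{(\mathcal V_q)}=\beta_q h^{(\mathcal V_q)}+y_q^{(\mathcal V_q)}$. Write
\[
x=\bar\beta h + z,\qquad z^{(\mathcal V_q)}=(\beta_q-\bar\beta)h^{(\mathcal V_q)}+y_q^{(\mathcal V_q)} .
\]
Using $|\mathcal V_q|\le d_{\mathrm{blk}}$, we get $\|h^{(\mathcal V_q)}\|_2^2\le d_{\mathrm{blk}}h_{\max}^2$. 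Therefore
\[
\|z\|_2^2\le 2d_{\mathrm{blk}}h_{\max}^2\sum_q(\beta_q-\bar\beta)^2+2Y .
\]
Because $x\perp h$, the component $\bar\beta h$ is irrelevant: $\|x\|_2^2=\langle x,z\rangle\le\|z\|_2$, so $\|x\|_2\le\|z\|_2$. (Equivalently, $x$ is the projection of $z$ onto $h^\perp$.) Combining this with Step 1,
\[
1\le 2Y+\frac{4\Delta d_{\mathrm{blk}}h_{\max}^2}{\lambda_2(L_G)h_{\min}^2}\,Y .
\]
Solving for $Y$ gives exactly \eqref{eq:geo-patching-strong}. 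Substituting $\lambda_2(L_G)\ge 4/n^2$ and using monotonicity of $\lambda\mapsto \lambda a/(2\lambda a+b)$ gives \eqref{eq:geo-patching-simple}.

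\emph{Main obstacle.} The delicate point is Step 1: justifying that the anchor block $\mathcal V_q$ sits inside the overlap $\mathcal U_p\cap\mathcal U_q$, and that it is nonempty with $h$ bounded below there. This is exactly where the support-respecting anchoring, the nonempty-block hypothesis, and $R\ge1$ are used. The remaining steps are constant bookkeeping, matching the factors $2$ and $4\Delta d_{\mathrm{blk}}$ in the stated bound.
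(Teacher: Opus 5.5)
Your proof is correct and follows essentially the same route as the paper: both compare neighbouring scalars on a shared nonempty anchor block (the paper uses $\mathcal V_p\subseteq\mathcal U_q$ where you use $\mathcal V_q\subseteq\mathcal U_p$), apply the graph Poincar\'e inequality, and bound $x-\bar\beta h$ block by block over the anchor partition. The only cosmetic difference is that the paper uses the Pythagorean identity $\|x-\bar\beta h\|_2^2=1+\bar\beta^2\|h\|_2^2\ge 1$, where you apply Cauchy--Schwarz to $\langle x,z\rangle$ and obtain the same inequality.
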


\begin{proof}
The decomposition in \eqref{eq:geo-local-decomp} is well-defined because $h_{\min} > 0$, so $h^{(\mathcal{U}_c)} \neq 0$. Let $\bar{\beta} \in \mathbb{R}$. Since $x \perp h$,
\begin{align}
\|x - \bar{\beta}h\|_2^2 = \|x\|_2^2 + \bar{\beta}^2\|h\|_2^2 \geq 1.
\end{align}
Using the anchor partition,
\begin{align}
\|x - \bar{\beta}h\|_2^2 = \sum_{q \in \Lambda}\|x^{(\mathcal{V}_q)} - \bar{\beta}h^{(\mathcal{V}_q)}\|_2^2.
\end{align}
Restricting \eqref{eq:geo-local-decomp} with $c = q$ to the block $\mathcal{V}_q \subseteq \mathcal{U}_q$, we have
\begin{align}
x^{(\mathcal{V}_q)} - \bar{\beta}h^{(\mathcal{V}_q)} = (\beta_q - \bar{\beta})h^{(\mathcal{V}_q)} + y_q^{(\mathcal{V}_q)}.
\end{align}
Using $\|a + b\|_2^2 \leq 2\|a\|_2^2 + 2\|b\|_2^2$, $\|y_q^{(\mathcal{V}_q)}\|_2 \leq \|y_q\|_2$, and $\|h^{(\mathcal{V}_q)}\|_2^2 \leq d_{\mathrm{blk}} h_{\max}^2$, we get
\begin{align}
1 \leq 2 d_{\mathrm{blk}} h_{\max}^2 \sum_{q \in \Lambda}(\beta_q - \bar{\beta})^2 + 2 \sum_{q \in \Lambda}\|y_q\|_2^2.
\label{eq:geo-patch-step-one}
\end{align}
Choose $\bar{\beta} = \tfrac{1}{n}\sum_q \beta_q$, and apply \eqref{eq:geo-poincare}.

We next control the edge differences $\beta_p - \beta_q$ by the local errors. Fix an edge $\{p, q\} \in \textsf{E}$. Since $R \geq 1$, every Pauli string anchored at $p$ lies in $\mathcal{U}_q$, so $\mathcal{V}_p \subseteq \mathcal{U}_q$. Restricting the decompositions at $p$ and $q$ to $\mathcal{V}_p$ gives
\begin{align}
(\beta_p - \beta_q)h^{(\mathcal{V}_p)} = y_q^{(\mathcal{V}_p)} - y_p^{(\mathcal{V}_p)}.
\end{align}
Because $\mathcal{V}_p \neq \varnothing$ and $h_{\min} > 0$,
\begin{align}
\|h^{(\mathcal{V}_p)}\|_2^2 \geq h_{\min}^2.
\end{align}
Therefore
\begin{align}
(\beta_p - \beta_q)^2 \leq \frac{2}{h_{\min}^2}\left(\|y_p\|_2^2 + \|y_q\|_2^2\right).
\end{align}
Summing over edges and using that every site has degree at most $\Delta$,
\begin{align}
\sum_{\{p, q\} \in \textsf{E}}(\beta_p - \beta_q)^2 \leq \frac{2\Delta}{h_{\min}^2}\sum_{q \in \Lambda}\|y_q\|_2^2.
\label{eq:geo-edge-difference-control}
\end{align}
Combining \eqref{eq:geo-patch-step-one}, \eqref{eq:geo-poincare}, and \eqref{eq:geo-edge-difference-control}, we obtain
\begin{align}
1 \leq \left(2 + \frac{4\Delta d_{\mathrm{blk}} h_{\max}^2}{\lambda_2(L_G) h_{\min}^2}\right)\sum_{q \in \Lambda}\|y_q\|_2^2.
\end{align}
Rearranging proves \eqref{eq:geo-patching-strong}. 

It remains to derive the simplified form. Dividing the right-hand side of \eqref{eq:geo-patching-strong} by $\lambda_2(L_G)$ in numerator and denominator gives
\begin{align}
\frac{\lambda_2(L_G) h_{\min}^2}{2 \lambda_2(L_G) h_{\min}^2 + 4 \Delta d_{\mathrm{blk}} h_{\max}^2} = \frac{h_{\min}^2}{2 h_{\min}^2 + \frac{4 \Delta d_{\mathrm{blk}} h_{\max}^2}{\lambda_2(L_G)}}.
\end{align}
Using $\lambda_2(L_G) \geq 4/n^2$, we have
\begin{align}
\frac{4 \Delta d_{\mathrm{blk}} h_{\max}^2}{\lambda_2(L_G)} \leq n^2 \Delta d_{\mathrm{blk}} h_{\max}^2.
\end{align}
This proves \eqref{eq:geo-patching-simple}.
\end{proof}

\begin{proposition}[Deterministic geometric gap from local gaps]
\label{prop:deterministic-geometric-gap-from-local-gaps}
Assume the geometric setup above, including the existence of a support-respecting anchoring map with $\mathcal V_q \neq \varnothing$ for every $q \in \Lambda$, the local dimension condition \eqref{eq:geo-dc-at-least-two}, and $h_{\min} > 0$. Suppose that for some $\xi > 0$,
\begin{align}
\lambda_2(\Gamma_c(h)) \geq \xi \quad \textnormal{for every } c \in \Lambda,
\end{align}
where $\Gamma_c(h)$ is viewed as a $d_c \times d_c$ matrix on $\mathcal{U}_c$. Then
\begin{align}
\min_{\substack{A \in \mathcal{O}_{k,R}^{\mathrm{op}}\\ \tfrac{1}{d}\tr(A^2) = 1,\ \tr(AH) = 0}} \frac{1}{d}\|[A,H]\|_F^2 \geq 4\xi \,\frac{h_{\min}^2}{2 h_{\min}^2 + n^2 \Delta d_{\mathrm{blk}} h_{\max}^2}.
\label{eq:deterministic-geo-gap-local}
\end{align}
\end{proposition}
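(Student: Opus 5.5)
The plan is to combine the Rayleigh-quotient reformulation \eqref{eq:geo-gap-rayleigh}, the local decomposition of Lemma~\ref{lem:geo-local-decomposition}, and the patching inequality of Lemma~\ref{lem:geo-patching}. By \eqref{eq:geo-gap-rayleigh}, it suffices to show that every unit vector $x\perp h$ satisfies
\begin{align}
x^\top\Gamma(h)x \geq \xi\,\frac{h_{\min}^2}{2h_{\min}^2+n^2\Delta d_{\mathrm{blk}}h_{\max}^2}.
\end{align}
Multiplying by $4$ then gives \eqref{eq:deterministic-geo-gap-local}.

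Fix such an $x$. By Lemma~\ref{lem:geo-local-decomposition}, $\Gamma(h)=\sum_{c}\Gamma_c(h)$, and each $\Gamma_c(h)$ acts only on the coordinates $\mathcal U_c$. Hence
\begin{align}
x^\top\Gamma(h)x=\sum_{c\in\Lambda}(x^{(\mathcal U_c)})^\top\Gamma_c(h)\,x^{(\mathcal U_c)}.
\end{align}
For each $c$, decompose $x^{(\mathcal U_c)}=\beta_c h^{(\mathcal U_c)}+y_c$ with $y_c\perp h^{(\mathcal U_c)}$, exactly as in \eqref{eq:geo-local-decomp}. This is well defined because $h_{\min}>0$ forces $h^{(\mathcal U_c)}\neq 0$; note that $\mathcal U_c\supseteq\mathcal V_c\neq\varnothing$. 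Since $\Gamma_c(h)h^{(\mathcal U_c)}=0$ and $\Gamma_c(h)$ is symmetric, the $\beta_c$-component drops out entirely, and the local quadratic form reduces to $y_c^\top\Gamma_c(h)y_c$.

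The key local step is the bound $y_c^\top\Gamma_c(h)y_c\geq\xi\|y_c\|_2^2$. To justify it, observe that $h^{(\mathcal U_c)}$ is a null vector of the PSD matrix $\Gamma_c(h)$, so it can be taken as the first eigenvector. Because $\lambda_2(\Gamma_c(h))\geq\xi>0$, the kernel is exactly one-dimensional, and the orthogonal complement of $h^{(\mathcal U_c)}$ is spanned by eigenvectors with eigenvalues at least $\lambda_2\geq\xi$. The condition $d_c\geq 2$ from \eqref{eq:geo-dc-at-least-two} ensures that $\lambda_2$ is meaningful. This min-max argument is the only point needing care: the zero eigenvector must coincide with $h^{(\mathcal U_c)}$ rather than some other null direction, and the assumption $\lambda_2\geq\xi>0$ guarantees this.

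Summing over $c$ gives $x^\top\Gamma(h)x\geq\xi\sum_c\|y_c\|_2^2$. Applying the simplified patching bound \eqref{eq:geo-patching-simple} of Lemma~\ref{lem:geo-patching} then yields the claimed lower bound. That lemma's hypotheses are all available: nonempty anchor blocks, $R\geq1$, connectivity of $G$, and $h_{\min}>0$. The main conceptual work is already contained in Lemma~\ref{lem:geo-patching}, so this proposition is essentially bookkeeping; I expect no genuine obstacle beyond verifying the eigenvector identification above.
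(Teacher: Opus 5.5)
Your proposal is correct and follows the paper's proof essentially step for step. Both decompose $x^{(\mathcal U_c)}=\beta_c h^{(\mathcal U_c)}+y_c$, bound each local form below by $\xi\|y_c\|_2^2$ using $h^{(\mathcal U_c)}\in\ker\Gamma_c(h)$, sum via $\Gamma(h)=\sum_c\Gamma_c(h)$, and conclude with the simplified patching bound and the factor $4$ from the commutator identity.

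The eigenvector identification you flag needs no extra care. Any null vector of a PSD matrix is an eigenvector for $\lambda_1=0$, so its orthogonal complement is invariant and carries exactly the eigenvalues $\lambda_2,\dots,\lambda_{d_c}$, whether or not the kernel is one-dimensional.
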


\begin{proof}
Let $x \perp h$ and $\|x\|_2 = 1$. For each $c$, decompose
\begin{align}
x^{(\mathcal{U}_c)} = \beta_c h^{(\mathcal{U}_c)} + y_c, \quad y_c \perp h^{(\mathcal{U}_c)}.
\end{align}
Since $h^{(\mathcal{U}_c)} \in \ker \Gamma_c(h)$ and $\lambda_2(\Gamma_c(h)) \geq \xi$,
\begin{align}
(x^{(\mathcal{U}_c)})^\top\Gamma_c(h)x^{(\mathcal{U}_c)} \geq \xi\|y_c\|_2^2.
\end{align}
Summing over $c$ and using $\Gamma(h) = \sum_c \Gamma_c(h)$,
\begin{align}
x^\top\Gamma(h)x \geq \xi\sum_{c \in \Lambda}\|y_c\|_2^2.
\end{align}
The patching lemma gives
\begin{align}
x^\top\Gamma(h)x \geq \xi \frac{h_{\min}^2}{2 h_{\min}^2 + n^2 \Delta d_{\mathrm{blk}} h_{\max}^2}.
\end{align}
Taking the infimum over $x \perp h$, $\|x\|_2 = 1$, and using \eqref{eq:geo-global-commutator-identity}, proves the claim.
\end{proof}

\begin{theorem}[Probabilistic geometric commutator gap]
\label{thm:geo-global-gap}
Assume the geometric setup above, including:
\begin{enumerate}
\item $G$ is connected, $\deg(G) \leq \Delta = O(1)$, and $k, R = O(1)$ with $R \geq 1$;
\item there exists a support-respecting anchoring map $\alpha: \mathcal V_{k,R} \to \Lambda$ such that $\mathcal V_q \neq \varnothing$ for every $q$, and $d_c \geq 2$ for every $c$;
\item Assumption~\ref{ass:geo-structural-nondeg} holds;
\item the local coefficient laws have log-concave densities and satisfy $\|S_c\|_{L^1(\mu_c)} \geq \eta_0$ for every $c$;
\item $h_{\min} > 0$ almost surely.
\end{enumerate}
Fix $H_{\max} > 0$, and let $M_\star = \max\{1, d_{\mathrm{loc}} w_{\mathrm{loc}} H_{\max}^2\}$. Then, for every $\xi > 0$, with probability at least
\begin{align}
1 - \mathbb{P}(\mathcal{E}_{\max}(H_{\max})^c) - n C_{\mathrm{CW}} D_\star \min\left\{1, \left(\frac{\xi M_\star^{d_{\mathrm{loc}} - 2}}{\eta_0}\right)^{1/D_\star}\right\},
\label{eq:geo-global-gap-probability}
\end{align}
we have
\begin{align}
\min_{\substack{A \in \mathcal{O}_{k,R}^{\mathrm{op}}\\ \tfrac{1}{d}\tr(A^2) = 1,\ \tr(AH) = 0}} \frac{1}{d}\|[A,H]\|_F^2 \geq 4\xi \frac{h_{\min}^2}{2 h_{\min}^2 + n^2 \Delta d_{\mathrm{blk}} h_{\max}^2}.
\label{eq:geo-global-gap}
\end{align}
\end{theorem}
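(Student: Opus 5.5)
The proof combines the two results just established. The probabilistic input is Lemma~\ref{lem:geo-local-smallball}, and the deterministic input is Proposition~\ref{prop:deterministic-geometric-gap-from-local-gaps}.

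First, define the good event
\begin{align}
\mathcal{G}_\xi := \mathcal{E}_{\max}(H_{\max}) \cap \{\lambda_2(\Gamma_c(h)) > \xi \ \text{for every } c\in\Lambda\} \cap \{h_{\min}>0\}.
\end{align}
The failure event is contained in the union of three pieces: $\mathcal{E}_{\max}(H_{\max})^c$, the event $\mathcal{E}_{\max}(H_{\max})\cap\{\exists c:\lambda_2(\Gamma_c(h))\le\xi\}$, and the event $\{h_{\min}=0\}$. The last piece is null by hypothesis 5. A union bound therefore reduces the probability estimate to bounding the middle piece.

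Second, hypotheses 3 and 4 are exactly the assumptions of Lemma~\ref{lem:geo-local-smallball}: structural nondegeneracy, log-concave local laws, and the uniform moment bound $\|S_c\|_{L^1(\mu_c)}\ge\eta_0$. That lemma bounds the middle piece by
\begin{align}
n C_{\mathrm{CW}} D_\star \min\Bigl\{1,\bigl(\xi M_\star^{d_{\mathrm{loc}}-2}/\eta_0\bigr)^{1/D_\star}\Bigr\}.
\end{align}
This gives precisely the probability in \eqref{eq:geo-global-gap-probability}.

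Third, on $\mathcal{G}_\xi$ every local gap satisfies $\lambda_2(\Gamma_c(h))\ge\xi$ and $h_{\min}>0$. Hypotheses 1 and 2 supply the remaining structural requirements of Proposition~\ref{prop:deterministic-geometric-gap-from-local-gaps}: a connected bounded-degree graph, $R\ge1$, nonempty anchor blocks, and $d_c\ge2$. Applying that proposition pointwise on $\mathcal{G}_\xi$ yields \eqref{eq:geo-global-gap}. Here $h_{\min}$ and $h_{\max}$ are the realized coefficients, so the bound holds with the random values inserted.

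There is no real obstacle; the argument is bookkeeping. The main point of care is quantifier order: Proposition~\ref{prop:deterministic-geometric-gap-from-local-gaps} must be applied pointwise for each realization $h\in\mathcal{G}_\xi$, not in expectation.

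A second point of care concerns the notation $\lambda_2(\Gamma_c(h))$. Lemma~\ref{lem:geo-local-smallball} and the proposition must both use it for the second-smallest eigenvalue of the restricted $d_c\times d_c$ matrix on $\mathcal{U}_c$. The lemma's control of the strict event $\lambda_2>\xi$ then implies the non-strict hypothesis $\lambda_2\ge\xi$ that the proposition requires.

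Finally, since $h_{\min}>0$ forces $h^{(\mathcal{U}_c)}\neq0$, the decomposition used in the patching step is well defined.
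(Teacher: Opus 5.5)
Your proposal is correct and matches the paper's proof. The paper likewise bounds the failure probability by $\mathbb{P}(\mathcal{E}_{\max}(H_{\max})^c)$ plus the small-ball estimate of Lemma~\ref{lem:geo-local-smallball}, then applies Proposition~\ref{prop:deterministic-geometric-gap-from-local-gaps} pointwise on the good event; your explicit treatment of the null event $\{h_{\min}=0\}$ is harmless bookkeeping that the paper leaves implicit.
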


\begin{proof}
By Lemma~\ref{lem:geo-local-smallball}, outside an event of probability at most
\begin{align}
\mathbb{P}\left[\mathcal{E}_{\max}(H_{\max})^c\right] + n C_{\mathrm{CW}} D_\star \min\!\left\{1, \left(\frac{\xi M_\star^{d_{\mathrm{loc}} - 2}}{\eta_0}\right)^{1/D_\star}\right\},
\end{align}
the event $\mathcal{E}_{\max}(H_{\max})$ holds and $\lambda_2(\Gamma_c(h)) \geq \xi$ for every $c \in \Lambda$. The claim then follows from Proposition~\ref{prop:deterministic-geometric-gap-from-local-gaps}.
\end{proof}

The hypotheses of Theorem~\ref{thm:geo-global-gap} have separate structural and probabilistic roles. Hypothesis~(1) is the geometric locality input: bounded degree and finite range imply that all local neighborhoods have size bounded only in terms of $(\Delta,k,R)$. Hypothesis~(2) is used in the patching argument. The condition $\mathcal V_q \neq \varnothing$ ensures that each site has at least one anchored Pauli coordinate, so neighboring scalar factors $\beta_p$ and $\beta_q$ can be compared on a shared local block. The condition $d_c \geq 2$ only excludes degenerate local neighborhoods for which a second local eigenvalue is not defined.

The main structural condition is hypothesis~(3), namely Assumption~\ref{ass:geo-structural-nondeg} that $S_c \not\equiv 0$ for every $c$. This is a finite-dimensional algebraic condition on the local Pauli support family. It says that the local commutator constraints can isolate the Hamiltonian direction on each rooted local type. For fixed $(\Delta,k,R)$, it can be checked on a finite list of local neighborhoods by exact arithmetic. It fails for commuting families and may fail for overly constrained sparse families.

Hypothesis~(4) supplies the quantitative anti-concentration input. Log-concavity is used for the Carbery--Wright inequality, while the lower bound
\begin{equation}
\|S_c\|_{L^1(\mu_c)} \geq \eta_0
\end{equation}
sets the scale in the local small-ball estimate. In applications, this moment lower bound follows either from the local density lower-bound condition in Lemma~\ref{lem:geo-flat-core-L1} or from Gaussian smoothing in Lemma~\ref{lem:geo-smoothed-L1}. Finally, $h_{\min} > 0$ is automatic for the continuous coefficient distributions considered here, but is kept explicit because the deterministic patching bound depends on it.

\begin{remark}[From the theorem to an inverse-polynomial gap]
The lower bound in \eqref{eq:geo-global-gap} is stated in terms of the random quantities $h_{\min}$ and $h_{\max}$. To obtain a fully explicit high-probability inverse-polynomial bound, one combines Theorem~\ref{thm:geo-global-gap} with coefficient estimates of the form
\begin{align}
h_{\min} \geq n^{-C}, \quad h_{\max} \leq n^C, \quad H_{\max} \leq n^C, \quad \eta_0 \geq n^{-C}\,,
\end{align}
with high probability. Then choosing $\xi = n^{-C'}$ for $C'$ sufficiently large gives
\begin{align}
\min_{\substack{A \in \mathcal{O}_{k,R}^{\mathrm{op}}\\ \tfrac{1}{d}\tr(A^2) = 1,\ \tr(AH) = 0}} \frac{1}{d}\|[A,H]\|_F^2 \geq \frac{1}{\operatorname{poly}(n)}
\end{align}
with probability at least $1 - 1/\operatorname{poly}(n)$.
\end{remark}

\subsubsection{Deterministic certificates for the commutator gap }
\label{subsubsec:deterministic-certificates}

We briefly explain how the deterministic quantities appearing above can be checked in concrete instances. This subsection does not verify the probabilistic hypotheses in Theorem~\ref{thm:geo-global-gap}; those distributional assumptions, such as log-concavity, local density lower bounds, coefficient lower tails, and coefficient upper tails, are addressed separately in the ensemble-specific results below.

For a fixed Hamiltonian, the strongest direct diagnostic is the projected commutator gap. Construct the collision-aware commutator matrix $B(h)$, whose rows are indexed by Pauli strings appearing in commutators and whose columns are indexed by the chosen local support family $\mathcal V_{k,R}$. Recall that $\Gamma(h) := B(h)^\top B(h)$. The projected static commutator gap is
\begin{equation}
\pi_H := 4 \min_{\substack{a \perp h\\ \|a\|_2 = 1}} a^\top \Gamma(h) a.
\end{equation}
This is exactly the optimal static commutator gap in the chosen local span. Numerically, this quantity can be estimated with sparse eigensolvers. A rigorous fixed-Hamiltonian certificate requires an exact computation or a certified eigenvalue lower bound, for example by interval arithmetic or another validated spectral method.

The geometric theorem also gives a local sufficient certificate. For each $c$, compute the local matrix $\Gamma_c(h)$ restricted to the coordinate set
\begin{equation}
\mathcal U_c = \{u \in \mathcal V_{k,R} : \operatorname{supp}(P_u) \cap B(c,R) \neq \varnothing\}.
\end{equation}
If $\lambda_2(\Gamma_c(h)) \geq \xi$ for every $c$, then Proposition~\ref{prop:deterministic-geometric-gap-from-local-gaps} gives the global bound \eqref{eq:deterministic-geo-gap-local}.

The structural condition $S_c \not\equiv 0$ is a property of the support family, not of a particular Hamiltonian. It is certified by finding, for each rooted local type $c$, an integer witness vector $h_0^{(\mathcal U_c)} \neq 0$ such that the local commutator constraints have rank $d_c - 1$. Equivalently, let $B_c(h_0)$ be the local row matrix whose rows are indexed by $\mathcal W_c$ and whose columns are indexed by $\mathcal U_c$. Since $\Gamma_c(h_0) = B_c(h_0)^\top D_c^{\mathcal W} B_c(h_0)$ with $D_c^{\mathcal W}$ positive diagonal, $\operatorname{rank}\Gamma_c(h_0) = \operatorname{rank}B_c(h_0)$. Thus it suffices to verify
\begin{equation}
\operatorname{rank}_{\mathbb Q} B_c(h_0) = d_c - 1.
\end{equation}

In the implementation, this rank statement is certified by modular exact arithmetic. The matrix $B_c(h_0)$ has integer entries. The supplied certificates use the prime $p = 2147483647$. If
\begin{equation}
\operatorname{rank}_{\mathbb F_p} B_c(h_0) = d_c - 1,
\end{equation}
then $\operatorname{rank}_{\mathbb Q} B_c(h_0) \geq d_c - 1$. On the other hand, the exact integer identity
\begin{equation}
B_c(h_0)h_0^{(\mathcal U_c)} = 0
\end{equation}
and the condition $h_0^{(\mathcal U_c)} \neq 0$ imply $\operatorname{rank}_{\mathbb Q} B_c(h_0) \leq d_c - 1$. Hence $\operatorname{rank}_{\mathbb Q} B_c(h_0) = d_c - 1$, and therefore $S_c \not\equiv 0$. Floating-point computations may be used to search for candidate witnesses, but the certificate itself is exact: it consists of verifying $B_c(h_0)h_0^{(\mathcal U_c)} = 0$ over the integers and verifying $\operatorname{rank}_{\mathbb F_p} B_c(h_0) = d_c - 1$ over the prime field used in the certificate.

For reproducibility, each certificate file records:
\begin{enumerate}
\item the support family, lattice, locality parameters $k,R$, and boundary or rooted local type;
\item the precise definition of $\mathcal U_c$ used in the check;
\item the ordered local coordinate list $\mathcal U_c$;
\item the ordered row list $\mathcal W_c$, or an equivalent hash of the row construction;
\item the integer witness vector $h_0^{(\mathcal U_c)}$;
\item the prime $p$ used for the modular rank computation;
\item the exact check that $B_c(h_0) h_0^{(\mathcal U_c)} = 0$ over the integers;
\item the modular rank value $\operatorname{rank}_{\mathbb F_p}B_c(h_0)$ and the target value $d_c - 1$;
\item pivot-column data, or an equivalent machine-checkable modular-rank transcript.
\end{enumerate}

Periodic lattice certificates are performed on rooted balls of radius $2R$ in a periodic lattice whose side length is large enough that this ball does not wrap around; equivalently, one may work directly in the corresponding infinite lattice. Such a certificate applies to all sufficiently large periodic systems with the same local type. Open-boundary certificates require enumerating all rooted radius-$2R$ local environments that can occur near a boundary, edge, corner, or bulk region. A sparse physical subfamily is not certified by embedding it into a larger dense family; the certificate must be run for the actual support family and the actual local coordinate sets $\mathcal U_c$.

The current exact-arithmetic certification run certifies the following support families. The certified cases include the periodic XYZ-chain support family; all dense finite-range cases on the $1$D chain, square, triangular, honeycomb, and cubic lattices for $k \in \{2,3,4\}$ and $R \in \{1,2\}$, with open or periodic boundary conditions; and the exact two-body no-field Pauli families for $R \in \{1,2\}$ on the chain, square, triangular, honeycomb, and cubic lattices, with open or periodic boundary conditions.

\subsection{Explicit ensembles and smoothed analysis}
\label{subsec:explicit-ensembles-and-smoothed-analysis}

We now specialize the commutator-gap bounds to several standard coefficient distributions. Throughout this subsection, $N := |\mathcal{V}|$, unless a specific support family is being discussed.

\subsubsection{Elementary coefficient bounds}

We will repeatedly use the following elementary estimates.

\begin{lemma}[Uniform coefficient bounds]
\label{lem:uniform-coefficient-bounds}
Let $h_1,\ldots,h_N$ be independent random variables with
\begin{align}
h_j \stackrel{\,\textnormal{i.i.d.}\,}{\sim} \operatorname{Unif}[-\sigma,\sigma], \quad 0 < \sigma \leq 1.
\end{align}
Then for every $\delta \in (0,1)$, with probability at least $1-\delta$,
\begin{align}
\min_{1 \leq j \leq N}|h_j| \geq \frac{\sigma\delta}{N}, \quad \max_{1 \leq j \leq N}|h_j| \leq \sigma.
\end{align}
\end{lemma}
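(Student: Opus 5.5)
The upper bound is deterministic, so only the lower bound on the minimum needs an argument. Since each $h_j$ is supported on $[-\sigma,\sigma]$, the event $\max_j |h_j| \le \sigma$ holds with probability one. The whole task is therefore to show that the minimum is at least $\sigma\delta/N$ outside an event of probability at most $\delta$.

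I will control the minimum through a union bound over the $N$ coordinates. Fix $j$ and a threshold $\tau\in[0,\sigma]$. Because $h_j$ has density $1/(2\sigma)$ on $[-\sigma,\sigma]$, the event $\{|h_j|<\tau\}$ is the interval $(-\tau,\tau)$, so
\begin{align}
\mathbb P\bigl[|h_j|<\tau\bigr] = \frac{2\tau}{2\sigma} = \frac{\tau}{\sigma}.
\end{align}
Take $\tau = \sigma\delta/N$. This lies in $[0,\sigma]$ because $\delta<1$ and $N\ge 1$, and it gives per-coordinate failure probability $\delta/N$.

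A union bound over $j=1,\dots,N$ bounds the probability that some $|h_j| < \sigma\delta/N$ by $\delta$. Independence is not needed for this step; only the identical marginals are used. On the complement, $\min_j |h_j|\ge \sigma\delta/N$, and intersecting with the almost-sure upper bound finishes the argument.

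No step is a genuine obstacle. The only points to check are that the threshold $\tau$ lies inside the support, so the interval computation is valid, and that the inequality in the event is treated as strict or non-strict consistently. Both are harmless because the uniform law has no atoms.
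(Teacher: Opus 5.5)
Your proposal is correct and matches the paper's proof: the upper bound holds deterministically, each coordinate has $\mathbb{P}(|h_j| < \sigma\delta/N) = \delta/N$, and a union bound over the $N$ coordinates finishes. Your remarks that independence is unnecessary and that strict versus non-strict inequalities do not matter (the uniform law has no atoms) are accurate.
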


\begin{proof}
The upper bound is deterministic. For the lower bound, $\mathbb{P}\left(|h_j| \leq \frac{\sigma\delta}{N}\right) = \frac{\delta}{N}$. A union bound over $j = 1,\ldots,N$ proves the claim.
\end{proof}

\begin{lemma}[Gaussian coefficient bounds]
\label{lem:gaussian-coefficient-bounds}
Let $h_1,\ldots,h_N$ be independent random variables with
\begin{align}
h_j \stackrel{\,\textnormal{i.i.d.}\,}{\sim} \mathcal{N}(0,\sigma^2), \quad 0 < \sigma \leq 1.
\end{align}
Fix $\delta_{\min}, \delta_{\max} \in (0,1)$, and define
\begin{align}
L_{\max} := 2\log\left(\frac{2N}{\delta_{\max}}\right).
\end{align}
Then with probability at least $1-\delta_{\min}-\delta_{\max}$,
\begin{align}
\min_{1 \leq j \leq N}|h_j| \geq \frac{\sigma\delta_{\min}}{N}, \quad \max_{1 \leq j \leq N}|h_j| \leq \sigma\sqrt{L_{\max}}.
\end{align}
\end{lemma}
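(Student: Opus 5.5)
The two bounds are handled separately by union bounds over the $N$ coordinates, and the failure probabilities are then added.

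For the lower bound on $\min_j|h_j|$, use that the $\mathcal N(0,\sigma^2)$ density is at most $1/(\sigma\sqrt{2\pi})$ everywhere. Hence, for any $\epsilon>0$,
\begin{align}
\mathbb P(|h_j|\le \epsilon)\le \frac{2\epsilon}{\sigma\sqrt{2\pi}} = \sqrt{\tfrac{2}{\pi}}\,\frac{\epsilon}{\sigma}\le \frac{\epsilon}{\sigma}.
\end{align}
Taking $\epsilon=\sigma\delta_{\min}/N$ gives $\mathbb P(|h_j|\le\sigma\delta_{\min}/N)\le \delta_{\min}/N$. A union bound over $j=1,\dots,N$ then shows that the minimum bound fails with probability at most $\delta_{\min}$. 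This is the same argument as in Lemma~\ref{lem:uniform-coefficient-bounds}, with the uniform density replaced by the Gaussian density bound.

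For the upper bound on $\max_j|h_j|$, apply the standard Gaussian tail bound $\mathbb P(|g|>s)\le 2e^{-s^2/2}$ for $g\sim\mathcal N(0,1)$, with $s=\sqrt{L_{\max}}$. Since $e^{-L_{\max}/2}=\delta_{\max}/(2N)$, this gives
\begin{align}
\mathbb P\bigl(|h_j|>\sigma\sqrt{L_{\max}}\bigr)\le 2\cdot\frac{\delta_{\max}}{2N}=\frac{\delta_{\max}}{N}.
\end{align}
A union bound over the $N$ coordinates shows that the maximum bound fails with probability at most $\delta_{\max}$.

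Combining the two estimates, both bounds hold simultaneously with probability at least $1-\delta_{\min}-\delta_{\max}$.

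The only step that needs care is the constant in the tail bound. The form $2e^{-s^2/2}$ holds for all $s\ge0$: for $s\ge1$ it follows from the Mills ratio $\mathbb P(g>s)\le \phi(s)/s$, and for $s<1$ the right-hand side exceeds $1$, so the bound is trivial. Alternatively, the Chernoff bound $\mathbb P(g>s)\le e^{-s^2/2}$ holds for all $s\ge0$ and doubles to the two-sided bound directly. Independence of the $h_j$ is not actually needed, since only union bounds are used.
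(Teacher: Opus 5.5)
Your proof is correct and follows essentially the same route as the paper: the density bound $\mathbb P(|G|\le t)\le\sqrt{2/\pi}\,t\le t$ with a union bound for the minimum, and the tail bound $\mathbb P(|G|>t)\le 2e^{-t^2/2}$ with a second union bound for the maximum. Your added justification of the tail constant, and your remark that independence is not needed, are both accurate.
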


\begin{proof}
For a standard Gaussian $G$,
\begin{align}
\mathbb{P}(|G| \leq t) \leq \sqrt{\frac{2}{\pi}}\,t \leq t \quad \textnormal{for } t \geq 0.
\end{align}
Thus $\mathbb{P}\left(|h_j| \leq \frac{\sigma\delta_{\min}}{N}\right) \leq \frac{\delta_{\min}}{N}$. A union bound proves the lower bound with probability at least $1-\delta_{\min}$.

For the upper bound, using $\mathbb{P}(|G| > t) \leq 2 e^{-t^2/2}$, we get
\begin{align}
\mathbb{P}\left(|h_j| > \sigma\sqrt{L_{\max}}\right) \leq 2 e^{-L_{\max}/2} = \frac{\delta_{\max}}{N}.
\end{align}
A second union bound proves the upper bound with probability at least $1-\delta_{\max}$.
\end{proof}

For shifted Gaussian variables, we will use the following elementary bound. For every $m \in \mathbb{R}$, every $\varepsilon > 0$, and $G \sim \mathcal{N}(0,1)$,
\begin{align}
\mathbb{P}(|m+\sigma G| \leq \varepsilon) \leq \frac{\varepsilon}{\sigma}.
\label{eq:shifted-gaussian-small-interval}
\end{align}
Indeed, the density of $m+\sigma G$ is bounded by $1/\sqrt{2\pi \sigma^2}$, and the relevant interval has length $2\varepsilon$.

\subsubsection{Transverse-field Ising support}

Let $G=(\Lambda,\mathsf E)$ be a connected simple graph with $|\Lambda|=n\ge 2$, and consider
\begin{equation}
\mathcal V_{\mathrm{TFIM}}:=\{X_i:i\in\Lambda\}\cup\{Z_iZ_j:\{i,j\}\in E\}.
\end{equation}
For now, set
\begin{equation}
N:=|\mathcal V_{\mathrm{TFIM}}|=|\Lambda|+|E|.
\end{equation}
The anticommutation graph is the incidence graph of $G$, hence is connected. The family is
edge-product injective because
\begin{equation}
X_i(Z_iZ_j)\propto Y_iZ_j,\quad X_j(Z_iZ_j)\propto Z_iY_j,
\end{equation}
and these products determine the incidence pair.

Therefore Theorem \ref{thm:pauli-injective-gap} gives the following deterministic statement. If
\begin{equation}
H = \sum_{i\in\Lambda} g_i X_i+\sum_{\{i,j\}\in E} J_{ij} Z_i Z_j,
\end{equation}
all coefficients are nonzero, and
\begin{equation}
\frac1d\tr(H^2) = \sum_{i\in\Lambda}g_i^2+\sum_{\{i,j\}\in E}J_{ij}^2 = n,
\end{equation}
then
\begin{equation}
\min_{\substack{
A\in\operatorname{span}_{\mathbb R}\mathcal V_{\mathrm{TFIM}}\\
\tfrac{1}{d}\tr(A^2)=1,\ \tr(AH)=0}} \frac1d\|[A,H]\|_F^2 \ge \frac{16 h_{\min}^4}{h_{\max}^2N^2\left(1+Nh_{\max}^2/n\right)}\,,
\end{equation}
where $h_{\min}$ and $h_{\max}$ are the minimum and maximum absolute values of the
normalized TFIM coefficients.

The deterministic bound above assumes the normalization $\tfrac{1}{d}\tr(H^2) = n$. To apply it to iid uniform or Gaussian coefficients, we first sample an unnormalized TFIM Hamiltonian and then rescale it to this convention. Thus, let
\begin{equation}
\widetilde H = \sum_{i\in\Lambda}\widetilde g_i X_i+ \sum_{\{i,j\}\in E}\widetilde J_{ij} Z_i Z_j,
\end{equation}
write $\widetilde h\in\mathbb R^N$ for its coefficient vector, and set
\begin{equation}
S:=\|\widetilde h\|_2^2 = \sum_{i\in\Lambda}\widetilde g_i^2 + \sum_{\{i,j\}\in E}\widetilde J_{ij}^2.
\end{equation}
The Hamiltonian used in the theorem is the normalized Hamiltonian
\begin{equation}
H := \sqrt{\frac nS}\,\widetilde H,
\end{equation}
which satisfies $\tfrac{1}{d}\tr(H^2)=n$ whenever $S>0$. Let
\begin{equation}
\widetilde h_{\min}:=\min_{u\in\mathcal V_{\mathrm{TFIM}}}|\widetilde h_u|, \quad \widetilde h_{\max}:=\max_{u\in\mathcal V_{\mathrm{TFIM}}}|\widetilde h_u|.
\end{equation}
Then the normalized coefficients satisfy
\begin{equation}
h_{\min} = \sqrt n\,\frac{\widetilde h_{\min}}{\sqrt S}, \quad h_{\max} = \sqrt n\,\frac{\widetilde h_{\max}}{\sqrt S}.
\end{equation}
Substituting these identities into the deterministic TFIM bound gives
\begin{equation}
\min_{\substack{A\in\operatorname{span}_{\mathbb R}\mathcal V_{\mathrm{TFIM}}\\
\tfrac{1}{d}\tr(A^2)=1,\ \tr(AH)=0}} \frac1d\|[A,H]\|_F^2 \ge \frac{16n\,\widetilde h_{\min}^4}{N^2\widetilde h_{\max}^2(S+N\widetilde h_{\max}^2)}.
\end{equation}
Since $S\le N\widetilde h_{\max}^2$, we also have the simpler bound
\begin{equation}
\min_{\substack{
A\in\operatorname{span}_{\mathbb R}\mathcal V_{\mathrm{TFIM}}\\
\tfrac{1}{d}\tr(A^2)=1,\ \tr(AH)=0}} \frac1d\|[A,H]\|_F^2 \ge \frac{8n}{N^3}\left(\frac{\widetilde h_{\min}}{\widetilde h_{\max}}\right)^4.
\end{equation}

If the unnormalized TFIM coefficients are independent uniform random variables on $[-\sigma,\sigma]$,
with $0<\sigma\le 1$, then for every $\delta\in (0,1)$, with probability at least $1-\delta$,
\begin{equation}
\widetilde h_{\min}\ge \frac{\sigma\delta}{N}, \quad \widetilde h_{\max}\le \sigma.
\end{equation}
Therefore, with probability at least $1-\delta$,
\begin{equation}
\min_{\substack{A\in\operatorname{span}_{\mathbb R}\mathcal V_{\mathrm{TFIM}}\\
\tfrac{1}{d}\tr(A^2)=1,\ \tr(AH)=0}} \frac1d\|[A,H]\|_F^2 \ge \frac{8n\delta^4}{N^7}.
\end{equation}
Combining this with Theorem \ref{relation-commutator-A-U}, and using the same failure parameter $\delta$ for the coefficient event and the time-sampling event, gives the following. For $t\sim\operatorname{Unif}([0,T])$,
with probability at least $1-2\delta$,
\begin{equation}
\min_{\substack{
A\in\operatorname{span}_{\mathbb R}\mathcal V_{\mathrm{TFIM}}\\
\tfrac{1}{d}\tr(A^2)=1,\ \tr(AH)=0}} \frac1d\|[A,U(t)]\|_F^2 \ge \frac{\delta^6}{256N^{10}}\min\{nT^2,1\}.
\end{equation}

If the unnormalized TFIM coefficients are independent Gaussian random variables with law $\mathcal{N}(0,\sigma^2)$,
with $0<\sigma\le 1$, then for $\delta_{\min},\delta_{\max}\in (0,1)$, with probability at least
$1-\delta_{\min}-\delta_{\max}$,
\begin{equation}
\widetilde h_{\min}\ge \frac{\sigma\delta_{\min}}{N}, \quad \widetilde h_{\max}\le \sigma\sqrt{L_{\max}}, \quad L_{\max}:=2\log\left(\frac{2N}{\delta_{\max}}\right).
\end{equation}
Therefore, with probability at least $1-\delta_{\min}-\delta_{\max}$,
\begin{equation}
\min_{\substack{
A\in\operatorname{span}_{\mathbb R}\mathcal V_{\mathrm{TFIM}}\\
\tfrac{1}{d}\tr(A^2)=1,\ \tr(AH)=0}} \frac1d\|[A,H]\|_F^2 \ge \frac{8n\delta_{\min}^4}{N^7L_{\max}^2}.
\end{equation}
Combining this with Theorem \ref{relation-commutator-A-U}, and setting $\delta_{\min}=\delta_{\max}=\delta$ while also using $\delta$ as the time-sampling failure parameter, gives the following. For $t\sim\operatorname{Unif}([0,T])$, with probability at least
$1-3\delta$,
\begin{equation}
\min_{\substack{
A\in\operatorname{span}_{\mathbb R}\mathcal V_{\mathrm{TFIM}}\\
\tfrac{1}{d}\tr(A^2)=1,\ \tr(AH)=0}} \frac1d\|[A,U(t)]\|_F^2 \ge \frac{\delta^6}{256N^{10}L_{\max}^2}\min\{nT^2,1\}, \quad L_{\max} := 2\log\left(\frac{2N}{\delta}\right).
\end{equation}

\subsubsection{Locally nondegenerate geometric families}
\label{locally-nondegenerate-geo-family-subsec}

We call a geometrically local support family $\mathcal V_{k,R}$ locally nondegenerate if the following three conditions hold. First, there exists a support-respecting anchoring map $\alpha : \mathcal V_{k,R} \to \Lambda$ such that, writing $\operatorname{anc}(u) := \alpha(u)$, the anchor blocks $\mathcal V_q := \{ u \in \mathcal V_{k,R} : \operatorname{anc}(u) = q \}$ are nonempty for every $q \in \Lambda$. Second, the local coordinate neighborhoods satisfy $d_c \geq 2$ for every $c \in \Lambda$, as in \eqref{eq:geo-dc-at-least-two}. Third, Assumption~\ref{ass:geo-structural-nondeg} holds, i.e.~the local cofactor polynomial $S_c$ is not identically zero for every $c$.

The nonempty-anchor condition is used only in the patching argument: it ensures that each site has at least one anchored Pauli coordinate through which neighboring local scalar factors can be compared. Dense finite-range Pauli families containing all one-site Pauli terms satisfy the nonempty-anchor condition directly. Exact two-body finite-range Pauli families without on-site fields satisfy it by Lemma~\ref{lem:exact-two-body-nonempty-anchoring}. In all cases, however, the structural condition $S_c \not\equiv 0$ is a separate local rank condition and is claimed only for the support families certified in Appendix~\ref{subsubsec:deterministic-certificates}.

\begin{remark}[Sampling before normalization]
\label{rem:sampling-before-normalization}
The anti-concentration assumptions in Theorem~\ref{thm:geo-global-gap} are assumptions on the coefficient vector to which the local small-ball estimate is applied. In normalized ensemble statements, we first sample an unnormalized coefficient vector $\widetilde h$, apply the local small-ball and coefficient-tail estimates to $\widetilde h$, and only then set
\begin{equation}
H = \sqrt{\frac{n}{\widetilde S}} \sum_{u \in \mathcal V} \widetilde h_u P_u\,, \quad \widetilde S = \sum_{u \in \mathcal V} \widetilde h_u^2.
\end{equation}
This distinction matters because the globally normalized vector need not have independent or log-concave local marginals. The passage from $\widetilde h$ to the normalized $h = \sqrt{n/\widetilde S}\,\widetilde h$ is handled by the homogeneity relations
\begin{equation}
B(h) = \sqrt{\frac{n}{\widetilde S}}\, B(\widetilde h),
\quad
\Gamma_c(h) = \frac{n}{\widetilde S}\, \Gamma_c(\widetilde h), \quad \lambda_2(\Gamma_c(h)) = \frac{n}{\widetilde S}\, \lambda_2(\Gamma_c(\widetilde h)).
\end{equation}
Similarly, if
\begin{equation}
\widetilde h_{\min} := \min_{u \in \mathcal V} |\widetilde h_u|, \quad \widetilde h_{\max} := \max_{u \in \mathcal V} |\widetilde h_u|,
\end{equation}
then
\begin{equation}
h_{\min} = \sqrt{\frac{n}{\widetilde S}}\, \widetilde h_{\min}, \quad h_{\max} = \sqrt{\frac{n}{\widetilde S}}\, \widetilde h_{\max}.
\end{equation}
Consequently, the coefficient ratio appearing in Proposition~\ref{prop:deterministic-geometric-gap-from-local-gaps} is invariant under this normalization:
\begin{equation}
\label{coefficient-ratio-propc1}
\frac{h_{\min}^2}{2h_{\min}^2 + n^2\Delta d_{\rm blk}h_{\max}^2} = \frac{\widetilde h_{\min}^2}
{2\widetilde h_{\min}^2 + n^2\Delta d_{\rm blk}\widetilde h_{\max}^2}.
\end{equation}
\end{remark}

We now apply this scaling reduction to the standard coefficient ensembles. Since the final normalized Hamiltonian is unchanged if all unnormalized coefficients are multiplied by the same positive constant, the common scale $\sigma$ drops out. Equivalently, for unnormalized coefficients sampled from $\operatorname{Unif}[-\sigma,\sigma]$ or $\mathcal{N}(0,\sigma^2)$, divide the unnormalized coefficient vector by $\sigma$ before applying the bounds below. In both applications below, Lemma~\ref{lem:geo-local-smallball} is applied to the unnormalized coefficient vector $\widetilde h$, whose local marginals remain independent and log-concave. Proposition~\ref{prop:deterministic-geometric-gap-from-local-gaps} is then applied to the normalized Hamiltonian, using Remark~\ref{rem:sampling-before-normalization} and the coefficient-ratio identity \eqref{coefficient-ratio-propc1}. First consider independent uniform unnormalized coefficients, after removing the common scale:
\begin{equation}
\widetilde h_u\overset{\rm i.i.d.}{\sim}\operatorname{Unif}[-1,1].
\end{equation}
The local coefficient marginals satisfy Assumption~\ref{ass:geo-flat-core} with
\begin{equation}
s_n=1,\quad \kappa_0=2^{-d_{\rm loc}},\quad r_0=1.
\end{equation}
Let $\eta_{\rm unif}>0$ be the corresponding constant from Lemma \ref{lem:geo-flat-core-L1}, so that
\begin{equation}
\|S_c\|_{L^1(\mu_c)}\ge \eta_{\rm unif}
\end{equation}
for every $c$. Define
\begin{equation}
M_{\rm unif}:=\max\{1,d_{\rm loc}w_{\rm loc}\},
\end{equation}
and, for failure parameters $\delta_{\rm loc},\delta_{\min}\in (0,1)$, set
\begin{equation}
\widetilde \xi_{\rm unif}:=
\frac{\eta_{\rm unif}}{M_{\rm unif}^{d_{\rm loc}-2}}
\left(
\frac{\delta_{\rm loc}}{nC_{\rm CW}D_\star}
\right)^{D_\star}.
\end{equation}
The choice of $\widetilde \xi_{\rm unif}$ makes the local small-ball failure probability at most
$\delta_{\rm loc}$. Also, Lemma~\ref{lem:uniform-coefficient-bounds} applied at unit scale gives
\begin{equation}
\widetilde h_{\min}\ge \frac{\delta_{\min}}{N},
\quad
\widetilde h_{\max}\le 1\,,
\end{equation}
with probability at least $1-\delta_{\min}$, while $\widetilde S\le N$ deterministically. Therefore,
with probability at least $1-\delta_{\rm loc}-\delta_{\min}$,
\begin{equation}\label{independent-uniform-dense-2local}
\min_{\substack{
A\in \mathcal O^{\rm op}_{k,R}\\
\tfrac{1}{d}\tr(A^2)=1,\ \tr(AH)=0}}
\frac1d\|[A,H]\|_F^2
\ge
4\frac{n}{N}\widetilde \xi_{\rm unif}
\frac{(\delta_{\min}/N)^2}
{2(\delta_{\min}/N)^2+n^2\Delta d_{\rm blk}}.
\end{equation}

\noindent Next consider independent Gaussian unnormalized coefficients, after removing the common scale:
\begin{equation}
\widetilde h_u \stackrel{\,\textnormal{i.i.d.}\,}{\sim} \mathcal{N}(0,1).
\end{equation}
Fix $\delta_{\rm loc},\delta_{\min},\delta_{\max} \in (0,1)$, and set
\begin{equation}
L_{\max} := 2\log\left(\frac{2N}{\delta_{\max}}\right), \quad H_{\max} := \sqrt{L_{\max}}.
\end{equation}
The Gaussian local marginals satisfy Assumption~\ref{ass:geo-flat-core} with $s_n = 1$, $\kappa_0 = (2\pi e)^{-d_{\rm loc}/2}$, and $r_0 = 1$. Let $\eta_{\rm Gauss} > 0$ be the corresponding constant from Lemma~\ref{lem:geo-flat-core-L1}, so that
\begin{equation}
\|S_c\|_{L^1(\mu_c)} \geq \eta_{\rm Gauss}
\end{equation}
for every $c$. Since $H_{\max}^2 = L_{\max}$, the parameter $M_\star$ in Lemma~\ref{lem:geo-local-smallball} becomes
\begin{equation}
M_{\rm Gauss} := \max\{1,d_{\rm loc}w_{\rm loc}L_{\max}\}.
\end{equation}
We then set
\begin{equation}
\widetilde \xi_{\rm Gauss} := \frac{\eta_{\rm Gauss}}{M_{\rm Gauss}^{d_{\rm loc} - 2}} \left(\frac{\delta_{\rm loc}}{n C_{\rm CW} D_\star} \right)^{D_\star}.
\end{equation}
The local small-ball event fails with probability at most $\delta_{\rm loc}$. By Lemma \ref{lem:gaussian-coefficient-bounds},
\begin{equation}
\widetilde h_{\min}\ge \frac{\delta_{\min}}{N}, \quad \widetilde h_{\max}\le \sqrt{L_{\max}}
\end{equation}
with probability at least $1-\delta_{\min}-\delta_{\max}$. Finally, the standard $\chi$-square tail
bound gives
\begin{equation}
\widetilde S = \sum_{u\in \mathcal V}\widetilde h_u^2\le 2N
\end{equation}
with probability at least $1-e^{-N/8}$. Therefore, with probability at least
$1-\delta_{\rm loc}-\delta_{\min}-\delta_{\max}-e^{-N/8}$,
\begin{equation}\label{independent-gaussian-dense-2local}
\min_{\substack{
A\in \mathcal O^{\rm op}_{k,R}\\
\tfrac{1}{d}\tr(A^2)=1,\ \tr(AH)=0}} \frac1d\|[A,H]\|_F^2 \ge 4\frac{n}{2N}\widetilde \xi_{\rm Gauss} \frac{(\delta_{\min}/N)^2}{2(\delta_{\min}/N)^2+n^2\Delta d_{\rm blk}L_{\max}}.
\end{equation}

For fixed $k,R,\Delta$, if $N = O(n)$, and the failure parameters are chosen inverse-polynomially, both \eqref{independent-uniform-dense-2local} and \eqref{independent-gaussian-dense-2local} give inverse-polynomial commutator gaps with probability at least $1 - 1/\operatorname{poly}(n)$.

The same scaling argument gives a smoothed-analysis version. Here the unnormalized coefficient vector is an arbitrary deterministic center plus an independent Gaussian perturbation. Thus a generic local perturbation removes accidental low-weight conserved directions, provided the same geometric support-family hypotheses hold.

\begin{theorem}[Smoothed geometric commutator gap]
\label{thm:smoothed-geometric-gap}
Assume the geometric hypotheses of Theorem~\ref{thm:geo-global-gap}, including a support-respecting nonempty anchoring map, $d_c \geq 2$, and $S_c \not\equiv 0$ for every rooted local type. Let $N = |\mathcal V|$. Fix an arbitrary center $\mu \in \mathbb R^N$, and sample unnormalized coefficients
\begin{equation}
\widetilde h = \mu + \sigma g, \quad g_u \stackrel{\rm i.i.d.}{\sim} \mathcal{N}(0,1), \quad 0 < \sigma \leq 1.
\end{equation}
Let
\begin{equation}
\widetilde H = \sum_{u \in \mathcal V} \widetilde h_u P_u, \quad \widetilde S = \|\widetilde h\|_2^2, \quad H = \sqrt{\frac n{\widetilde S}}\, \widetilde H.
\end{equation}
Let $M_\infty \geq \|\mu\|_\infty$ and $M_2 \geq \|\mu\|_2$. For failure parameters $\delta_{\rm loc},\delta_{\min},\delta_{\max},\delta_S \in (0,1)$, define
\begin{align}
L_{\max} &:= 2 \log\!\left(\frac{2N}{\delta_{\max}}\right) \\
\widetilde{h}_{\max}^{\rm ub} &:= M_\infty + \sigma \sqrt{L_{\max}} \\
S_{\rm ub} &:= \left(M_2 + \sigma\left(\sqrt N + \sqrt{2\log(1/\delta_S)}\right)\right)^2 \\
\varepsilon_{\min} &:= \frac{\sigma \delta_{\min}}{N} \\
M_{\rm sm} &:= \max\{1,d_{\rm loc} w_{\rm loc} (\widetilde{h}_{\max}^{\rm ub})^2\} \\
\widetilde\xi_{\rm sm} &:= \frac{\eta_{\rm sm} \sigma^{D_\star}}{M_{\rm sm}^{d_{\rm loc} - 2}} \left(\frac{\delta_{\rm loc}}{n C_{\rm CW} D_\star}\right)^{D_\star}\,,
\end{align}
where $\eta_{\rm sm}>0$ is the constant from
Lemma~\ref{lem:geo-smoothed-L1}. Then, with probability at least
\begin{equation}
1 - \delta_{\rm loc} - \delta_{\min} - \delta_{\max} - \delta_S,
\end{equation}
the normalized Hamiltonian $H$ satisfies
\begin{equation}
\min_{\substack{A \in \mathcal O_{\mathcal V}^{\rm op}\\ \tfrac{1}{d}\tr(A^2) = 1,\ \tr(AH) = 0}} \frac{1}{d}\|[A,H]\|_F^2 \geq 4\, \frac{n}{S_{\rm ub}}\, \widetilde\xi_{\rm sm}\, \frac{\varepsilon_{\min}^2}{2 \varepsilon_{\min}^2 + n^2 \Delta d_{\rm blk} (\widetilde{h}_{\max}^{\rm ub})^2}.
\end{equation}
In particular, for fixed $(\Delta,k,R)$, if $N = O(n)$, $M_\infty$, $M_2$, and $\sigma^{-1}$ are polynomially bounded in $n$, then the smoothed ensemble has an inverse-polynomial static commutator gap with probability at least $1 - 1/\operatorname{poly}(n)$.
\end{theorem}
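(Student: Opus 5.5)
The proof mirrors the uniform and Gaussian ensemble arguments above. Apply all probabilistic estimates to the unnormalized vector $\widetilde h$, and pass to the normalized $H$ only at the end via Remark~\ref{rem:sampling-before-normalization}. There are four events, each controlled by one failure parameter: a local spectral-gap event for $\widetilde h$ ($\delta_{\rm loc}$), a lower bound on $\widetilde h_{\min}$ ($\delta_{\min}$), an upper bound on $\widetilde h_{\max}$ ($\delta_{\max}$), and an upper bound on $\widetilde S$ ($\delta_S$). A union bound gives the stated probability, and on the intersection we apply Proposition~\ref{prop:deterministic-geometric-gap-from-local-gaps}.

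\emph{Coefficient events.} For each $u$, $\widetilde h_u=\mu_u+\sigma g_u$. By \eqref{eq:shifted-gaussian-small-interval}, $\mathbb P(|\widetilde h_u|\le \varepsilon_{\min})\le \delta_{\min}/N$, so a union bound gives $\widetilde h_{\min}\ge\varepsilon_{\min}$ with probability at least $1-\delta_{\min}$. The Gaussian tail from Lemma~\ref{lem:gaussian-coefficient-bounds} gives $\max_u|g_u|\le\sqrt{L_{\max}}$ with probability at least $1-\delta_{\max}$, and hence $\widetilde h_{\max}\le \widetilde h^{\rm ub}_{\max}$. For $\widetilde S$, write $\|\widetilde h\|_2\le \|\mu\|_2+\sigma\|g\|_2$. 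Gaussian concentration for the $1$-Lipschitz function $\|g\|_2$, with $\mathbb E\|g\|_2\le\sqrt N$, gives $\|g\|_2\le\sqrt N+\sqrt{2\log(1/\delta_S)}$ with probability at least $1-\delta_S$. Therefore $\widetilde S\le S_{\rm ub}$.

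\emph{Local gaps.} Apply Lemma~\ref{lem:geo-local-smallball} to $\widetilde h$ with $H_{\max}=\widetilde h^{\rm ub}_{\max}$, so that $M_\star=M_{\rm sm}$. Each local marginal of $\widetilde h$ is $x_0^{(c)}+\sigma Z_c$ with $x_0^{(c)}=\mu^{(\mathcal U_c)}$. This marginal is Gaussian, hence log-concave, and Lemma~\ref{lem:geo-smoothed-L1} gives $\|S_c\|_{L^1(\mu_c)}\ge \eta_0:=\eta_{\rm sm}\sigma^{D_\star}$ uniformly in the center. With $\xi=\widetilde\xi_{\rm sm}$ we get
\begin{equation}
\frac{\xi M_{\rm sm}^{d_{\rm loc}-2}}{\eta_0}=\Bigl(\frac{\delta_{\rm loc}}{nC_{\rm CW}D_\star}\Bigr)^{D_\star},
\end{equation}
so the failure term in \eqref{eq:geo-local-smallball} is at most $\delta_{\rm loc}$. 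On $\mathcal E_{\max}$ intersected with this complement, $\lambda_2(\Gamma_c(\widetilde h))\ge\widetilde\xi_{\rm sm}$ for every $c$. The complement of $\mathcal E_{\max}$ is already charged to $\delta_{\max}$, and Lemma~\ref{lem:geo-local-smallball} only bounds the intersection with $\mathcal E_{\max}$, so nothing is double counted.

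\emph{Normalization and conclusion.} By the homogeneity relations in Remark~\ref{rem:sampling-before-normalization}, $\lambda_2(\Gamma_c(h))=(n/\widetilde S)\lambda_2(\Gamma_c(\widetilde h))\ge (n/S_{\rm ub})\widetilde\xi_{\rm sm}$. Also $h_{\min}>0$ almost surely. Proposition~\ref{prop:deterministic-geometric-gap-from-local-gaps} with $\xi=(n/S_{\rm ub})\widetilde\xi_{\rm sm}$, combined with the ratio identity \eqref{coefficient-ratio-propc1}, reduces the bound to the ratio
\begin{equation}
\frac{\widetilde h_{\min}^2}{2\widetilde h_{\min}^2+n^2\Delta d_{\rm blk}\widetilde h_{\max}^2}.
\end{equation}
This ratio is increasing in $\widetilde h_{\min}$ and decreasing in $\widetilde h_{\max}$, so substituting $\varepsilon_{\min}$ and $\widetilde h_{\max}^{\rm ub}$ gives the stated bound.

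For the ``in particular'' claim, take the failure parameters inverse-polynomial. Then $L_{\max}$, $\widetilde h_{\max}^{\rm ub}$, $S_{\rm ub}$ and $M_{\rm sm}$ are $\poly(n)$. Also $\widetilde\xi_{\rm sm}$ and $\varepsilon_{\min}$ are $1/\poly(n)$, since $D_\star$ and $d_{\rm loc}$ are $O(1)$.

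The main obstacle is the local small-ball step. One must check that Lemma~\ref{lem:geo-smoothed-L1} supplies an $L^1$ lower bound that is uniform over arbitrary centers, including centers that sit on the zero set of $S_c$. One must also check that Carbery--Wright may be applied to the non-centered Gaussian marginals. Both points are exactly what Lemma~\ref{lem:geo-smoothed-L1} and log-concavity provide, so I expect no genuine gap. The only care needed is bookkeeping, so that the event $\mathcal E_{\max}$ is used consistently in Lemma~\ref{lem:geo-local-smallball} and in the final ratio bound.
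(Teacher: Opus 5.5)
Your proposal is correct and follows the paper's proof essentially step by step. It uses the same four events charged to $\delta_{\min},\delta_{\max},\delta_S,\delta_{\rm loc}$, applies the local small-ball lemma to the unnormalized $\widetilde h$ with the center-uniform smoothed $L^1$ bound, rescales by $n/\widetilde S\ge n/S_{\rm ub}$, and concludes with Proposition~\ref{prop:deterministic-geometric-gap-from-local-gaps} and the scale-invariant coefficient ratio. Your explicit checks fill in details the paper leaves implicit: the failure-term computation, the handling of $\mathcal E_{\max}$, and the monotonicity of the ratio in $\widetilde h_{\min}$ and $\widetilde h_{\max}$.
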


\begin{proof}
The shifted-Gaussian small-interval bound, the Gaussian maximum bound, and Gaussian concentration for the Euclidean norm give, with failure probability at most $\delta_{\min} + \delta_{\max} + \delta_S$,
\begin{equation}
\widetilde h_{\min} \geq \varepsilon_{\min}, \quad
\widetilde h_{\max} \leq \widetilde h_{\max}^{\rm ub}, \quad
\widetilde S \leq S_{\rm ub}.
\end{equation}
Lemma~\ref{lem:geo-smoothed-L1} gives $\|S_c\|_{L^1} \geq \eta_{\rm sm}\sigma^{D_\star}$. With the chosen value of $\widetilde\xi_{\rm sm}$, Lemma~\ref{lem:geo-local-smallball} then gives
\begin{equation}
\lambda_2(\Gamma_c(\widetilde h)) \geq \widetilde\xi_{\rm sm}
\end{equation}
for every $c$, with failure probability at most $\delta_{\rm loc}$. By homogeneity,
\begin{equation}
\lambda_2(\Gamma_c(h)) = \frac n{\widetilde S}\, \lambda_2(\Gamma_c(\widetilde h)) \geq \frac n{S_{\rm ub}}\, \widetilde\xi_{\rm sm}.
\end{equation}
Finally apply Proposition~\ref{prop:deterministic-geometric-gap-from-local-gaps} and use the scale-invariant coefficient-ratio identity
\begin{equation}
\frac{h_{\min}^2}{2 h_{\min}^2 + n^2 \Delta d_{\rm blk} h_{\max}^2} = \frac{\widetilde h_{\min}^2}{2 \widetilde h_{\min}^2 + n^2 \Delta d_{\rm blk} \widetilde h_{\max}^2}.
\end{equation}
The inverse-polynomial consequence follows by substituting the polynomial bounds.
\end{proof}

\subsection{Summary of the commutator gap results}
\label{subsec:commutator-gap-summary}

We collect the static commutator-gap bounds proved in this appendix. These are the $\underline{\pi}_H$ inputs that the sinc reduction in Appendix~\ref{section-sinc} converts into dynamical $\underline{\pi}_U$ bounds.

For reference, we use the notation from the preceding sections.  In the Pauli-injective regime, let $\mathcal{V}$ be a finite Pauli family whose anticommutation graph is connected and edge-product injective. For
\begin{align}
H = \sum_{u \in \mathcal{V}} h_u P_u, \quad h_{\min} := \min_{u \in \mathcal{V}} |h_u| > 0, \quad h_{\max} := \max_{u \in \mathcal{V}} |h_u|,
\end{align}
Theorem~\ref{thm:pauli-injective-gap} gives
\begin{align}
\min_{\substack{A \in \mathcal{O}_{\mathcal{V}}^{\mathrm{op}}\\ \tfrac{1}{d}\tr(A^2) = 1,\ \tr(AH) = 0}} \frac{1}{d}\|[A,H]\|_F^2 \geq \underline{\pi}_H^{\mathrm{inj}},
\label{eq:summary-piH-inj}
\end{align}
where
\begin{align}
\underline{\pi}_H^{\mathrm{inj}} := \frac{16\,h_{\min}^4}{h_{\max}^2 |\mathcal V|^2\bigl(1+\frac{|\mathcal V|h_{\max}^2}{n}\bigr)}.
\end{align}
Moreover, Lemma~\ref{lem:dominating-extension} shows that the same lower bound holds on a larger search family $\mathcal{V}_{\max}$ when $\mathcal{V}$ dominates $\mathcal{V}_{\max}$ and the pair $\mathcal{V} \subseteq \mathcal{V}_{\max}$ is relatively edge-product injective.

In the geometric regime, suppose the local decomposition is defined on a bounded-degree interaction graph $G = (\Lambda,\textsf{E})$, and suppose that the deterministic local gap condition
\begin{align}
\lambda_2(\Gamma_c(h)) \geq \xi \quad \textnormal{for every } c \in \Lambda
\end{align}
holds for some $\xi > 0$. Then Proposition~\ref{prop:deterministic-geometric-gap-from-local-gaps} gives
\begin{align}
\min_{\substack{A \in \mathcal{O}_{k,R}^{\mathrm{op}}\\ \tfrac{1}{d}\tr(A^2) = 1,\ \tr(AH) = 0}} \frac{1}{d}\|[A,H]\|_F^2 \geq \underline{\pi}_H^{\mathrm{geo}}(\xi,H),
\label{eq:summary-piH-geo-det}
\end{align}
where
\begin{align}
\underline{\pi}_H^{\mathrm{geo}}(\xi,H) := 4\xi\,\frac{h_{\min}^2}{2h_{\min}^2 + n^2\Delta d_{\mathrm{blk}}h_{\max}^2}.
\end{align}
This is a deterministic certificate: once the local eigenvalue bounds are checked, the global static commutator gap follows.

Theorem~\ref{thm:geo-global-gap} gives a probabilistic way to certify the local gaps. Under structural local nondegeneracy, log-concavity of the local coefficient laws, and the moment lower bound
\begin{align}
\|S_c\|_{L^1(\mu_c)} \geq \eta_0 \quad \textnormal{for every } c,
\end{align}
we have, for every $\xi > 0$ and $H_{\max} > 0$,
\begin{align}
\min_{\substack{A \in \mathcal{O}_{k,R}^{\mathrm{op}}\\ \tfrac{1}{d}\tr(A^2) = 1,\ \tr(AH) = 0}} \frac{1}{d}\|[A,H]\|_F^2 \geq \underline{\pi}_H^{\mathrm{geo}}(\xi,H)
\end{align}
with probability at least
\begin{align}
1 - \mathbb{P}(\mathcal{E}_{\max}(H_{\max})^c) - n C_{\mathrm{CW}}D_\star \min\!\left\{1,\left(\frac{\xi M_\star^{d_{\mathrm{loc}} - 2}}{\eta_0}\right)^{1/D_\star}\right\}.
\end{align}
The moment lower bound is verified from local density lower bounds by Lemma~\ref{lem:geo-flat-core-L1}, and from Gaussian smoothing around an arbitrary center by Lemma~\ref{lem:geo-smoothed-L1}.

For the normalized explicit ensembles in Section \ref{locally-nondegenerate-geo-family-subsec}, the probabilistic local-gap
estimate is applied to the unnormalized coefficient vector $\widetilde h$ before normalization,
and the resulting local spectral gap is then rescaled by the factor $n/\widetilde S$.

Consequently, for fixed $(\Delta,k,R)$, suppose $N = |\mathcal{V}| = O(n)$, and suppose that, with high probability,
\begin{align}
h_{\min} \geq n^{-C}, \quad h_{\max} \leq n^C, \quad H_{\max} \leq n^C, \quad \eta_0 \geq n^{-C},
\end{align}
for some constant $C > 0$. Then choosing $\xi = n^{-C'}$ for $C'$ sufficiently large gives
\begin{align}
\min_{\substack{A \in \mathcal{O}_{k,R}^{\mathrm{op}}\\ \tfrac{1}{d}\tr(A^2) = 1,\ \tr(AH) = 0}} \frac{1}{d}\|[A,H]\|_F^2 \geq \frac{1}{\operatorname{poly}(n)}
\end{align}
with probability at least $1 - 1/\operatorname{poly}(n)$.

Finally, suppose that a static lower bound
\begin{align}
\frac{1}{d}\|[A,H]\|_F^2 \geq \underline{\pi}_H
\end{align}
holds for every admissible normalized $A\perp H$. Since $r=|\mathcal{V}|-1$ when $H$ is the only conserved quantity, the uniform dynamical reduction in Theorem~\ref{relation-commutator-A-U} implies that, with probability at least $1-\delta$ over a single sampled time $t\sim\operatorname{Unif}([0,T])$, one has simultaneously for every admissible normalized $A\perp H$,
\begin{align}
\frac{1}{d}\|[U(t),A]\|_F^2 \geq \frac{\delta^2\min\{T^2,1/n\}}{2048|\mathcal{V}|^3}\,\underline{\pi}_H.
\end{align}
under the normalization convention $\tfrac{1}{d}\tr(H^2)=n$. Thus, whenever $\underline{\pi}_H$ is inverse-polynomial, $|\mathcal{V}|=\operatorname{poly}(n)$, $\delta\geq1/\operatorname{poly}(n)$, and $\min\{T^2,1/n\}\ge 1/\mathrm{poly}(n)$, the dynamical commutator gap remains inverse-polynomial, even when the observable $A$ is chosen after $t$ is sampled.

\section{Hamiltonian Learning}\label{sec:learning-hamiltonian}
Let $\mathcal V$ be a Pauli family of size $N:=|\mathcal{V}|$ and maximal weight $k$ and let $\mathcal{O}_{\mathcal{V}}^{\mathrm{op}}=\mathrm{span}_{\mathbb R} \{P_v: v\in \mathcal{V} \}$. Let $H$ be some traceless Hamiltonian in the real span of the Pauli family $\mathcal{V}$ with coefficient vector $h$ with the normalization convention $\tfrac{1}{d}\tr(H^2)=n$, equivalently $\sum_{v\in\mathcal V}h_v^2=n$. We denote the unit Hamiltonian direction by
\begin{equation}
h_{\mathrm{unit}}:=\frac{h}{\sqrt{n}}\in\mathbb R^N,\quad \|h_{\mathrm{unit}}\|_2=1,
\end{equation}
so that $H=\sqrt{n}\sum_{v\in\mathcal V}(h_{\mathrm{unit}})_vP_v$. The unit-direction $h_{\mathrm{unit}}$ is what the singular-vector argument below estimates. Let $U=e^{-iHt}$. Define the dynamical quantity condition
\begin{equation}\label{dynamical condition}
\pi_U:=\min_{\substack{A\in \mathcal{O}_{\mathcal{V}}^{\mathrm{op}},\\\tr(AH)=0,\\ \tfrac{1}{d}\tr(A^2)=1}}\frac{1}{d}\Big\|[U,A]\Big\|_F^2\ge \underline{\pi}_U.
\end{equation}
Appendices~\ref{section-sinc} and~\ref{sec:commutator-gaps} establish this condition, with $\underline{\pi}_U \geq 1/\operatorname{poly}(n)$, for the model classes considered there, including random ensembles and deterministic certified instances. This section assumes such a bound as a promise and focuses on learning $H$ from $U = e^{-iHt}$.
\subsection{Algorithm}
\begin{algorithm}[H]
\caption{Hamiltonian learning from conservation laws}
\label{alg:learn-H-general}
\begin{algorithmic}[1]
\Statex \textbf{Input.}
Target unitary $U$ on $n$ qubits, with Hilbert-space dimension $d = 2^n$; Pauli family $\mathcal V$; a certified lower bound $\underline\pi_U \in (0,4]$ on the dynamical commutator gap; number of probes $M_{\mathrm{probe}}$; shots per probe $M_{\mathrm{sh}}$; accuracy and confidence parameters $\varepsilon,\varepsilon_{\mathrm{cs}},\eta \in (0,1)$.
\Statex \textbf{Promise.}
$U = e^{-iHt}$ for some unknown traceless Hamiltonian $H = \sum_{v \in \mathcal V} h_v P_v$, and
\begin{equation}
\pi_U := \min_{\substack{A \in \mathcal O_{\mathcal V}^{\rm op}\\ \tfrac{1}{d}\tr(A^2) = 1,\ \tr(AH) = 0}} \frac{1}{d}\|[U,A]\|_F^2 \geq \underline\pi_U.
\end{equation}
\Statex \textbf{Output.}
An estimate $\widehat H=\sum_{v\in\mathcal V}\widehat h_vP_v$, normalized so that $\tr(\widehat H^2)=dn$, such that
\begin{equation}\label{reconstruction-error}
        \min_{s\in\{\pm1\}}\|\widehat H-sH\|_{\mathrm{op}}\le\varepsilon
\end{equation}
with probability at least $1-\eta$.
\State Fix an ordering of $\mathcal V$ and identify it with $\{1,\dots,N\}$.
\For{$\ell=1,\dots,M_{\mathrm{probe}}$}
    \State Sample a product-state probe $\rho^{(\ell)}=\bigotimes_{j=1}^n\rho_j^{(\ell)}$, where each $\rho_j^{(\ell)}$ is drawn independently and uniformly from the six Pauli eigenstates
    $\{\ket{0}\!\bra{0},\ket{1}\!\bra{1},\ket{+}\!\bra{+},\ket{-}\!\bra{-},\ket{y^+}\!\bra{y^+},\ket{y^-}\!\bra{y^-}\}$.
    \State Prepare $\rho^{(\ell)}$, evolve under $U$, and perform classical shadows~\cite{HuangKuengPreskill2020} with $M_{\mathrm{sh}}$ shots to obtain estimates
    $\widehat o_v^{(\ell)}\approx \tr(P_vU\rho^{(\ell)}U^\dagger)$ simultaneously for all $v\in\mathcal V$.
    \State Assume the shadow estimates satisfy
    $|\widehat o_v^{(\ell)}-\tr(P_vU\rho^{(\ell)}U^\dagger)|\le\varepsilon_{\mathrm{cs}}$
    for all $(\ell,v)$ with probability at least $1-\eta/2$.
\EndFor
\State Form the data matrix $\widehat X\in\mathbb R^{M_{\mathrm{probe}}\times N}$ with entries
$\widehat X_{\ell v}:=\tr(P_v\rho^{(\ell)})-\widehat o_v^{(\ell)}$.
\State Compute the unit right singular vector $\widehat h_{\mathrm{unit}}\in\mathbb R^N$ of $\widehat X$ associated with its smallest singular value.
\State Return $\widehat H=\sqrt{n}\sum_{v\in\mathcal V}(\widehat h_{\mathrm{unit}})_vP_v$, which then satisfies $\tr(\widehat H^2)=dn$.
\end{algorithmic}
\end{algorithm}

\begin{lemma}[Uniform local-Pauli shadow accuracy]
\label{lem:uniform-shadow-accuracy}
Let $\mathcal V$ be a Pauli family of size $N$ and weight at most $k$. Fix product probes
$\rho^{(1)},\ldots,\rho^{(M_{\rm probe})}$. 
For each $\ell$, perform $M_{\rm sh}$ independent local-Pauli classical-shadow measurements on $U\rho^{(\ell)} U^\dagger$, and estimate each
\begin{equation}
o_v^{(\ell)}:=\operatorname{tr}(P_vU\rho^{(\ell)} U^\dagger)
\end{equation}
using the standard median-of-means classical-shadow estimator of \cite{HuangKuengPreskill2020}. There is a universal constant $C_{\rm sh}>0$ such that if
\begin{equation}
M_{\rm sh}\ge C_{\rm sh}\,\frac{3^k}{\epsilon_{\rm cs}^2}
\log\!\left(\frac{2NM_{\rm probe}}{\eta}\right),
\end{equation}
then, with probability at least $1-\eta/2$,
\begin{equation}
\left|\widehat o_v^{(\ell)} -\operatorname{tr}(P_vU\rho^{(\ell)}U^\dagger)\right| \le \epsilon_{\rm cs}
\end{equation}
simultaneously for all $\ell\in\{1,\ldots,M_{\rm probe}\}$ and all $v\in \mathcal V$.
\end{lemma}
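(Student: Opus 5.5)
The plan is to recall the single-observable guarantee of local-Pauli (random single-qubit Pauli basis) classical shadows and then union-bound over all pairs $(\ell,v)$. Fix a probe index $\ell$ and a Pauli string $P_v$ of weight $w\le k$. Each shot measures every qubit of $U\rho^{(\ell)}U^\dagger$ in a uniformly random Pauli basis. It yields a single-shot estimator $\hat X_v$ that is zero unless all $w$ qubits in $\operatorname{supp}(P_v)$ were measured in the matching bases, and otherwise equals $\pm 3^{w}$.

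\textbf{Step 1 (unbiasedness).} I will verify $\mathbb E[\hat X_v]=\tr(P_v U\rho^{(\ell)}U^\dagger)$, directly from the inverse measurement channel $\mathcal M^{-1}(\cdot)=\bigotimes_j(3\,\cdot-I)$ applied to the snapshot.

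\textbf{Step 2 (variance).} I will bound the variance by the shadow norm, $\operatorname{Var}(\hat X_v)\le \mathbb E[\hat X_v^2]=3^{2w}\cdot 3^{-w}=3^{w}\le 3^k$. This holds for every state, so conditioning on the fixed probes (and on $U$) is harmless.

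\textbf{Step 3 (median of means).} Split the $M_{\rm sh}$ shots into $K$ batches of size $b$ and take the median of the batch means. Chebyshev gives that each batch mean deviates by more than $\epsilon_{\rm cs}$ with probability at most $3^k/(b\epsilon_{\rm cs}^2)$. Choosing $b=\lceil 4\cdot 3^k/\epsilon_{\rm cs}^2\rceil$ makes this at most $1/4$. A Hoeffding bound on the number of bad batches then shows the median fails with probability at most $e^{-K/8}$.

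\textbf{Step 4 (union bound).} Take $K=\lceil 8\log(2NM_{\rm probe}/\eta)\rceil$. Each pair $(\ell,v)$ then fails with probability at most $\eta/(2NM_{\rm probe})$. A union bound over the $NM_{\rm probe}$ pairs gives total failure at most $\eta/2$. The total shot count is $M_{\rm sh}=bK\le C_{\rm sh}3^k\epsilon_{\rm cs}^{-2}\log(2NM_{\rm probe}/\eta)$ for a universal $C_{\rm sh}$, e.g.\ $C_{\rm sh}=64$ after absorbing ceilings.

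The only point needing care is that the same shots are reused for all $v$ within a probe, and different probes use independent shots. The union bound does not require independence, so this is fine. The fixed-probe conditioning means the randomness is only over measurement outcomes, as the statement requires. I expect no genuine obstacle: the main work is bookkeeping the constants in the median-of-means step. For that I would simply cite \cite{HuangKuengPreskill2020}, Theorem 1, with shadow norm bound $\|P_v\|_{\rm shadow}^2\le 3^{k}$, rather than reprove it.
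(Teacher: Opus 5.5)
Your proposal is correct and follows the same route as the paper. You bound the shadow norm of each $P_v$ by $3^{\mathrm{wt}(v)}\le 3^k$, use median-of-means to get per-pair failure probability $\eta/(2NM_{\rm probe})$, and union bound over the $NM_{\rm probe}$ pairs. The only difference is that you unpack the median-of-means step (Chebyshev per batch plus Hoeffding on the count of bad batches), where the paper simply cites \cite{HuangKuengPreskill2020}. Absorbing the ceilings into $C_{\rm sh}$ uses $\epsilon_{\rm cs}\le 1$ and $\log(2NM_{\rm probe}/\eta)\ge\log 2$, both of which hold in the paper's setting.
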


\begin{proof}
For local-Pauli classical shadows, the shadow norm of a Pauli observable $P_v$ is bounded by
\begin{equation}
\|P_v\|_{\rm sh}^2 \le 3^{\operatorname{wt}(v)} \le 3^k .
\end{equation}
The median-of-means guarantee for classical shadows \cite{HuangKuengPreskill2020} therefore implies that, for any fixed pair $(\ell,v)$,
\begin{equation}
\left|\widehat o_v^{(\ell)}-\operatorname{tr}(P_v\sigma^{(\ell)})\right|
\le \epsilon_{\rm cs}
\end{equation}
with failure probability at most $\eta/(2NM_{\rm probe})$, provided
\begin{equation}
M_{\rm sh}\ge C_{\rm sh}\,\frac{3^k}{\epsilon_{\rm cs}^{2}}
\log\!\left(\frac{2NM_{\rm probe}}{\eta}\right).
\end{equation}
A union bound over the $NM_{\rm probe}$ pairs $(\ell,v)$ gives the simultaneous claim.
\end{proof}

The following parameter choices are sufficient to obtain the reconstruction guarantee~\eqref{reconstruction-error} with probability at least $1 - \eta$. We allocate the failure budget as $\eta/2$ for probe-side concentration and $\eta/2$ for shadow estimation. Taking
\begin{equation}
M_{\mathrm{probe}}\,\ge\, 128\cdot \frac{3^k\,N}{\underline{\pi}_U^{2}}\log(2N/\eta)
\end{equation}
ensures the probe matrix is well-conditioned with probability $1-\eta/2$. Let $X$ denote the noiseless counterpart of $\widehat X$, with entries
$X_{\ell v}:=\tr(P_v\rho^{(\ell)})-\tr(P_vU\rho^{(\ell)}U^\dagger)$. For the estimation error $E:=\widehat X-X$, Lemma~\ref{lem:uniform-shadow-accuracy}, the simultaneous shadow-estimation event in Algorithm~\ref{alg:learn-H-general} holds with probability at least $1-\eta/2$ whenever
\begin{equation}
M_{\rm sh}\ge C_{\rm sh}\,\frac{3^k}{\epsilon_{\rm cs}^{2}} \log\!\left(\frac{2NM_{\rm probe}}{\eta}\right).
\end{equation}
Taking
\begin{equation}
\epsilon_{\rm cs}\le
\frac{\underline{\pi}_U\epsilon}{8\sqrt n\,N\,3^{k/2}}
\end{equation}
therefore gives the sufficient shot bound
\begin{equation}
M_{\rm sh}\ge C\,
\frac{9^k n N^2}{\underline{\pi}_U^{2}\epsilon^{2}}
\log\!\left(\frac{2NM_{\rm probe}}{\eta}\right),
\end{equation}
after increasing the universal constant $C$. Thus the total number of preparations and uses of $U$ is
\begin{equation}
M_{\mathrm{probe}} M_{\mathrm{sh}} = \widetilde O\!\left(27^k\, n\, N^3\, \underline\pi_U^{-4} \varepsilon^{-2}\right).
\end{equation}
\subsection{Rigorous guarantees}\label{rigorous-guarantees}

\begin{theorem}[Robust recovery from the data matrix]\label{thm:robust-recovery-data-matrix}
    Let $\mathcal{V}$ be a traceless Pauli family of size $N$ with maximum weight $k$. Let $H$ be an unknown Hamiltonian in the real span of $\mathcal V$ with corresponding vector $h$ and let $U=e^{-iHt}$. Fix a target accuracy $\varepsilon\in (0,1)$. Assume the dynamical identifiability condition $\pi_U \geq \underline\pi_U$. Run Algorithm~\ref{alg:learn-H-general} with
    \begin{equation}
    M_{\mathrm{probe}} \geq 128 \cdot 3^k N \underline\pi_U^{-2} \log(2N/\eta), \quad
    M_{\mathrm{sh}} \geq C\, \frac{9^k\, n\, N^2}{\underline\pi_U^2\, \varepsilon^2} \log\Bigl(\frac{2 N M_{\mathrm{probe}}}{\eta}\Bigr),
    \end{equation}
    so that the shadow accuracy satisfies
    \begin{equation}
    \varepsilon_{\mathrm{cs}} \leq \frac{\underline\pi_U \varepsilon}{8 \sqrt{n}\, N\, 3^{k/2}}.
    \end{equation}
    Then, with probability at least $1-\eta$ over the probes and shadow measurements,
    \begin{equation}\label{robust-recovery}
        \min_{s\in\{ \pm 1\}}\|\widehat H-sH\|_{\mathrm{op}}\le \varepsilon.
    \end{equation}
    If \eqref{dynamical condition} is supplied by Appendices \ref{section-sinc} and \ref{sec:commutator-gaps} with probability at least $1-\delta_H-\delta_t$ then \eqref{robust-recovery} holds with probability at least $1-\eta-\delta_H-\delta_t$. Moreover, on the same event we have the singular gap,
    \begin{equation}
    \label{singular-gap-thm-claim}
    \sigma_2(\widehat X)-\sigma_1(\widehat X) \geq \left(1-\frac{\varepsilon}{\sqrt{2nN}}\right) \sqrt{\frac{M_{\mathrm{probe}}3^{-k}\underline\pi_U^2}{8}} \geq \Omega\!\left(\sqrt{N\log(N/\eta)}\right).
    \end{equation}
\end{theorem}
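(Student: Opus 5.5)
The proof has three parts: a population covariance gap, its transfer to the finite-sample matrix, and a perturbation argument for the shadow noise.

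\emph{Population covariance gap.} Each row of the noiseless matrix $X$ is $x^{(\ell)}\in\mathbb R^N$ with $x^{(\ell)}\cdot a=\tr(A\rho^{(\ell)})-\tr(U^\dagger AU\rho^{(\ell)})$ for $A=\sum_v a_vP_v$. Thus $a^\top\Sigma a=\operatorname{Var}_\rho(A;U)$, where $\Sigma:=\mathbb E_\rho[x x^\top]$. I would expand $A-U^\dagger AU=\sum_{P}c_P P$ in the full Pauli basis. For the six-state ensemble, $\mathbb E_\rho[\tr(P\rho)\tr(Q\rho)]=\delta_{PQ}3^{-|P|}$. Hence $a^\top\Sigma a=\sum_P 3^{-|P|}c_P^2$, which is at least $3^{-k}\sum_{P\in\mathcal V}c_P^2$.

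I would then show $\sum_{P\in\mathcal V}c_P^2\ge \pi_U^2/4$ for unit $a\perp h$. The identity $\langle A,A-U^\dagger AU\rangle_{HS}=\tfrac{1}{2d}\|[U,A]\|_F^2$ holds, and $A\in\mathcal O_{\mathcal V}^{\rm op}$. By Cauchy--Schwarz, the projection of $A-U^\dagger AU$ onto $\mathcal O_{\mathcal V}^{\rm op}$ therefore has norm at least $\pi_U/2$. This gives $a^\top\Sigma a\ge 3^{-k}\underline\pi_U^2/4$ on $h^\perp$. Since $Xh=0$ exactly, $h_{\rm unit}$ spans $\ker\Sigma$.

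\emph{Finite-sample gap.} Each row satisfies $\|x^{(\ell)}\|_\infty\le 2$, so $\|x^{(\ell)}\|_2^2\le 4N$. I would apply matrix Chernoff to the i.i.d.\ PSD matrices $x^{(\ell)}x^{(\ell)\top}$, restricted to $h^\perp$. This yields, with probability at least $1-\eta/2$, $\lambda_{\min}(X^\top X|_{h^\perp})\ge \tfrac12 M_{\rm probe}3^{-k}\underline\pi_U^2/4$. The condition for this is $M_{\rm probe}\gtrsim (4N)\,\log(2N/\eta)\,/(3^{-k}\underline\pi_U^2/4)$, which the stated $M_{\rm probe}$ covers. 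The conclusion is $\sigma_2(X)\ge\sqrt{M_{\rm probe}3^{-k}\underline\pi_U^2/8}$ and $\sigma_1(X)=0$.

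\emph{Perturbation.} On the shadow event, $\|E\|_{\rm op}\le\|E\|_F\le\sqrt{M_{\rm probe}N}\,\varepsilon_{\rm cs}$. Weyl's inequality bounds the shifts of $\sigma_1$ and $\sigma_2$, which gives the singular-gap claim \eqref{singular-gap-thm-claim}. Wedin's $\sin\Theta$ theorem then gives $\min_s\|\widehat h_{\rm unit}-s h_{\rm unit}\|_2\lesssim \|E\|_{\rm op}/\sigma_2(X)$. Substituting the stated $\varepsilon_{\rm cs}$ makes this ratio roughly $\varepsilon/(\sqrt{n N})$, up to constants. Finally, $\|\widehat H-sH\|_{\rm op}\le\sqrt n\,\|\widehat h_{\rm unit}-sh_{\rm unit}\|_1\le\sqrt{nN}\,\|\cdot\|_2\le\varepsilon$. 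A union bound over the probe event ($\eta/2$), the shadow event ($\eta/2$), and the identifiability event ($\delta_H+\delta_t$) gives the stated success probability.

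The main obstacle is the constant bookkeeping in the last step. The sinc angle bound must combine the $\sqrt{M_{\rm probe}}$ factors so that they cancel, and the resulting $\sqrt{nN}$ loss in passing to the operator norm must match the prescribed $\varepsilon_{\rm cs}$ with the factor $8$. If the constants do not close exactly, I would fall back on the two-sided Davis--Kahan form applied to $\widehat X^\top\widehat X$ and adjust the absolute constant $C$.
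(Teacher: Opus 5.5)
Your proposal is correct and follows the paper's proof essentially step for step. The steps are: the six-state second-moment identity; the Cauchy--Schwarz projection bound via $\langle A, A-U^\dagger A U\rangle_{HS}=\tfrac{1}{2d}\|[U,A]\|_F^2$; matrix Chernoff restricted to $h_{\mathrm{unit}}^\perp$ (with $R=4N$ and $\delta=1/2$, which is exactly where the constant $128$ comes from); Weyl for the singular gap; and a $\sin\theta$ bound followed by the $\sqrt{nN}$ conversion to operator norm.

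The only difference is that where you cite Wedin, the paper uses the direct bound $\sigma_2(X)\sin\theta\le\|X\widehat h_{\mathrm{unit}}\|_2\le\|\widehat X\widehat h_{\mathrm{unit}}\|_2+\|E\|_{\mathrm{op}}\le\|\widehat X h_{\mathrm{unit}}\|_2+\|E\|_{\mathrm{op}}\le 2\|E\|_{\mathrm{op}}$, using that $\widehat h_{\mathrm{unit}}$ minimizes $\|\widehat X x\|_2$. Combined with $\|E\|_{\mathrm{op}}\le\varepsilon\,\sigma_2(X)/(2\sqrt{2nN})$ and $\|\widehat h_{\mathrm{unit}}-s\,h_{\mathrm{unit}}\|_2\le\sqrt{2}\sin\theta$, the constants close exactly with the factor $8$. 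So no Davis--Kahan fallback or adjustment of $C$ is needed.
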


We prove the theorem by first analyzing the ideal data matrix $X$ and then controlling the perturbation $\widehat X - X$. Throughout, singular values are listed in nondecreasing order,
\begin{equation}
\sigma_1 \leq \sigma_2 \leq \cdots \leq \sigma_N.
\end{equation}
The six-state product ensemble gives the following second-moment identity, which controls the projection of any traceless Hermitian operator onto a low-weight Pauli subspace.

\begin{lemma}[Second moment of six-state product probes]
\label{lem:second-moment-six-state}
Let $\rho = \bigotimes_{j = 1}^n \rho_j$, where the $\rho_j$'s are sampled independently and uniformly from the six single-qubit Pauli eigenstates. Let $\mathcal P_n^\circ := \mathcal P_n \setminus \{I\}$ denote the set of nonidentity $n$-qubit Pauli strings, modulo phase. If
\begin{equation}
B = \sum_{u \in \mathcal P_n^\circ} b_u P_u
\end{equation}
is any traceless Hermitian operator, then
\begin{equation}
\mathbb E_\rho\!\left[\tr(B\rho)^2\right] = \sum_{u \in \mathcal P_n^\circ} 3^{-\wt(u)}\, b_u^2.
\end{equation}
Furthermore, if $W$ is any Pauli subspace spanned by strings of weight at most $k$, and $\Pi_W$ denotes the orthogonal projector onto $W$ with respect to $\langle\cdot,\cdot\rangle_{HS}$, then
\begin{equation}
\mathbb E_\rho\left[\tr(B\rho)^2\right] \geq 3^{-k}\,\|\Pi_W B\|_{HS}^2.
\end{equation}
If $B \in W$, then
\begin{equation}
3^{-k}\,\|B\|_{HS}^2 \leq \mathbb E_\rho\left[\tr(B\rho)^2\right] \leq \|B\|_{HS}^2.
\end{equation}
\end{lemma}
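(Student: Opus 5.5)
The plan is to compute the second moment directly by expanding $\tr(B\rho)$ in the Pauli basis and using independence across qubits. For a product state $\rho=\bigotimes_j\rho_j$, each Pauli string factorizes:
\begin{equation}
\tr(P_u\rho)=\prod_{j=1}^n \tr(\sigma_{u_j}\rho_j),
\end{equation}
where $\sigma_{u_j}\in\{I,X,Y,Z\}$ is the $j$-th tensor factor. Hence
\begin{equation}
\tr(B\rho)^2=\sum_{u,v} b_u b_v\prod_j \tr(\sigma_{u_j}\rho_j)\tr(\sigma_{v_j}\rho_j).
\end{equation}
By independence, the expectation factorizes into single-qubit moments $\mathbb E[\tr(\sigma_a\rho_j)\tr(\sigma_b\rho_j)]$ for $a,b\in\{I,X,Y,Z\}$.

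The key single-qubit computation uses the six Pauli eigenstates $\rho_j=(I+s\,\sigma_c)/2$ with $c\in\{X,Y,Z\}$ and $s=\pm1$, each with probability $1/6$.
\begin{itemize}
\item For $a=b=I$ the moment is $1$.
\item For $a=I$ and $b\neq I$, $\mathbb E[\tr(\sigma_b\rho_j)]=0$ by the $\pm$ symmetry.
\item For $a,b\neq I$, $\tr(\sigma_a\rho_j)\tr(\sigma_b\rho_j)=s^2\delta_{ac}\delta_{bc}$, so the expectation is $\tfrac13\delta_{ab}$.
\end{itemize}
Thus $\mathbb E[\tr(\sigma_a\rho_j)\tr(\sigma_b\rho_j)]=\delta_{ab}\,3^{-[a\neq I]}$. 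Multiplying over qubits gives $\mathbb E[\tr(P_u\rho)\tr(P_v\rho)]=\delta_{uv}3^{-\wt(u)}$. The cross terms vanish, and this yields the identity
\begin{equation}
\mathbb E_\rho[\tr(B\rho)^2]=\sum_{u\in\mathcal P_n^\circ}3^{-\wt(u)}b_u^2.
\end{equation}
Real coefficients $b_u$ follow from Hermiticity with Hermitian Pauli representatives.

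The remaining claims follow by comparing sums.
\begin{itemize}
\item Since every term is nonnegative, dropping all $u$ outside $W$'s Pauli basis gives $\mathbb E[\tr(B\rho)^2]\ge\sum_{u\in W}3^{-\wt(u)}b_u^2\ge 3^{-k}\sum_{u\in W}b_u^2$. By orthonormality of Paulis under $\langle\cdot,\cdot\rangle_{HS}$, the last sum is $\|\Pi_W B\|_{HS}^2$, since $\Pi_W$ keeps exactly the coordinates $b_u$ with $u\in W$.
\item If $B\in W$, the lower bound $3^{-k}\|B\|_{HS}^2$ is the same statement.
\item The upper bound uses $3^{-\wt(u)}\le1$ together with $\|B\|_{HS}^2=\sum_u b_u^2$.
\end{itemize}

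There is no real obstacle here. The only care needed is the single-qubit moment table: confirming that $\mathbb E[\tr(\sigma_a\rho_j)]=0$ for $a\neq I$, so that mixed identity/non-identity pairs across different strings cancel. This is what makes distinct $u\neq v$ orthogonal even when they differ only in which qubits carry the identity.
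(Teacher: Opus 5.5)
Your proposal is correct and follows essentially the same route as the paper: write each probe as $(I+s_j\sigma_{a_j})/2$, use independence across sites to obtain $\mathbb E[\tr(P_u\rho)\tr(P_v\rho)]=\delta_{uv}3^{-\wt(u)}$, then drop nonnegative terms for the projection bound and use $3^{-\wt(u)}\le 1$ for the upper bound. The only cosmetic difference is that you state the full single-qubit moment table $\delta_{ab}3^{-[a\neq I]}$. The paper instead picks one site where $u_j\neq v_j$ and notes that the single-site expectation vanishes there.
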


\begin{proof}
Write each single-qubit Pauli eigenstate probe as $\rho_j=\frac{I+s_j \sigma_{a_j}}{2}$ with $a_j \in \{x,y,z\}$ and $s_j \in \{\pm 1\}$, where $a_j$ and $s_j$ are independent and uniform. For $b\in\{x,y,z\}$,
\begin{equation}
\tr(\sigma_b \rho_j) = \tr\!\left(\sigma_b\,\frac{I+s_j\sigma_{a_j}}{2}\right) = s_j\,\mathbf 1_{\{a_j=b\}},
\end{equation}
and $\tr(I\rho_j)=1$.

Now let $u=(u_1,\dots,u_n)\in\mathcal P_n^\circ$, so 
$P_u=\bigotimes_{j=1}^n \sigma_{u_j},$ with $
\sigma_{u_j}\in\{I,X,Y,Z\}.$
Then
\begin{equation}
\tr(P_u\rho)
=
\prod_{j:u_j\neq I} s_j\,\mathbf 1_{\{a_j=u_j\}}.
\end{equation}
Therefore
\begin{equation}
\mathbb E_\rho\left[\tr(P_u\rho)^2\right] = \prod_{j:u_j\neq I} \mathbb P(a_j=u_j) = \left(\frac{1}{3}\right)^{\wt(u)}.
\end{equation}

Next, if $u\neq v$, choose a site $j$ with $u_j\neq v_j$. Then with the convention $\sigma_I := I$,
\begin{equation}
\mathbb E\left[\tr(\sigma_{u_j}\rho_j)\tr(\sigma_{v_j}\rho_j)\right]=0.
\end{equation}
and so by independence across sites, for $u\neq v$,
\begin{equation}
\mathbb E_\rho\left[\tr(P_u\rho)\tr(P_v\rho)\right]=0.
\end{equation}
Thus,
\begin{equation}
\mathbb E_\rho\left[\tr(B\rho)^2\right] = \sum_{u,v} b_u b_v\, \mathbb E_\rho\left[\tr(P_u\rho)\tr(P_v\rho)\right] = \sum_{u} 3^{-\wt(u)} b_u^2.
\end{equation}
This proves the first identity.

For the projector statement, restrict the sum to the Pauli strings spanning $W$,
\begin{equation}
\mathbb E_\rho\left[\tr(B\rho)^2\right] \ge \sum_{u:\,P_u\in W} 3^{-\wt(u)} b_u^2 \ge 3^{-k}\sum_{u:\,P_u\in W} b_u^2 = 3^{-k}\,\|\Pi_W B\|_{HS}^2.
\end{equation}
If $B\in W$, this gives the lower bound. The upper bound follows from $3^{-\wt(u)}\le 1$,
\begin{equation}
\mathbb E_\rho\left[\tr(B\rho)^2\right] = \sum_u 3^{-\wt(u)} b_u^2 \le \sum_u b_u^2 = \|B\|_{HS}^2.
\end{equation}
\end{proof}

The next step is to specialize this general moment identity to the Heisenberg difference $U^\dagger A U-A.$
This operator need not itself lie in the original model span
$\mathcal{O}_{\mathcal{V}}^{\mathrm{op}}$, so what matters is the size of its $\mathcal V$-projection.
The following lemma shows that this projection is already large enough whenever the commutator
$[U,A]$ is large.

\begin{lemma}[The $\mathcal V$-projection of the Heisenberg difference is large]
\label{lem:projection-heisenberg-difference}
Let $\Pi_{\mathcal V}$ denote the orthogonal projector onto
$\mathcal{O}_{\mathcal{V}}^{\mathrm{op}}$ with respect to $\langle\cdot,\cdot\rangle_{HS}$. For any $A\in \mathcal{O}_{\mathrm{\mathcal{V}}}^{\mathrm{op}}$ such that $\|A\|_{\mathrm{HS}}=1$, 
\begin{equation}
\|\Pi_{\mathcal V} (U^\dagger AU-A)\|_{HS}^2 \ge \frac{1}{4}\left(\frac{1}{d}\|[U,A]\|_F^2\right)^2.
\end{equation}
\end{lemma}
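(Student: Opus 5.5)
The plan is to test the Heisenberg difference $D := U^\dagger A U - A$ against $A$ itself, which already lies in $\mathcal{O}_{\mathcal{V}}^{\mathrm{op}}$. Since $\Pi_{\mathcal V}$ is an orthogonal projector with respect to $\langle\cdot,\cdot\rangle_{HS}$ and $A = \Pi_{\mathcal V} A$, we have
\begin{equation*}
\langle A, D\rangle_{HS} = \langle A, \Pi_{\mathcal V} D\rangle_{HS}.
\end{equation*}
Cauchy--Schwarz together with $\|A\|_{HS}=1$ then gives
\begin{equation*}
\|\Pi_{\mathcal V} D\|_{HS} \geq |\langle A, D\rangle_{HS}|.
\end{equation*}
This reduces the lemma to showing that $|\langle A, D\rangle_{HS}| \geq \tfrac{1}{2}\cdot\tfrac{1}{d}\|[U,A]\|_F^2$; squaring that inequality yields the stated bound.

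To evaluate the overlap, I would use that $A$ is Hermitian (it is a real span of Hermitian Paulis) and that $U$ is unitary. Expanding gives
\begin{equation*}
\|[U,A]\|_F^2 = \|UA - AU\|_F^2 = 2\tr(A^2) - 2\operatorname{Re}\tr(A U^\dagger A U).
\end{equation*}
The term $\tr(A U^\dagger A U)$ is real, since it is the trace of a product of the two Hermitian operators $A$ and $U^\dagger A U$. Hence
\begin{equation*}
\frac{1}{d}\|[U,A]\|_F^2 = 2 - \frac{2}{d}\tr(A\,U^\dagger A U),
\end{equation*}
which is exactly the autocorrelation identity quoted in the weak-equilibration discussion. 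On the other hand,
\begin{equation*}
\langle A, D\rangle_{HS} = \frac{1}{d}\tr(A U^\dagger A U) - \frac{1}{d}\tr(A^2) = \frac{1}{d}\tr(A U^\dagger A U) - 1 = -\frac{1}{2}\cdot\frac{1}{d}\|[U,A]\|_F^2.
\end{equation*}
Substituting this into the Cauchy--Schwarz bound gives
\begin{equation*}
\|\Pi_{\mathcal V} D\|_{HS}^2 \geq \frac{1}{4}\left(\frac{1}{d}\|[U,A]\|_F^2\right)^2.
\end{equation*}

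There is no real obstacle in this argument. The only points needing care are the realness and Hermiticity steps, which justify dropping the real part and taking the inner product without a complex conjugate. These hold because the span is real and consists of Hermitian Pauli strings.
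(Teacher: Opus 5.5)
Your proposal is correct and follows essentially the same route as the paper: you test $U^\dagger A U - A$ against $A$, use Cauchy--Schwarz with $\|A\|_{HS}=1$ and $\Pi_{\mathcal V}A=A$, and identify the overlap as $-\tfrac{1}{2}\cdot\tfrac{1}{d}\|[U,A]\|_F^2$. The only cosmetic difference is that the paper writes the overlap as $-\tfrac12\|U^\dagger AU-A\|_{HS}^2$ and then uses unitary invariance of the Frobenius norm, whereas you expand $\|[U,A]\|_F^2$ directly; both computations are the same.
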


\begin{proof}
Since $A\in \mathcal{O}_{\mathcal{V}}^{\mathrm{op}}$ and $\|A\|_{HS}=1$,
\begin{equation}\label{projection-HS}
\|\Pi_{\mathcal V}( U^\dagger A U-A)\|_{HS} \ge |\langle A,\Pi_{\mathcal V} (U^\dagger A U-A)\rangle_{HS}| = |\langle A,U^\dagger A U-A\rangle_{HS}|.
\end{equation}
Squaring both sides yields
\begin{equation}
\|\Pi_{\mathcal V} (U^\dagger A U-A)\|_{HS}^2 \ge |\langle A,U^\dagger A U-A\rangle_{HS}|^2.
\end{equation}
So it suffices to compute $\langle A,U^\dagger A U-A\rangle_{HS}$. Because $A$ and $U^\dagger A U$ are Hermitian,
\begin{align}
\langle A,U^\dagger A U-A\rangle_{HS} &= \frac{1}{d} \,\tr\bigl(A(U^\dagger A U-A)\bigr) \\
&= -\frac{1}{2}\frac{1}{d}\tr(2A^2-2AU^\dagger AU) \\
&=-\frac{1}{2}\frac{1}{d}\tr((U^\dagger AU-A)^2) \\
&= -\frac{1}{2}\,\|U^\dagger A U-A\|_{HS}^2.
\end{align}
Hence 
\begin{equation}
|\langle A,U^\dagger A U-A\rangle_{HS}|^2 = \frac{1}{4}\,\|U^\dagger A U-A\|_{HS}^4= \frac{1}{4}\left(\frac{1}{d}\|[U,A]\|_F^2\right)^2.
\end{equation}
Combining this with \eqref{projection-HS} proves the claim.
\end{proof}

We now translate the commutator lower bound into a lower bound on the ideal row second-moment matrix of the data matrix. Since the random row vector has mean zero over the six-state product ensemble, this second-moment matrix is also the row covariance. The commutator gap forces every direction orthogonal to the true Hamiltonian to have nontrivial row variance.

\begin{lemma}[Row covariance from the commutator gap]
\label{lem:population-covariance-gap}
For a probe $\rho$, define the random row vector
\begin{equation}
x(\rho) \in \mathbb R^{N}, \quad x_v(\rho) := \tr(P_v\rho) - \tr(P_vU\rho U^\dagger).
\end{equation}
Let
\begin{equation}
\Sigma := \mathbb E_\rho\bigl[x(\rho)x(\rho)^\top\bigr].
\end{equation}
Since $\mathbb E_\rho[x(\rho)] = 0$, the matrix $\Sigma$ is the covariance matrix of one ideal row. For any normalized coefficient vector $a = (a_v)_{v \in \mathcal V} \in \mathbb R^N$, with associated observable $A = \sum_{v \in \mathcal V} a_v P_v$,
\begin{equation}
a^\top \Sigma a = \mathbb E_\rho\left[\bigl(a^\top x(\rho)\bigr)^2\right] \geq \frac{3^{-k}}{4}\left(\frac{1}{d}\|[U,A]\|_F^2\right)^2.
\end{equation}
Moreover, let $P_\perp := I - h_{\mathrm{unit}}h_{\mathrm{unit}}^\top$ be the orthogonal projector onto $h_{\mathrm{unit}}^\perp$. If the dynamical identifiability condition \eqref{dynamical condition} holds, then
\begin{equation}
P_\perp \Sigma P_\perp \succeq \frac{3^{-k}\pi_U^2}{4} P_\perp.
\end{equation}
\end{lemma}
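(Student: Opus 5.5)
The plan is to rewrite the quadratic form $a^\top\Sigma a$ as the second moment of $\tr(B\rho)$ for a single traceless operator $B$, and then chain Lemma~\ref{lem:second-moment-six-state} with Lemma~\ref{lem:projection-heisenberg-difference}.

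\textbf{Step 1: the mean-zero claim.} Every $P_v$ with $v\in\mathcal V$ is a nonidentity Pauli string, so $\mathbb E_\rho[\tr(P_v\rho)]=0$ by the single-site computation in the proof of Lemma~\ref{lem:second-moment-six-state}. Next, $\mathbb E_\rho[\rho]=I/d$. Hence $\mathbb E_\rho\tr(P_vU\rho U^\dagger)=\tr(P_v)/d=0$, and therefore $\mathbb E_\rho[x(\rho)]=0$.

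\textbf{Step 2: the single-direction bound.} By linearity and cyclicity,
\begin{equation}
a^\top x(\rho)=\tr(A\rho)-\tr(U^\dagger AU\rho)=-\tr(B\rho),\quad B:=U^\dagger AU-A.
\end{equation}
This $B$ is Hermitian and traceless, so Lemma~\ref{lem:second-moment-six-state} applies. Take $W=\mathcal O^{\rm op}_{\mathcal V}$, which is spanned by strings of weight at most $k$. The lemma gives
\begin{equation}
\mathbb E_\rho[(a^\top x)^2]\ge 3^{-k}\|\Pi_{\mathcal V}B\|_{HS}^2.
\end{equation}
The vector $a$ is normalized, so $\|A\|_{HS}=1$ by Pauli orthonormality. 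Lemma~\ref{lem:projection-heisenberg-difference} then bounds $\|\Pi_{\mathcal V}B\|_{HS}^2$ below by $\tfrac14(\tfrac1d\|[U,A]\|_F^2)^2$, which yields the first inequality. The identity $a^\top\Sigma a=\mathbb E[(a^\top x)^2]$ is immediate from the definition of $\Sigma$.

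\textbf{Step 3: the operator inequality.} Let $z$ be an arbitrary vector and set $y=P_\perp z$. If $y=0$ there is nothing to prove. Otherwise put $a=y/\|y\|_2$; this is a unit vector with $a\perp h$, so $\tr(AH)=d\,a\cdot h=0$ and $\tfrac1d\tr(A^2)=1$. The definition of $\pi_U$ in \eqref{dynamical condition} then gives $\tfrac1d\|[U,A]\|_F^2\ge\pi_U$. Squaring is monotone here because both sides are nonnegative. Step 2 therefore gives
\begin{equation}
z^\top P_\perp\Sigma P_\perp z=\|y\|_2^2\, a^\top\Sigma a\ge \frac{3^{-k}\pi_U^2}{4}\|y\|_2^2=\frac{3^{-k}\pi_U^2}{4}\,z^\top P_\perp z,
\end{equation}
where the last equality uses $P_\perp^2=P_\perp$. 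This is the claimed L\"owner bound.

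No step is a real obstacle, since the substance is already contained in the two cited lemmas. The only points needing care are applying the six-state lemma to $B=U^\dagger AU-A$, which is not itself local, by projecting onto $W$ rather than assuming $B\in W$, and checking the sign and cyclicity bookkeeping in $a^\top x(\rho)=-\tr(B\rho)$.
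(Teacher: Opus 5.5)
Your proposal is correct and follows essentially the same route as the paper. Both rewrite $a^\top x(\rho)$ as $\tr\bigl((A-U^\dagger AU)\rho\bigr)$, chain the six-state second-moment lemma (with $W=\mathcal{O}_{\mathcal{V}}^{\mathrm{op}}$) with the projection lemma, and obtain the L\"owner bound by normalizing $P_\perp z$. The only addition is your explicit check of $\mathbb{E}_\rho[x(\rho)]=0$ via $\mathbb{E}_\rho[\rho]=I/d$, which the paper simply asserts.
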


\begin{proof}
For the operator $A=\sum_{v\in\mathcal V} a_v P_v$,
\begin{equation}
a^\top x(\rho) = \sum_{v\in\mathcal V} a_v\Bigl(\tr(P_v\rho)-\tr(P_vU\rho U^\dagger)\Bigr) = \tr(A\rho)-\tr(AU\rho U^\dagger)=\tr\bigl((A-U^\dagger A U)\rho\bigr).
\end{equation}
Therefore
\begin{equation}
a^\top \Sigma a = \mathbb E_\rho\left[\bigl(a^\top x(\rho)\bigr)^2\right] = \mathbb E_\rho\left[\tr((U^\dagger A U-A)\rho)^2\right].
\end{equation}

Apply Lemma~\ref{lem:second-moment-six-state} with
$W=\mathcal{O}_{\mathcal{V}}^{\mathrm{op}}$ and use Lemma ~\ref{lem:projection-heisenberg-difference},
\begin{equation}\label{first-statement}
\mathbb E_\rho\left[\tr((U^\dagger A U-A)\rho)^2\right] \ge
3^{-k}\,\|\Pi_{\mathcal V} (U^\dagger A U-A)\|_{HS}^2 \ge  \frac{3^{-k}}{4}\left(\frac{1}{d}\|[U,A]\|_F^2\right)^2,
\end{equation}
proving the first claim.

For the second statement, take any $a\in h_{\mathrm{unit}}^\perp$ with $\|a\|_2=1$, and let
$A=\sum_{v\in\mathcal V} a_v P_v.$ Then $\|A\|_{HS}=1$, and since $\langle a,h_{\mathrm{unit}}\rangle=0$ is equivalent to $\langle a,h\rangle=0$, we have $\frac{1}{d}\tr(AH)=\langle a,h\rangle =0 .$ Hence \eqref{first-statement} becomes
\begin{equation}
a^\top \Sigma a \ge \frac{3^{-k}\pi_{U}^2}{4}.
\end{equation}
Since this holds for every unit vector $a \in h_{\mathrm{unit}}^\perp$, it implies the corresponding Loewner-order lower bound on that subspace. Indeed, for any $z \in \mathbb R^N$, if $P_\perp z \neq 0$, apply the preceding inequality to
\begin{equation}
a = \frac{P_\perp z}{\|P_\perp z\|_2}.
\end{equation}
Then
\begin{equation}
z^\top P_\perp \Sigma P_\perp z = (P_\perp z)^\top \Sigma (P_\perp z) \geq \frac{3^{-k}\pi_U^2}{4}\|P_\perp z\|_2^2 = z^\top\left(\frac{3^{-k}\pi_U^2}{4}P_\perp\right)z.
\end{equation}
The case $P_\perp z = 0$ is trivial. Therefore
\begin{equation}
P_\perp \Sigma P_\perp \succeq \frac{3^{-k}\pi_U^2}{4}P_\perp.
\end{equation}
\end{proof}

The empirical covariance is controlled by the following matrix Chernoff bound.

\begin{lemma}[Matrix Chernoff lower tail \cite{Tropp2012UserFriendly}]\label{lem:matrix-chernoff}
Let $Y_1,\dots,Y_M$ be independent positive semidefinite matrices on an $r$-dimensional Hilbert
space. Suppose that
\begin{equation}
0\preceq Y_\ell \preceq R I \quad\text{for all }\ell,
\end{equation}
and
\begin{equation}
\lambda_{\min}\bigl(\mathbb E[Y_\ell]\bigr)\ge \mu.
\end{equation}
Then, for every $\delta\in (0,1)$,
\begin{equation}
\mathrm{Pr}\Bigg[
\lambda_{\min}\left(\frac{1}{M}\sum_{\ell=1}^M Y_\ell\right)\le (1-\delta)\mu \Bigg] \le r\,\exp\left(-\frac{M\mu\delta^2}{2R}\right).
\end{equation}
\end{lemma}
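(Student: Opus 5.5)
The matrix Chernoff lower tail is cited from Tropp, so the plan is to reproduce the standard Laplace-transform argument, specialized to the lower tail. First normalize by replacing $Y_\ell$ with $Y_\ell/R$. The hypotheses then become $0\preceq Y_\ell\preceq I$ and $\lambda_{\min}(\mathbb E\sum_\ell Y_\ell)\ge M\mu/R=:\mu_{\min}$. The target event is $\lambda_{\min}(\sum_\ell Y_\ell)\le(1-\delta)\mu_{\min}$, so it suffices to bound its probability by $r\,e^{-\mu_{\min}\delta^2/2}$.

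Step one is the matrix Laplace transform bound. For $\theta>0$, Markov's inequality applied to $\tr\exp(-\theta\sum_\ell Y_\ell)$ gives
\[
\Pr\Bigl[\lambda_{\min}\bigl(\textstyle\sum_\ell Y_\ell\bigr)\le s\Bigr]\le e^{\theta s}\,\mathbb E\tr\exp\Bigl(-\theta\textstyle\sum_\ell Y_\ell\Bigr),
\]
using $e^{-\theta\lambda_{\min}}\le\tr e^{-\theta Z}$ for Hermitian $Z$.

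Step two, which I expect to be the main obstacle, is decoupling the expectation of the trace exponential of a sum of noncommuting matrices. Here I would invoke Lieb's concavity theorem: $A\mapsto\tr\exp(K+\log A)$ is concave on positive definite $A$. Conditioning on all but one summand and applying Jensen's inequality iteratively gives
\[
\mathbb E\tr\exp\Bigl(\textstyle\sum_\ell(-\theta Y_\ell)\Bigr)\le\tr\exp\Bigl(\textstyle\sum_\ell\log\mathbb E e^{-\theta Y_\ell}\Bigr).
\]

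Step three is a scalar-to-matrix bound. Since $0\preceq Y_\ell\preceq I$, convexity of $x\mapsto e^{-\theta x}$ on $[0,1]$ gives $e^{-\theta Y_\ell}\preceq I-(1-e^{-\theta})Y_\ell$. Taking expectations and using operator monotonicity of $\log$ together with $\log(1-x)\le -x$,
\[
\log\mathbb E e^{-\theta Y_\ell}\preceq -(1-e^{-\theta})\,\mathbb E Y_\ell.
\]
Because $\tr\exp$ is monotone, the trace is at most $\tr\exp\bigl(-(1-e^{-\theta})\mathbb E\sum Y_\ell\bigr)\le r\,e^{-(1-e^{-\theta})\mu_{\min}}$.

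To finish, set $s=(1-\delta)\mu_{\min}$ and choose $\theta=-\log(1-\delta)$. This yields the bound
\[
r\left[\frac{e^{-\delta}}{(1-\delta)^{1-\delta}}\right]^{\mu_{\min}}.
\]
The elementary inequality $(1-\delta)\log(1-\delta)\ge-\delta+\delta^2/2$ for $\delta\in[0,1)$ then reduces this to $r\,e^{-\delta^2\mu_{\min}/2}=r\exp(-M\mu\delta^2/(2R))$. Finally, the event for $\frac1M\sum_\ell Y_\ell$ is the same event after rescaling by $M$ and $R$.
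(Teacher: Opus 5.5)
Your proposal is correct and is essentially the standard proof from \cite{Tropp2012UserFriendly}: the matrix Laplace transform bound, Lieb's concavity theorem to decouple the summands, the chord bound $e^{-\theta Y}\preceq I-(1-e^{-\theta})Y$, and the choice $\theta=-\log(1-\delta)$; the paper cites this result without reproducing a proof. The only adaptation the paper's per-summand hypothesis requires is $\lambda_{\min}\bigl(\sum_\ell\mathbb E Y_\ell\bigr)\ge\sum_\ell\lambda_{\min}(\mathbb E Y_\ell)\ge M\mu$, which you state without justification but which follows immediately from Weyl's inequality.
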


We now apply Lemma~\ref{lem:matrix-chernoff} to the covariance of the rows of the data matrix. This
yields the desired lower bound on the second-smallest singular value of $X$.

\begin{lemma}[Empirical covariance gap and singular-value lower bound]
\label{lem:empirical-gap-singular-value}
Assume the dynamical identifiability condition $\pi_U \geq \underline\pi_U$. If
\begin{equation}
M_{\mathrm{probe}} \geq 128\cdot \frac{3^k N}{\underline\pi_U^{2}}\log(2N/\eta),
\end{equation}
then with probability at least $1-\eta/2$ over the probes,
\begin{equation}
\sigma_2(X) \geq \sqrt{\frac{M_{\mathrm{probe}} 3^{-k} \underline\pi_U^2}{8}}.
\end{equation}
\end{lemma}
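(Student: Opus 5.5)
The plan is to apply the matrix Chernoff bound (Lemma~\ref{lem:matrix-chernoff}) to the rank-one matrices $Y_\ell := P_\perp x(\rho^{(\ell)})x(\rho^{(\ell)})^\top P_\perp$, restricted to the $(N-1)$-dimensional subspace $h_{\mathrm{unit}}^\perp$. We then convert the resulting eigenvalue bound into a bound on $\sigma_2(X)$ via Courant--Fischer.

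\textbf{Step 1: the expectation.} Lemma~\ref{lem:population-covariance-gap} gives, on $h_{\mathrm{unit}}^\perp$,
\begin{equation*}
\mathbb E[Y_\ell] = P_\perp\Sigma P_\perp \succeq \tfrac{3^{-k}\pi_U^2}{4}\,P_\perp .
\end{equation*}
So we may take $\mu = 3^{-k}\underline\pi_U^2/4$.

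\textbf{Step 2: the uniform upper bound.} Each $Y_\ell$ is PSD with $\|Y_\ell\|_{\mathrm{op}}\le \|x(\rho)\|_2^2$. Each entry satisfies $|x_v(\rho)| \le |\tr(P_v\rho)| + |\tr(P_vU\rho U^\dagger)| \le 2$. Hence $\|x(\rho)\|_2^2 \le 4N$, and we take $R=4N$. This crude deterministic bound is what produces the factor $N$ in the probe count.

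\textbf{Step 3: Chernoff.} Apply Lemma~\ref{lem:matrix-chernoff} with $M=M_{\mathrm{probe}}$, dimension $r=N-1\le N$, and $\delta=1/2$. The failure probability is at most
\begin{equation*}
N\exp\!\left(-\frac{M\,3^{-k}\underline\pi_U^2}{4\cdot 8\cdot 4N}\right) = N\exp\!\left(-\frac{M\,3^{-k}\underline\pi_U^2}{128N}\right).
\end{equation*}
Under the hypothesis $M\ge 128\cdot 3^kN\underline\pi_U^{-2}\log(2N/\eta)$ this is at most $N\cdot \eta/(2N)=\eta/2$. This arithmetic is exactly the source of the constant $128$. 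On the complementary event, every unit $a\perp h_{\mathrm{unit}}$ satisfies
\begin{equation*}
\|Xa\|_2^2 = \sum_\ell (a^\top x(\rho^{(\ell)}))^2 \ge \tfrac{M}{2}\mu = \tfrac{M\,3^{-k}\underline\pi_U^2}{8}.
\end{equation*}

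\textbf{Step 4: singular values.} By the Courant--Fischer min-max characterization,
\begin{equation*}
\sigma_2(X)^2=\max_{\dim S=N-1}\ \min_{a\in S,\ \|a\|_2=1}\|Xa\|_2^2 .
\end{equation*}
Choosing $S=h_{\mathrm{unit}}^\perp$ gives $\sigma_2(X)\ge\sqrt{M\,3^{-k}\underline\pi_U^2/8}$. (Indeed $Xh_{\mathrm{unit}}=0$, so $\sigma_1(X)=0$, though this is not needed.)

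\textbf{Main obstacle.} The only delicate point is the bookkeeping of the restriction to $h_{\mathrm{unit}}^\perp$. Lemma~\ref{lem:matrix-chernoff} requires $\lambda_{\min}(\mathbb E Y_\ell)\ge\mu$ on the whole ambient space, which fails on $\mathbb R^N$ because $P_\perp$ kills $h_{\mathrm{unit}}$. I will therefore explicitly identify the $Y_\ell$ with operators on the $(N-1)$-dimensional space $h_{\mathrm{unit}}^\perp$, which is legitimate since each $Y_\ell$ maps that space to itself. I will also check that $\underline\pi_U\le 4$ keeps $\mu\le R$, so that no other hypothesis of the Chernoff lemma is violated.
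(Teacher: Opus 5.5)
Your proposal is correct and follows the paper's proof essentially step for step. You use the same rank-one matrices $Y_\ell = P_\perp x_\ell x_\ell^\top P_\perp$ restricted to the $(N-1)$-dimensional space $h_{\mathrm{unit}}^\perp$, the same matrix Chernoff parameters $\mu = 3^{-k}\underline\pi_U^2/4$, $R = 4N$, $\delta = 1/2$, and the same arithmetic producing the constant $128$; your closing Courant--Fischer step with $S = h_{\mathrm{unit}}^\perp$ is a slightly cleaner way to read off $\sigma_2(X)$, since it does not need $Xh_{\mathrm{unit}} = 0$, which the paper invokes at that point.
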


\begin{proof}
Let $\rho^{(1)},\dots,\rho^{(M_{\mathrm{probe}})}$ be i.i.d.~six-state product probes, and let
\begin{equation}
x_\ell := x(\rho^{(\ell)})\in\mathbb R^{N}, \quad 1\le \ell\le M_{\mathrm{probe}}.
\end{equation}
Then the $\ell$-th row of $X$ is $x_\ell^\top$, and
\begin{equation}
\frac{1}{M_{\mathrm{probe}}}X^\top X = \frac{1}{M_{\mathrm{probe}}}\sum_{\ell=1}^{M_{\mathrm{probe}}} x_\ell x_\ell^\top.
\end{equation}

Let $P_\perp $ be as in Lemma \ref{lem:population-covariance-gap}. Since $U=e^{-iHt}$, we have $U^\dagger H U=H$, so for every probe $\rho$,
\begin{equation}
h_{\mathrm{unit}}^\top x(\rho) = \tfrac{1}{\sqrt{n}}\bigl(\tr(H\rho)-\tr(HU\rho U^\dagger)\bigr) = 0.
\end{equation}
Therefore $x_\ell\in h_{\mathrm{unit}}^\perp$ for every $\ell$, and hence $Xh_{\mathrm{unit}}=0.$

Define $Y_\ell:=P_\perp x_\ell x_\ell^\top P_\perp.$ Each $Y_\ell$ is positive semidefinite. Moreover, for every $v\in\mathcal V$,
\begin{equation}
|x_{\ell,v}| = \bigl|\tr(P_v\rho^{(\ell)})-\tr(P_vU\rho^{(\ell)}U^\dagger)\bigr| \le 2,
\end{equation}
because $\|P_v\|_{\mathrm{op}}=1$ and both $\rho^{(\ell)}$ and $U\rho^{(\ell)}U^\dagger$ are states. Hence $\|x_\ell\|_2^2\le 4N$ and so $0\preceq Y_\ell \preceq 4N\,P_\perp.$

Restricting to the $(N - 1)$-dimensional subspace $h_{\mathrm{unit}}^\perp$, applying Lemma~\ref{lem:matrix-chernoff} with $R = 4N$, $r = N - 1$, $\delta = 1/2$, $M = M_{\mathrm{probe}}$, and
\begin{equation}
\mu = \frac{3^{-k}\underline\pi_U^2}{4}
\end{equation}
gives
\begin{equation}
\mathrm{Pr}\Bigg[ \lambda_{\min}\Bigg(
\frac{1}{M_{\mathrm{probe}}}\sum_{\ell = 1}^{M_{\mathrm{probe}}}Y_\ell\Big|_{h_{\mathrm{unit}}^\perp} \Bigg) \leq \frac{3^{-k}\underline\pi_U^2}{8} \Bigg] \leq (N - 1)\exp\left(-\frac{M_{\mathrm{probe}}3^{-k}\underline\pi_U^2}{128N}\right).
\end{equation}
The assumed lower bound on $M_{\mathrm{probe}}$ ensures that this failure probability is at most $\eta/2$.
Moreover,
\begin{equation}
\frac{1}{M_{\mathrm{probe}}}X^\top X\Big|_{h_{\mathrm{unit}}^\perp} = \frac{1}{M_{\mathrm{probe}}}\sum_{\ell=1}^{M_{\mathrm{probe}}} Y_\ell, \quad Y_\ell = x_\ell x_\ell^\top \ \text{on } h_{\mathrm{unit}}^\perp.
\end{equation}
So on the good event, since $Xh_{\mathrm{unit}}=0$,
\begin{equation}
\sigma_2(X)^2 \geq \frac{M_{\mathrm{probe}}3^{-k}\underline\pi_U^2}{8}.
\end{equation}
Taking square roots proves the claim.
\end{proof}

It remains to pass from the ideal matrix $X$ to the estimated matrix $\widehat X = X + E$. We use two standard perturbation estimates for singular values and singular vectors \cite{StewartSun1990}.

\begin{lemma}[Lower bound on singular values under perturbation]
\label{lem:error-matrix-bound}
Let $A,E\in\mathbb{C}^{M_{\mathrm{probe}}\times N}$, and set
$\widehat A = A+E.$
For every $j\in\{1,\dots,N\}$,
\begin{equation}
\sigma_j(\widehat A)\ \ge\ \sigma_j(A)\,-\, \|E\|_{\mathrm{op}}, \quad \text{and}\quad \sigma_j(\widehat A)\le \sigma_j(A) + \|E\|_{\mathrm{op}}\,.
\end{equation}
\end{lemma}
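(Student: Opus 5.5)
This is Weyl's inequality for singular values, and I will derive it from the Courant--Fischer min-max characterization. With singular values ordered nondecreasingly as in the paper, I will use
\begin{equation}
\sigma_j(A) = \max_{\substack{S\subseteq\mathbb C^N\\ \dim S = N-j+1}}\ \min_{\substack{z\in S\\ \|z\|_2=1}} \|Az\|_2 .
\end{equation}
This follows from applying Courant--Fischer to the Hermitian matrix $A^\dagger A$, whose eigenvalues are $\sigma_j(A)^2$, and then taking square roots, which is monotone. In the case $M_{\mathrm{probe}}<N$ the convention is that the extra singular values are zero, which is consistent with $A^\dagger A$ having $N$ eigenvalues, so this formula is valid for every $j\in\{1,\dots,N\}$.

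For the lower bound, I fix a subspace $S$ of dimension $N-j+1$ that attains the maximum for $A$. For every unit vector $z\in S$, the triangle inequality gives
\begin{equation}
\|\widehat A z\|_2 \ge \|Az\|_2 - \|Ez\|_2 \ge \sigma_j(A) - \|E\|_{\mathrm{op}}.
\end{equation}
Taking the minimum over unit $z\in S$ and then bounding the maximum over subspaces from below by this particular $S$ yields $\sigma_j(\widehat A)\ge\sigma_j(A)-\|E\|_{\mathrm{op}}$.

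The upper bound follows by symmetry. I write $A=\widehat A+(-E)$ and apply the same argument with the roles of $A$ and $\widehat A$ exchanged, using $\|-E\|_{\mathrm{op}}=\|E\|_{\mathrm{op}}$. This gives $\sigma_j(A)\ge\sigma_j(\widehat A)-\|E\|_{\mathrm{op}}$, which rearranges to the stated bound.

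The only delicate point is getting the indexing right in the min-max formula when singular values are listed in nondecreasing order. I will check it at the extremes: for $j=1$ the subspace is all of $\mathbb C^N$ and the formula gives the smallest value of $\|Az\|_2$ over unit $z$; for $j=N$ the subspaces are lines and the formula gives the operator norm. Otherwise the argument is standard, as in \cite{StewartSun1990}.
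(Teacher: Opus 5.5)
Your proof is correct and follows essentially the same route as the paper: the Courant--Fischer max-min characterization of $\sigma_j$ over subspaces of dimension $N-j+1$, combined with the reverse triangle inequality on a fixed subspace. The only cosmetic difference is that you obtain the upper bound by swapping the roles of $A$ and $\widehat A$, whereas the paper applies the forward triangle inequality directly; the two are equivalent.
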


\begin{proof}
With singular values listed in nondecreasing order, $\sigma_1 \leq \sigma_2 \leq \cdots \leq \sigma_N$, the Courant--Fischer characterization gives
\begin{equation}
\sigma_j(M) = \max_{\substack{W\le \mathbb C^N\\ \dim W = N-j+1}}  \min_{\substack{x\in W \\ \|x\|_2=1}}\,\|Mx\|_2.
\end{equation}
Fix any subspace $W\subset\mathbb C^N$ with $\dim W=N-j+1$. For any unit vector $x\in W$,
\begin{equation}
\|\widehat A x\|_2 = \|(A+E)x\|_2 \ge \|Ax\|_2-\|Ex\|_2 \ge \|Ax\|_2-\|E\|_{\mathrm{op}}.
\end{equation}
Taking the minimum over unit $x\in W$ gives
\begin{equation}
\min_{\substack{x\in W\\ \|x\|_2=1}} \|\widehat A x\|_2 \ge \min_{\substack{x\in W\\ \|x\|_2=1}} \|Ax\|_2 - \|E\|_{\mathrm{op}}.
\end{equation}
Now maximizing over all such subspaces $W$,
\begin{equation}
\sigma_j(\widehat A) = \max_W \min_{\substack{x\in W\\ \|x\|_2=1}} \|\widehat A x\|_2 \ge \max_W \min_{\substack{x\in W\\ \|x\|_2=1}} \|Ax\|_2 - \|E\|_{\mathrm{op}} = \sigma_j(A)-\|E\|_{\mathrm{op}}.
\end{equation}
 The upper bound follows by replacing the reverse triangle inequality by the triangle inequality.
\end{proof}

\begin{lemma}[Principal-angle bound for the perturbed null vector]
\label{lem:sin-theta-bound-error-matrix}
Let $h_{\mathrm{unit}}\in\mathbb R^N$ be a unit vector satisfying $Xh_{\mathrm{unit}}=0$, and let $P_\perp := I-h_{\mathrm{unit}}h_{\mathrm{unit}}^\top$.  Further let $\widehat X=X+E$, and let $\widehat h_{\mathrm{unit}}$ be a unit right singular vector of $\widehat X$ associated with its smallest singular value. Define the principal angle $\theta\in[0,\pi/2]$ by
\begin{equation}
\cos\theta=|\langle \widehat h_{\mathrm{unit}}, h_{\mathrm{unit}}\rangle|, \quad \sin\theta=\|P_{\perp}\widehat h_{\mathrm{unit}}\|_2.
\end{equation}
Then
\begin{equation}
 \sin\theta \,\le\, \frac{2\,\|E\|_{\mathrm{op}}}{\sigma_2(X)}.
\end{equation}
\end{lemma}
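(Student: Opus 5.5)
The plan is to compare $\|\widehat X\widehat h_{\mathrm{unit}}\|_2$ from above and below, using only the ideal null vector and the gap $\sigma_2(X)$. Decompose $\widehat h_{\mathrm{unit}} = \cos\theta'\, h_{\mathrm{unit}} + \sin\theta\, z$ with $z := P_\perp\widehat h_{\mathrm{unit}}/\|P_\perp\widehat h_{\mathrm{unit}}\|_2$ a unit vector in $h_{\mathrm{unit}}^\perp$ (if $P_\perp\widehat h_{\mathrm{unit}}=0$ there is nothing to prove), where $|\cos\theta'|=\cos\theta$. Since $Xh_{\mathrm{unit}}=0$, we get $X\widehat h_{\mathrm{unit}} = \sin\theta\, Xz$. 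Moreover $X$ maps $h_{\mathrm{unit}}^\perp$ with smallest gain at least $\sigma_2(X)$: by Courant--Fischer (as in Lemma~\ref{lem:error-matrix-bound}, with $j=2$ and the subspace $W=h_{\mathrm{unit}}^\perp$ of dimension $N-1$), $\min_{x\in h_{\mathrm{unit}}^\perp,\|x\|=1}\|Xx\|_2\ge\sigma_2(X)$; indeed since $h_{\mathrm{unit}}$ spans a null direction, the restriction of $X^\top X$ to $h_{\mathrm{unit}}^\perp$ has smallest eigenvalue $\sigma_2(X)^2$. Hence $\|X\widehat h_{\mathrm{unit}}\|_2\ge \sin\theta\,\sigma_2(X)$.

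Lower bound on $\|\widehat X\widehat h_{\mathrm{unit}}\|_2$: by the triangle inequality, $\|\widehat X\widehat h_{\mathrm{unit}}\|_2\ge \|X\widehat h_{\mathrm{unit}}\|_2-\|E\widehat h_{\mathrm{unit}}\|_2\ge \sin\theta\,\sigma_2(X)-\|E\|_{\mathrm{op}}$. Upper bound: $\widehat h_{\mathrm{unit}}$ is the minimizing right singular vector, so $\|\widehat X\widehat h_{\mathrm{unit}}\|_2=\sigma_1(\widehat X)\le\|\widehat X h_{\mathrm{unit}}\|_2=\|E h_{\mathrm{unit}}\|_2\le\|E\|_{\mathrm{op}}$ (equivalently Lemma~\ref{lem:error-matrix-bound} with $\sigma_1(X)=0$). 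Combining, $\sin\theta\,\sigma_2(X)\le 2\|E\|_{\mathrm{op}}$, which is the claim provided $\sigma_2(X)>0$; if $\sigma_2(X)=0$ the bound is vacuous.

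There is no real obstacle; the only point needing care is justifying $\|Xz\|_2\ge\sigma_2(X)$ for $z\perp h_{\mathrm{unit}}$, which follows since $h_{\mathrm{unit}}$ is an eigenvector of $X^\top X$ with eigenvalue $0=\sigma_1(X)^2$, so $h_{\mathrm{unit}}^\perp$ is an invariant subspace carrying the remaining eigenvalues $\sigma_2^2\le\dots\le\sigma_N^2$. Real versus complex scalars play no role since all matrices here are real.
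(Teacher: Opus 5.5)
Your proof is correct and follows the paper's argument. Minimality of $\widehat h_{\mathrm{unit}}$ gives $\|\widehat X\widehat h_{\mathrm{unit}}\|_2\le\|Eh_{\mathrm{unit}}\|_2\le\|E\|_{\mathrm{op}}$, the triangle inequality turns this into $\|X\widehat h_{\mathrm{unit}}\|_2\le 2\|E\|_{\mathrm{op}}$, and $Xh_{\mathrm{unit}}=0$ gives $\|X\widehat h_{\mathrm{unit}}\|_2=\|XP_\perp\widehat h_{\mathrm{unit}}\|_2\ge\sigma_2(X)\sin\theta$. Your justification of this last inequality, via invariance of $h_{\mathrm{unit}}^\perp$ under $X^\top X$, is a slightly more careful version of the paper's appeal to ``the definition of $\sigma_2(X)$.''
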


\begin{proof}
For any $z\in\mathbb R^N$, since $Xh_{\mathrm{unit}}=0$,
\begin{equation}
Xz = X z- Xh_{\mathrm{unit}}h_{\mathrm{unit}}^\top z=X(I-h_{\mathrm{unit}}h_{\mathrm{unit}}^\top)z=XP_{\perp}z.
\end{equation}
By definition of $\sigma_2(X)$, 
\begin{equation}\label{eq:gap-on-orthogonal-complement}
\|Xz\|_2=\|XP_{\perp}z\|_2 \,\ge\, \sigma_2(X)\,\|P_{\perp }z\|_2.
\end{equation}

Because $\widehat h_{\mathrm{unit}}$ is a right singular vector associated with the smallest singular value of
$\widehat X$, it minimizes $\|\widehat Xx\|_2$ over all unit vectors $x$. Hence
\begin{equation}
\|\widehat X \widehat h_{\mathrm{unit}}\|_2 \,\le\, \|\widehat X h_{\mathrm{unit}}\|_2
\,=\, \|Eh_{\mathrm{unit}}\|_2
\,\le\, \|E\|_{\mathrm{op}}.
\end{equation}
By the triangle inequality,
\begin{equation}
\|X\widehat h_{\mathrm{unit}}\|_2 \le \|\widehat X\widehat h_{\mathrm{unit}}\|_2+\|E\widehat h_{\mathrm{unit}}\|_2 \le 2\,\|E\|_{\mathrm{op}}.
\end{equation}
Applying \eqref{eq:gap-on-orthogonal-complement} with $z = \widehat h_{\mathrm{unit}}$, we obtain
\begin{equation}
\sin\theta=\|P_{\perp}\widehat h_{\mathrm{unit}}\|_2 \le \frac{\|X\widehat h_{\mathrm{unit}}\|_2}{\sigma_2(X)} \le \frac{2\,\|E\|_{\mathrm{op}}}{\sigma_2(X)}.
\end{equation}
\end{proof}

We can now combine the singular-value lower bound for the ideal matrix with the perturbation bounds
for $\widehat X$. The next theorem gives a clean guarantee in terms of the
commutator gap and the entrywise estimation error.

\begin{proof}[Proof of Theorem \ref{thm:robust-recovery-data-matrix}]
By Lemma~\ref{lem:empirical-gap-singular-value}, with probability at least $1-\eta/2$,
\begin{equation}
\label{sigma-2-lower-bound}
\sigma_2(X) \geq \sqrt{\frac{M_{\mathrm{probe}}3^{-k}\underline\pi_U^2}{8}}.
\end{equation}

Also, by the shadow-estimation assumption in the theorem, with probability at least $1-\eta/2$ all entrywise bounds $|E_{\ell v}| \leq \varepsilon_{\mathrm{cs}}$ hold simultaneously. Using
\begin{equation}
\varepsilon_{\mathrm{cs}} \leq \frac{\underline\pi_U\varepsilon}{8\sqrt{n}\,N\,3^{k/2}},
\end{equation}
the entrywise error bound implies
\begin{equation}
\label{eq:upp-bound-E_op}
\|E\|_{\mathrm{op}} \leq \|E\|_F \leq \varepsilon_{\mathrm{cs}}\sqrt{M_{\mathrm{probe}}N} \leq \frac{\varepsilon}{2}\sqrt{\frac{M_{\mathrm{probe}}3^{-k}\underline\pi_U^2}{16 n\,N}} \leq \frac{\varepsilon \sigma_2(X)}{2\sqrt{2nN}}.
\end{equation}
Let $\widehat h_{\mathrm{unit}}$ be the unit right singular vector of $\widehat X$ associated with its smallest singular value, and $\widehat H=\sqrt{n}\sum_{v\in\mathcal V}(\widehat h_{\mathrm{unit}})_v P_v$. Let $\theta \in [0,\pi/2]$ be the principal angle between $\widehat h_{\mathrm{unit}}$ and $h_{\mathrm{unit}}$, defined by
\begin{equation}
\cos\theta = |\langle \widehat h_{\mathrm{unit}},h_{\mathrm{unit}}\rangle|.
\end{equation}
Applying Lemma~\ref{lem:sin-theta-bound-error-matrix} and using \eqref{eq:upp-bound-E_op} gives
\begin{equation}
\sin\theta \leq \frac{2\|E\|_{\mathrm{op}}}{\sigma_2(X)} \leq \frac{\varepsilon}{\sqrt{2nN}}.
\end{equation}

\noindent Finally, choose $s \in \{\pm 1\}$ so that $\langle \widehat h_{\mathrm{unit}}, s\,h_{\mathrm{unit}}\rangle = |\langle \widehat h_{\mathrm{unit}},h_{\mathrm{unit}}\rangle|$. Then
\begin{equation}
\label{2norm}
\|\widehat h_{\mathrm{unit}} - s\,h_{\mathrm{unit}}\|_2^2 = 2 - 2|\langle \widehat h_{\mathrm{unit}},h_{\mathrm{unit}}\rangle| \leq 2\bigl(1 - |\langle \widehat h_{\mathrm{unit}},h_{\mathrm{unit}}\rangle|^2\bigr) = 2\sin^2\theta \leq \frac{\varepsilon^2}{nN}.
\end{equation}
Since $\widehat H=\sqrt{n}\sum_v(\widehat h_{\mathrm{unit}})_vP_v$ and $H=\sqrt{n}\sum_v(h_{\mathrm{unit}})_vP_v$, each Pauli has operator norm $1$, so applying Cauchy--Schwarz and \eqref{2norm} gives
\begin{equation} \|\widehat{H}-sH\|_{\mathrm{op}}\le \sqrt{n}\sum_{v\in \mathcal{V}}\bigl|(\widehat{h}_{\mathrm{unit}})_v-s (h_{\mathrm{unit}})_v\bigr|\le \sqrt{nN}\,\|\widehat h_{\mathrm{unit}}-s\,h_{\mathrm{unit}}\|_2\le \varepsilon.\end{equation}
as claimed. \eqref{singular-gap-thm-claim} follows from applying Lemma~\ref{lem:error-matrix-bound} to $\sigma_2(\widehat X)$ and $\sigma_1(\widehat X)$
\begin{equation}\label{eq:lower-bound-diff-singular-values}
\sigma_2(\widehat X)-\sigma_1(\widehat X)\ge \sigma_2(X)-2\|E\|_{\mathrm{op}} \ge \left(1-\frac{\varepsilon}{\sqrt{2nN}}\right)\sigma_2(X)\ge \Omega\Bigg(\sqrt{N\log(N/\eta)}\Bigg).
\end{equation}
\end{proof}

\subsection{Estimating a representative evolution time }\label{sec:time-estimation}

The procedure in this subsection is an optional coherent post-processing step and is not part of Algorithm~\ref{alg:learn-H-general}. Unlike the learning algorithm, which uses product-state probes and randomized Pauli measurements, the time-estimation routine assumes the ability to prepare a maximally entangled state, to apply $U(t_\star)$ coherently to half of that state, to coherently implement $e^{i\widehat H g}$, and to measure the projector onto the maximally entangled state. It does not require controlled access to $U(t_\star)$.

There is an unavoidable change in the loss function. Without controlled access to $U(t_\star)$, the global phase of $U(t_\star)$ is not observable. Therefore the natural target is projective Frobenius distance, or equivalently distance between the induced unitary channels. For a candidate sign and time pair $(s,g) \in \{\pm 1\} \times [0,T]$, define
\begin{equation}
D_{\rm proj}(s,g) := \min_{\phi \in \mathbb R} \frac{1}{2d}\left\|e^{i\phi}e^{-is\widehat H g} - U(t_\star)\right\|_F^2.
\end{equation}
For unitaries, this has the closed form
\begin{equation}
D_{\rm proj}(s,g) = 1 - \left|\frac{1}{d}\operatorname{tr}\!\left(e^{is\widehat H g}U(t_\star)\right)\right|.
\label{eq:projective-loss}
\end{equation}

Let
\begin{equation}
\ket{\Phi_d} := \frac{1}{\sqrt d}\sum_{k = 1}^{d} \ket{k}\ket{k}
\end{equation}
be the maximally entangled state. For every $X \in \mathbb C^{d \times d}$,
\begin{equation}
\bra{\Phi_d}(X \otimes I)\ket{\Phi_d} = \frac{1}{d}\operatorname{tr}(X).
\end{equation}
Hence, if we define
\begin{equation}
W_{s,g} := e^{is\widehat H g}U(t_\star),
\end{equation}
then the Choi-overlap experiment
\begin{equation}
\ket{\Phi_d} \mapsto (W_{s,g} \otimes I)\ket{\Phi_d}
\end{equation}
followed by a measurement of the projector $\ket{\Phi_d}\!\bra{\Phi_d}$ succeeds with probability
\begin{equation}
p(s,g) = \left|\bra{\Phi_d}(W_{s,g} \otimes I)\ket{\Phi_d}\right|^2 = \left|\frac{1}{d}\operatorname{tr}\!\left(e^{is\widehat H g}U(t_\star)\right)\right|^2.
\label{eq:choi-overlap-probability}
\end{equation}
Thus maximizing $p(s,g)$ over a grid is equivalent to minimizing the projective loss \eqref{eq:projective-loss}.

\begin{algorithm}[H]
\caption{Projective Frobenius time estimation without controlled-$U$}
\label{alg:projective-time-estimation}
\begin{algorithmic}[1]
\Require
Known normalized Hamiltonian direction $\widehat H$ with $\tfrac{1}{d}\operatorname{tr}(\widehat H^2) = n$; ordinary coherent query access to $U(t_\star)$ for some unknown $t_\star \in [0,T]$; target projective loss $\varepsilon \in (0,1)$; failure probability $\delta \in (0,1)$.
\Ensure
A sign and time pair $(\widehat s,\widehat t) \in \{\pm 1\} \times [0,T]$ with small projective Frobenius loss.

\State Set the grid spacing
\begin{equation}
\tau := \frac{\varepsilon}{16\|\widehat H\|_{\mathrm{op}}}.
\end{equation}
\State Form the grid
\begin{equation}
\mathcal G := \left\{0,\tau,2\tau,\ldots,\left\lfloor \frac{T}{\tau}\right\rfloor\tau\right\}\cup\{T\}.
\end{equation}
\State Set
\begin{equation}
m := \left\lceil \frac{32}{\varepsilon^2}\log\!\left(\frac{4|\mathcal G|}{\delta}\right)\right\rceil.
\end{equation}
\For{$s \in \{\pm 1\}$ and $g \in \mathcal G$}
    \For{$j = 1,\ldots,m$}
        \State Prepare
        \begin{equation}
        \ket{\Phi_d} := \frac{1}{\sqrt d}\sum_{k = 1}^{d}\ket{k}\ket{k}.
        \end{equation}
        \State Apply $U(t_\star)$ to the first register.
        \State Apply $e^{is\widehat H g}$ to the first register.
        \State Measure the two-register projector $\ket{\Phi_d}\!\bra{\Phi_d}$, and record the Bernoulli outcome $Y_{s,g,j} \in \{0,1\}$.
    \EndFor
    \State Set
    \begin{equation}
    \widehat p(s,g) := \frac{1}{m}\sum_{j = 1}^{m} Y_{s,g,j}.
    \end{equation}
\EndFor
\State Return any maximizer
\begin{equation}
(\widehat s,\widehat t) \in \arg\max_{s \in \{\pm 1\},\, g \in \mathcal G} \widehat p(s,g).
\end{equation}
\end{algorithmic}
\end{algorithm}

\begin{theorem}[Representative time estimation without controlled-$U$]
\label{thm:projective-time-estimation}
Assume first that $\widehat H = H$ exactly and that $U(t_\star) = e^{-iHt_\star}$ for some $t_\star \in [0,T]$. With probability at least $1 - \delta$, Algorithm~\ref{alg:projective-time-estimation} returns $(\widehat s,\widehat t)$ satisfying
\begin{equation}
\min_{\phi \in \mathbb R} \frac{1}{2d}\left\|e^{i\phi}e^{-i\widehat s H\widehat t} - U(t_\star)\right\|_F^2 \leq \varepsilon.
\end{equation}
Supposing that $T \|H\|_{\rm op}/\varepsilon \gtrsim 1$, the algorithm uses
\begin{equation}
O\!\left(\frac{T\|H\|_{\mathrm{op}}}{\varepsilon^3}\log\frac{T\|H\|_{\mathrm{op}}}{\varepsilon\delta}\right)
\end{equation}
ordinary coherent queries to $U(t_\star)$, together with coherent implementations of $e^{\pm iHg}$ and measurements of $\ket{\Phi_d}\!\bra{\Phi_d}$.
\end{theorem}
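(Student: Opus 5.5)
The proof has three parts: a discretization argument showing that some grid point has large overlap, a concentration argument for the empirical frequencies $\widehat p(s,g)$, and a final comparison that turns near-maximality of $\widehat p$ into a bound on the projective loss. Throughout I work with the fidelity $F(s,g):=\bigl|\tfrac1d\operatorname{tr}(e^{isHg}U(t_\star))\bigr|$. By \eqref{eq:projective-loss}, $D_{\rm proj}(s,g)=1-F(s,g)$, and by \eqref{eq:choi-overlap-probability}, $p(s,g)=F(s,g)^2$.

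\emph{Step 1 (a good grid point exists).} Take $s=+1$ and let $g^\star\in\mathcal G$ be the grid point nearest to $t_\star$. Since the spacing is at most $\tau$ and $T\in\mathcal G$, we have $|g^\star-t_\star|\le\tau$. Because $W_{+,g^\star}=e^{iH(g^\star-t_\star)}$, the trace is $\tfrac1d\sum_p e^{iE_p(g^\star-t_\star)}$. The bound $|1-e^{ix}|\le|x|$ then gives
\begin{equation}
F(+,g^\star)\ge 1-\|H\|_{\rm op}\tau=1-\varepsilon/16,
\end{equation}
and hence $p(+,g^\star)\ge (1-\varepsilon/16)^2\ge 1-\varepsilon/8$.

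\emph{Step 2 (concentration).} Each $\widehat p(s,g)$ is an average of $m$ i.i.d. Bernoulli$(p(s,g))$ outcomes. Hoeffding's inequality, combined with a union bound over the $2|\mathcal G|$ pairs, shows that with probability at least $1-\delta$ every estimate satisfies
\begin{equation}
|\widehat p(s,g)-p(s,g)|\le\varepsilon/8 .
\end{equation}
This follows from the stated choice of $m$ since $2\exp(-2m\varepsilon^2/64)\le\delta/(2|\mathcal G|)$ once $m\ge \tfrac{32}{\varepsilon^2}\log(4|\mathcal G|/\delta)$; only the constants need checking here.

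\emph{Step 3 (the maximizer is good).} On the event of Step 2, the returned pair satisfies
\begin{equation}
p(\widehat s,\widehat t)\ge \widehat p(\widehat s,\widehat t)-\tfrac{\varepsilon}{8}\ge \widehat p(+,g^\star)-\tfrac{\varepsilon}{8}\ge p(+,g^\star)-\tfrac{\varepsilon}{4}\ge 1-\tfrac{3\varepsilon}{8}.
\end{equation}
Since $F\le1$, we have $1-F\le 1-F^2=1-p$, so $D_{\rm proj}(\widehat s,\widehat t)\le 3\varepsilon/8\le\varepsilon$. This is the claimed projective loss; note the sign $\widehat s$ is allowed to differ from $+1$.

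\emph{Query count.} The total number of uses of $U(t_\star)$ is $2|\mathcal G|m$. We have $|\mathcal G|\le T/\tau+2=O(T\|H\|_{\rm op}/\varepsilon)$ under the assumption $T\|H\|_{\rm op}/\varepsilon\gtrsim1$, and $m=O(\varepsilon^{-2}\log(|\mathcal G|/\delta))$. Multiplying gives the stated $O\bigl(\tfrac{T\|H\|_{\rm op}}{\varepsilon^3}\log\tfrac{T\|H\|_{\rm op}}{\varepsilon\delta}\bigr)$ bound.

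I do not expect a genuine obstacle. The only care needed is in the constants: converting between $F$, $F^2$ and $D_{\rm proj}$, and matching the Hoeffding and union-bound parameters to the exact choice of $m$ in the algorithm. The grid step must also rely on $\widehat H=H$, so that $e^{isHg}U(t_\star)$ is diagonal in the eigenbasis of $H$ with phases $e^{iE_p(g-t_\star)}$.
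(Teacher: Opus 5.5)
Your proposal is correct and follows the paper's proof essentially step for step. It uses the same Hoeffding-plus-union-bound event at accuracy $\varepsilon/8$, the same nearest-grid-point argument giving $p(+1,g_0)\ge 1-\varepsilon/8$, the same chain yielding $p(\widehat s,\widehat t)\ge 1-3\varepsilon/8$, and the same conversion to $D_{\rm proj}\le 3\varepsilon/8$; the only cosmetic difference is that you bound the overlap at the good grid point via the eigenvalue expansion and $|1-e^{ix}|\le|x|$, whereas the paper phrases it as the operator-norm bound $\|e^{-iHg_0}-e^{-iHt_\star}\|_{\rm op}\le\|H\|_{\rm op}\tau$.
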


\begin{proof}
For each fixed pair $(s,g)$, Hoeffding's inequality gives
\begin{equation}
\Pr\!\left(\left|\widehat p(s,g) - p(s,g)\right| > \frac{\varepsilon}{8}\right) \leq 2\exp\!\left(-\frac{m\varepsilon^2}{32}\right).
\end{equation}
By the choice of $m$ and a union bound over the $2|\mathcal G|$ sign and time pairs, with probability at least $1 - \delta$,
\begin{equation}
\left|\widehat p(s,g) - p(s,g)\right| \leq \frac{\varepsilon}{8}
\end{equation}
simultaneously for every $s \in \{\pm 1\}$ and every $g \in \mathcal G$.

Let $g_0 \in \mathcal G$ satisfy $|g_0 - t_\star| \leq \tau$. Such a point exists by construction of the grid. Since $\widehat H = H$,
\begin{equation}
\left\|e^{-iHg_0} - e^{-iHt_\star}\right\|_{\mathrm{op}} \leq \|H\|_{\mathrm{op}}|g_0 - t_\star| \leq \|H\|_{\mathrm{op}}\tau = \frac{\varepsilon}{16}.
\end{equation}
Using \eqref{eq:projective-loss} with $s = +1$, this implies
\begin{equation}
D_{\rm proj}(+1,g_0) \leq \frac{\varepsilon}{16}.
\end{equation}
Therefore
\begin{equation}
p(+1,g_0) = \left(1 - D_{\rm proj}(+1,g_0)\right)^2 \geq 1 - \frac{\varepsilon}{8}.
\end{equation}
On the simultaneous estimation event, the maximizer $(\widehat s,\widehat t)$ satisfies
\begin{equation}
p(\widehat s,\widehat t) \geq \widehat p(\widehat s,\widehat t) - \frac{\varepsilon}{8} \geq \widehat p(+1,g_0) - \frac{\varepsilon}{8} \geq p(+1,g_0) - \frac{\varepsilon}{4} \geq 1 - \frac{3\varepsilon}{8}.
\end{equation}
Consequently,
\begin{equation}
D_{\rm proj}(\widehat s,\widehat t) = 1 - \sqrt{p(\widehat s,\widehat t)} \leq \frac{3\varepsilon}{8} \leq \varepsilon.
\end{equation}
This proves the loss guarantee. The query complexity follows from
\begin{equation}
|\mathcal G| = O\!\left(\frac{T\|H\|_{\mathrm{op}}}{\varepsilon}\right)
\end{equation}
and the value of $m$.
\end{proof}

\begin{corollary}[Representative time estimation after Hamiltonian learning]
\label{cor:projective-time-estimation-after-learning}
Let $U = e^{-iHt_\star}$ for some $t_\star \in [0,T]$, with $T > 0$. Suppose Algorithm~\ref{alg:learn-H-general} returns $\widehat H$ with the same normalization $\tfrac{1}{d}\operatorname{tr}(\widehat H^2) = n$ and satisfying
\begin{equation}
\min_{s \in \{\pm 1\}}\|\widehat H - sH\|_{\mathrm{op}} \leq \varepsilon'.
\end{equation}
Run Algorithm~\ref{alg:projective-time-estimation} with $\widehat H$ in place of $H$. Then, with probability at least $1 - \delta$, the output $(\widehat s,\widehat t)$ satisfies
\begin{equation}
\min_{\phi \in \mathbb R} \frac{1}{2d}\left\|e^{i\phi}e^{-i\widehat s\widehat H\widehat t} - U(t_\star)\right\|_F^2 \leq O(\varepsilon + T\varepsilon').
\end{equation}
In particular, if $T\varepsilon' \leq \varepsilon$, then the projective Frobenius loss is $O(\varepsilon)$.
\end{corollary}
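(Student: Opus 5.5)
The plan is to rerun the proof of Theorem~\ref{thm:projective-time-estimation} with $\widehat H$ in place of $H$, replacing its single exact-Hamiltonian step by a perturbative one. Note that the algorithm only ever uses $\widehat H$: the grid spacing $\tau=\varepsilon/(16\|\widehat H\|_{\mathrm{op}})$, the unitaries $e^{is\widehat H g}$, and the projective loss $D_{\rm proj}(s,g)$ in \eqref{eq:projective-loss} are all defined through $\widehat H$. Moreover, the quantity to be bounded in the corollary is exactly $D_{\rm proj}(\widehat s,\widehat t)$. So the Hoeffding and union-bound step transfers verbatim: with probability at least $1-\delta$ we have $|\widehat p(s,g)-p(s,g)|\le \varepsilon/8$ simultaneously on the whole grid. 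The only place where $\widehat H=H$ was used is the existence of a good grid point, and that is the step to redo.

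First, fix $s_0\in\{\pm1\}$ attaining $\|\widehat H-s_0H\|_{\mathrm{op}}\le\varepsilon'$. Since $s_0^2=1$, this gives $\|s_0\widehat H-H\|_{\mathrm{op}}\le\varepsilon'$. Next, pick $g_0\in\mathcal G$ with $|g_0-t_\star|\le\tau$. Using the standard bound $\|e^{-iA}-e^{-iB}\|_{\mathrm{op}}\le\|A-B\|_{\mathrm{op}}$ for Hermitian $A,B$ and the triangle inequality, I will show
\begin{equation}
\bigl\|e^{-is_0\widehat H g_0}-e^{-iHt_\star}\bigr\|_{\mathrm{op}}\le \bigl\|e^{-is_0\widehat H g_0}-e^{-is_0\widehat H t_\star}\bigr\|_{\mathrm{op}}+\bigl\|e^{-is_0\widehat H t_\star}-e^{-iHt_\star}\bigr\|_{\mathrm{op}}\le \|\widehat H\|_{\mathrm{op}}\tau+t_\star\varepsilon'\le \frac{\varepsilon}{16}+T\varepsilon'.
\end{equation}
For unitaries $V,W$ we have $1-\bigl|\tfrac1d\tr(V^\dagger W)\bigr|\le \tfrac1d\bigl|\tr(V^\dagger(V-W))\bigr|\le\|V-W\|_{\mathrm{op}}$. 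Applying this yields
\begin{equation}
D_{\rm proj}(s_0,g_0)\le \frac{\varepsilon}{16}+T\varepsilon',
\end{equation}
and therefore
\begin{equation}
p(s_0,g_0)=\bigl(1-D_{\rm proj}(s_0,g_0)\bigr)^2\ge 1-\frac{\varepsilon}{8}-2T\varepsilon'.
\end{equation}

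Finally, on the estimation event the maximizer satisfies
\begin{equation}
p(\widehat s,\widehat t)\ge p(s_0,g_0)-\frac{\varepsilon}{4}\ge 1-\frac{3\varepsilon}{8}-2T\varepsilon'.
\end{equation}
Since $D_{\rm proj}=1-\sqrt p\le 1-p$ for $p\in[0,1]$, this gives $D_{\rm proj}(\widehat s,\widehat t)\le \tfrac{3\varepsilon}{8}+2T\varepsilon'=O(\varepsilon+T\varepsilon')$. In particular the loss is $O(\varepsilon)$ when $T\varepsilon'\le\varepsilon$.

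The only delicate point I expect is bookkeeping around the sign. The learned $\widehat H$ may approximate $-H$, and this is exactly why the algorithm scans $s\in\{\pm1\}$; the argument only needs the good comparison point $(s_0,g_0)$ to lie in the scanned set. A second point to check is that the grid spacing is set by $\|\widehat H\|_{\mathrm{op}}$ rather than $\|H\|_{\mathrm{op}}$. That is why I compare $e^{-is_0\widehat H g_0}$ with $e^{-is_0\widehat H t_\star}$ first, so that the time discretization error is controlled by $\|\widehat H\|_{\mathrm{op}}\tau=\varepsilon/16$, and only afterward pay the Hamiltonian error $t_\star\varepsilon'\le T\varepsilon'$.
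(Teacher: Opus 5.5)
Your proposal is correct and follows essentially the same route as the paper. Both fix the good sign $s_0$, use Duhamel together with the trace bound $1-\bigl|\tfrac1d\operatorname{tr}(V^\dagger W)\bigr|\le\|V-W\|_{\mathrm{op}}$ to exhibit a scanned grid point with loss $O(\varepsilon)+T\varepsilon'$, and then reuse the Hoeffding and maximization step of Theorem~\ref{thm:projective-time-estimation} unchanged. Your only deviation is interpolating through $e^{-is_0\widehat H t_\star}$ rather than comparing $e^{-is_0\widehat H g}$ with $e^{-iHg}$ at equal times as the paper does; this is a mild improvement, since it keeps the discretization error at exactly $\|\widehat H\|_{\mathrm{op}}\tau=\varepsilon/16$ (the paper implicitly needs $\|H\|_{\mathrm{op}}\tau=O(\varepsilon)$) and gives the explicit bound $3\varepsilon/8+2T\varepsilon'$.
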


\begin{proof}
Let $s_0 \in \{\pm 1\}$ satisfy
\begin{equation}
\|s_0\widehat H - H\|_{\mathrm{op}} \leq \varepsilon'.
\end{equation}
For every $g \in [0,T]$, Duhamel's formula gives
\begin{equation}
\left\|e^{-is_0\widehat H g} - e^{-iHg}\right\|_{\mathrm{op}} \leq g\|s_0\widehat H - H\|_{\mathrm{op}} \leq T\varepsilon'.
\end{equation}
The projective loss changes by at most this operator-norm perturbation, since for any unitaries $V,V',U$,
\begin{equation}
\left|\left|\frac{1}{d}\operatorname{tr}(V^\dagger U)\right| - \left|\frac{1}{d}\operatorname{tr}((V')^\dagger U)\right|\right| \leq \frac{1}{d}\left|\operatorname{tr}\!\left((V - V')^\dagger U\right)\right| \leq \|V - V'\|_{\mathrm{op}}.
\end{equation}
Thus the exact-Hamiltonian argument in Theorem~\ref{thm:projective-time-estimation}, applied to the candidate sign $s_0$, incurs an additional additive loss at most $T\varepsilon'$. The empirical maximization step contributes the same $O(\varepsilon)$ term as in Theorem~\ref{thm:projective-time-estimation}. Combining the two contributions gives the claim.
\end{proof}

\section{Weak equilibration}
\label{sec:equilibration}

Equilibration asks whether, under the unitary dynamics generated by a fixed Hamiltonian $H$, simple observables lose memory of their initial configuration at long times, at least up to small fluctuations. This question is physically important because the microscopic dynamics is reversible: any irreversible-looking decay of local correlations must arise from dephasing and from the complexity of the spectrum, rather than from coupling to an external bath.

In this section we derive a correlation consequence of the dynamical commutator gap. The result is weaker than thermalization. It does not show that local autocorrelations vanish, and it does not control arbitrary local correlators. It shows that every normalized local observable orthogonal to $H$ loses an inverse-polynomial amount of its infinite-temperature autocorrelation at the sampled time. Equivalently, within the chosen local span, the Hamiltonian direction is the only direction that can remain almost fixed under the sampled evolution.

Throughout this section, let $\mathcal{V}$ be a generic finite Pauli family of size $N=|\mathcal{V}|$, for example $\mathcal{V}_{\mathrm{inj}}$ in the Pauli-injective regime of Appendix~\ref{subsec:pauli-injective-commutator-gaps} or $\mathcal{V}_{k,R}$ in the geometrically local regime of Appendix~\ref{subsec:geometric-local-nondegeneracy}, and let $\mathcal{O}_{\mathcal{V}}^{\mathrm{op}}$ denote the corresponding real local Pauli span. We write $U(t) = e^{-iHt}$, and for a Hermitian observable $A$, we use the Heisenberg convention $A(t) := U(t)^\dagger A U(t)$.

The input to this section is a uniform dynamical commutator gap. Namely, suppose that for a given time $t$, or on a high-probability event over the choice of $t$, one has
\begin{align}
\pi_U:=\min_{\substack{A \in \mathcal{O}_{\mathcal{V}}^{\mathrm{op}}\\ \tfrac{1}{d}\tr(A^2) = 1,\ \tr(AH) = 0}} \frac{1}{d}\|[U(t),A]\|_F^2 \geq \underline{\pi}_U.
\label{eq:equilibration-input-piU}
\end{align}
In the rest of the paper, such a $\underline{\pi}_U$ is obtained by combining the static commutator gap $\underline{\pi}_H$ bound with the uniform sinc reduction.

\begin{lemma}[Weak local autocorrelation decay]
\label{lem:weak-equilibration}
Assume the uniform dynamical commutator gap \eqref{eq:equilibration-input-piU}. Then every Hermitian $A \in \mathcal{O}_{\mathcal{V}}^{\mathrm{op}}$ satisfying $\frac{1}{d}\tr(A^2) = 1$ and $\tr(AH) = 0$ obeys
\begin{align}
\frac{1}{d}\tr(A(t)A) \leq 1 - \frac{\pi_U}{2} \leq 1 - \frac{\underline{\pi}_U}{2}.
\label{eq:weak-equilibration-bound}
\end{align}
Consequently, if $\underline{\pi}_U \geq 1/\operatorname{poly}(n)$, then every normalized local observable orthogonal to $H$ loses at least an inverse-polynomial fraction of its infinite-temperature autocorrelation.
\end{lemma}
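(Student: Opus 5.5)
The plan is to prove the exact identity
\begin{align}
\frac{1}{2d}\|[U(t),A]\|_F^2 = \frac{1}{d}\tr(A^2) - \frac{1}{d}\tr(A(t)A),
\end{align}
valid for every Hermitian $A$. The bound \eqref{eq:weak-equilibration-bound} then follows by rearrangement.

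First, I will expand the commutator using unitarity of $U=U(t)$. We have $\|[U,A]\|_F^2 = \tr\bigl((UA-AU)^\dagger(UA-AU)\bigr)$. Since $A$ is Hermitian and $U^\dagger U = I$, this equals
\begin{align}
\tr(AU^\dagger UA) + \tr(U^\dagger A A U) - \tr(AU^\dagger AU) - \tr(U^\dagger A U A).
\end{align}
The first two terms are each $\tr(A^2)$ by cyclicity. The last two are each $\tr(A(t)A)$, again by cyclicity and the definition $A(t)=U^\dagger AU$. This quantity is real, since $\tr(A(t)A)=\tr\bigl((A(t)A)^\dagger\bigr)^*$ and both factors are Hermitian. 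Hence $\|[U,A]\|_F^2 = 2\tr(A^2) - 2\tr(A(t)A)$, which after dividing by $2d$ gives the displayed identity. Equivalently, it is $\|U^\dagger AU - A\|_{HS}^2$, the same computation already used in the proof of Lemma~\ref{lem:projection-heisenberg-difference}.

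Second, I will specialize to admissible $A$, meaning $A\in\mathcal{O}_{\mathcal{V}}^{\mathrm{op}}$ with $\tfrac1d\tr(A^2)=1$ and $\tr(AH)=0$. Such an $A$ lies in the feasible set of the minimization defining $\pi_U$ in \eqref{eq:equilibration-input-piU}, so $\tfrac1d\|[U(t),A]\|_F^2 \ge \pi_U \ge \underline{\pi}_U$. Substituting into the identity gives
\begin{align}
\frac{1}{d}\tr(A(t)A) = 1 - \frac{1}{2d}\|[U(t),A]\|_F^2 \le 1 - \frac{\pi_U}{2} \le 1 - \frac{\underline{\pi}_U}{2}.
\end{align}

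Third, the ``consequently'' clause follows directly. If $\underline{\pi}_U\ge 1/\operatorname{poly}(n)$, then $1-C_A(t)\ge \underline{\pi}_U/2$ is also inverse-polynomial, uniformly over all admissible $A$.

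There is no real obstacle. The only point needing care is that the autocorrelation is real, and that the cross terms combine with a factor of $2$ instead of appearing as a complex conjugate pair. Both are guaranteed by the Hermiticity of $A$ and $A(t)$.
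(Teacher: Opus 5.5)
Your proposal is correct and follows essentially the same route as the paper: the paper first uses unitary invariance to write $\|[U(t),A]\|_F=\|A-A(t)\|_F$ and then expands the square, whereas you expand $\|[U,A]\|_F^2$ directly. Both arrive at the identity $\tfrac{1}{2d}\|[U(t),A]\|_F^2 = 1-\tfrac1d\tr(A(t)A)$, and then apply the feasibility of $A$ in the definition of $\pi_U$.
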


\begin{proof}
Since $U(t)$ is unitary,
\begin{align}
\|[U(t),A]\|_F = \|U(t)^\dagger[U(t),A]\|_F = \|A - U(t)^\dagger A U(t)\|_F = \|A - A(t)\|_F.
\end{align}
Because $A$ and $A(t)$ are Hermitian, we can expand the square:
\begin{align}
\frac{1}{d}\|A - A(t)\|_F^2 &= \frac{1}{d}\tr(A^2) + \frac{1}{d}\tr(A(t)^2) - \frac{2}{d}\tr(A(t)A) = 2 - \frac{2}{d}\tr(A(t)A),
\end{align}
where we used $\tfrac{1}{d}\tr(A^2) = 1$ and $\tfrac{1}{d}\tr(A(t)^2) = \tfrac{1}{d}\tr(A^2) = 1$. Therefore
\begin{align}
\frac{1}{2d}\|[U(t),A]\|_F^2 = 1 - \frac{1}{d}\tr(A(t)A).
\label{eq:commutator-autocorrelation-identity}
\end{align}
Using \eqref{eq:equilibration-input-piU}, we obtain
\begin{align}
1 - \frac{1}{d}\tr(A(t)A) \geq \frac{\pi_U}{2} \geq \frac{\underline{\pi}_U}{2},
\end{align}
which gives \eqref{eq:weak-equilibration-bound}.
\end{proof}

Combining Lemma~\ref{lem:weak-equilibration} with the uniform sinc reduction gives the following direct corollary in terms of the static gap.

\begin{corollary}[Autocorrelation decay from the static commutator gap]
\label{cor:autocorrelation-decay-from-static-gap}
Suppose $H$ is a traceless Hamiltonian normalized so that $\tfrac{1}{d}\tr(H^2)=n$ and that the static commutator gap
\begin{align}
\min_{\substack{A \in \mathcal{O}_{\mathcal{V}}^{\mathrm{op}}\\ \tfrac{1}{d}\tr(A^2) = 1,\ \tr(AH) = 0}} \frac{1}{d}\|[A,H]\|_F^2 \geq \underline{\pi}_H
\end{align}
holds. Suppose also that the uniform sinc reduction gives, for $t \sim \operatorname{Unif}([0,T])$, the simultaneous bound
\begin{align}
\frac{1}{d}\|[U(t),A]\|_F^2 \geq \frac{\delta^2\, \min\{T^2,1/n\}}{2048|\mathcal{V}|^3}\,\underline{\pi}_H
\end{align}
for all normalized $A \in \mathcal{O}_{\mathcal{V}}^{\mathrm{op}}$ orthogonal to $H$, with probability at least $1-\delta$. Then, with the same probability, all such observables satisfy
\begin{align}
\frac{1}{d}\tr(A(t)A) \leq 1 - \frac{\delta^2\, \min \{T^2,1/n\}}{4096|\mathcal{V}|^3}\,\underline{\pi}_H.
\label{eq:autocorrelation-from-static-gap}
\end{align}
In particular, if $\underline{\pi}_H \geq 1/\operatorname{poly}(n)$, $|\mathcal{V}| = \operatorname{poly}(n)$, and $\min\{T^2,1/n\} \geq 1/\operatorname{poly}(n)$, then the right-hand side of \eqref{eq:autocorrelation-from-static-gap} is at most $1 - \frac{1}{\operatorname{poly}(n)}$.
\end{corollary}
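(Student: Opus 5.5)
The corollary follows by composing the hypothesized dynamical bound with the exact identity from the proof of Lemma~\ref{lem:weak-equilibration}. I would not reprove the sinc reduction; its output is assumed in the statement. The hypothesis gives an event $\mathcal{E}$ of probability at least $1-\delta$ over $t\sim\operatorname{Unif}([0,T])$. On $\mathcal{E}$, every normalized $A\in\mathcal{O}_{\mathcal{V}}^{\mathrm{op}}$ with $\tr(AH)=0$ satisfies
\begin{align}
\frac{1}{d}\|[U(t),A]\|_F^2 \geq \underline{\pi}_U := \frac{\delta^2\min\{T^2,1/n\}}{2048|\mathcal{V}|^3}\,\underline{\pi}_H .
\end{align}
The important feature is that this bound holds simultaneously for all such $A$, for the single sampled $t$.

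Next, I would fix $t\in\mathcal{E}$ and apply Lemma~\ref{lem:weak-equilibration} at that $t$ with this value of $\underline{\pi}_U$. Its hypothesis \eqref{eq:equilibration-input-piU} is exactly the uniform bound just stated. Alternatively, one can use the identity \eqref{eq:commutator-autocorrelation-identity} directly:
\begin{align}
\frac{1}{d}\tr(A(t)A) = 1-\frac{1}{2d}\|[U(t),A]\|_F^2 \leq 1-\frac{\underline{\pi}_U}{2}.
\end{align}
This identity uses only unitarity of $U(t)$, Hermiticity of $A$, and $\tfrac{1}{d}\tr(A^2)=1$. Halving the constant $2048$ gives $4096$, which is exactly \eqref{eq:autocorrelation-from-static-gap}. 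The argument is pointwise in $t$, so the conclusion holds on the same event $\mathcal{E}$, with the same probability $1-\delta$.

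For the ``in particular'' clause, I would substitute the polynomial bounds. If $\underline{\pi}_H\geq 1/\operatorname{poly}(n)$, $|\mathcal{V}|=\operatorname{poly}(n)$ so that $|\mathcal{V}|^{-3}\geq 1/\operatorname{poly}(n)$, and $\min\{T^2,1/n\}\geq 1/\operatorname{poly}(n)$, then the subtracted term is inverse-polynomial. This implicitly needs $\delta \geq 1/\operatorname{poly}(n)$, for instance a constant $\delta$, which I would state explicitly.

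There is no real obstacle here. The only point needing care is the order of quantifiers: the autocorrelation bound must hold for all $A$ after $t$ has been sampled. This is already supplied by the uniformity of the sinc reduction in Theorem~\ref{relation-commutator-A-U}, which the corollary takes as a hypothesis.
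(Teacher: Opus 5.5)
Your proposal is correct and is the paper's proof: on the same probability-$(1-\delta)$ event, it applies Lemma~\ref{lem:weak-equilibration} with $\underline{\pi}_U := \frac{\delta^2\min\{T^2,1/n\}}{2048|\mathcal{V}|^3}\,\underline{\pi}_H$, and halving gives the constant $4096$. Your remark that the ``in particular'' clause also tacitly needs $\delta \geq 1/\operatorname{poly}(n)$ is a fair point; the paper leaves it implicit.
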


\begin{proof}
Apply Lemma~\ref{lem:weak-equilibration} with
\begin{align}
\underline{\pi}_U := \frac{\delta^2\min\{T^2,1/n\}}{2048|\mathcal{V}|^3}\,\underline{\pi}_H.
\end{align}
Then
\begin{align}
\frac{1}{d}\tr(A(t)A) \leq 1 - \frac{\underline{\pi}_U}{2} = 1 -\frac{\delta^2\, \min\{T^2,1/n\}}{4096|\mathcal{V}|^3}\,\underline{\pi}_H,
\end{align}
as claimed.
\end{proof}

The physical interpretation is that the local sector has no almost-conserved direction other than the Hamiltonian direction itself. At infinite temperature, the quantity
\begin{align}
C_A(t) := \frac{1}{d}\,\tr(A(t)A)
\end{align}
is the normalized autocorrelation of the observable $A$. If $C_A(t)$ is close to $1$, then $A(t)$ remains very close to $A$, so $A$ behaves as an approximately conserved local observable at time $t$. The commutator gap rules this out uniformly over all normalized local $A$ orthogonal to $H$.

This is a weak equilibration statement. It rules out local observables orthogonal to $H$ that remain nearly fixed at the sampled time, but it does not imply that $C_A(t)$ is close to zero. It also does not imply state-dependent thermalization, monotone relaxation, or decay of general cross-correlations such as $\frac{1}{d}\tr(A(t)B)$. A unitary could still rotate one local direction into another. The commutator gap excludes only the obstruction relevant for learning, namely a non-Hamiltonian local direction that remains close to itself under the sampled dynamics.

For the random Hamiltonian ensembles considered in the commutator gap Appendix, the static gap is inverse-polynomial with high probability. The uniform sinc reduction then implies that, for most sampled times in some interval, the weak autocorrelation decay bound above holds simultaneously for every normalized local observable orthogonal to the Hamiltonian.

\section{Additional Numerics}\label{Additional Numerics}

\setcounter{figure}{0}
\renewcommand{\thefigure}{\thesection.\arabic{figure}}
\renewcommand*{\theHfigure}{\thesection.\arabic{figure}}

The preceding appendices give uniform high-probability guarantees for Hamiltonian identifiability and learning. These guarantees are conservative: they hold simultaneously over all normalized local observables orthogonal to $H$, over random Hamiltonian instances, and for most sampled times. The numerical experiments below illustrate the typical finite-size behavior behind these bounds.  We examine reconstruction accuracy from finite product-state probes and finite-shot shadow estimates, and we compute the commutator-gap and weak-equilibration quantities that enter the proof. Across the transverse-field Ising model, the XYZ chain, and a general nearest-neighbor two-local model, the Hamiltonian direction is recovered accurately and the non-Hamiltonian local directions remain separated by a visible commutator gap.

\subsection{Identifiability diagnostics}
\begin{figure}[!htbp]
    \centering
    \captionsetup{font=small}
    \begin{subfigure}{0.48\textwidth}
        \centering
        \includegraphics[width=\linewidth]{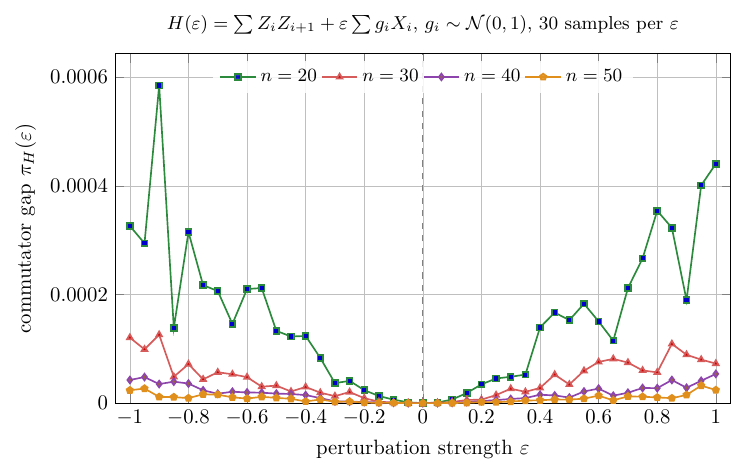}
        \caption{Smoothed commutator gap versus perturbation strength.}
        \label{fig:smoothed-gap-tfim}
    \end{subfigure}
    \hfill
    \begin{subfigure}{0.48\textwidth}
        \centering     \includegraphics[width=\linewidth]{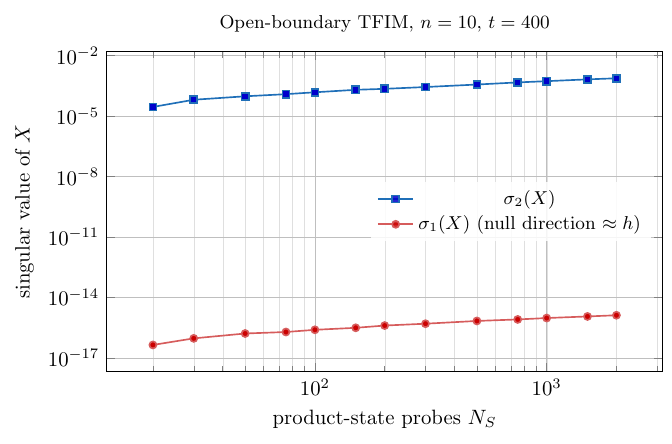}
        \caption{Singular value gap vs number of product-state probes.}
        \label{fig:singular-spectrum-tfim}
    \end{subfigure}
    \caption{TFIM numerics for local identifiability. Panel (a) shows how a random transverse-field perturbation gives a nonzero commutator gap. Panel (b) shows that the second singular value of the learning matrix separates from the null direction as the number of probes increases.}
    \label{fig:tfim-diagnostics}
\end{figure}

Finally, Figure~\ref{fig:tfim-diagnostics}(a) depicts the smoothed transverse-field Ising family
\begin{align}
H(\varepsilon) = \sum_i Z_i Z_{i+1} + \varepsilon \sum_i g_i X_i, \quad g_i \sim \mathcal{N}(0,1),
\end{align}
For this family, we numerically compute the minimum normalized commutator gap after projecting out the Hamiltonian direction. The gap is strongly suppressed near $\varepsilon=0$, reflecting the extra conserved structure of the unperturbed Ising interaction. As $|\varepsilon|$ increases, the random transverse-field perturbation breaks these conservation laws and opens a nonzero local commutator gap. The gap decreases with system size, but remains visibly positive away from the integrable point, supporting the smoothed-analysis picture that a generic local perturbation isolates the Hamiltonian as the unique approximately conserved local direction.

Figure~\ref{fig:tfim-diagnostics}(b) uses the open-boundary TFIM at $n=10$ and $t=400$. We build the learning matrix $X$ from $M_{\mathrm{probe}}$ random product-state probes and plot its two smallest singular values. The smallest singular value $\sigma_1(X)$ remains near numerical precision, as expected because the true Hamiltonian coefficient vector is an exact null direction of the ideal matrix. In contrast, $\sigma_2(X)$ is separated from zero and grows with the number of probes, indicating that directions orthogonal to $H$ are detectably disturbed by the evolution. This empirical singular-value gap is the numerical signature of local identifiability: with sufficiently many probes, the Hamiltonian direction becomes the isolated approximate null vector recovered by the algorithm.

\FloatBarrier
\subsection{Reconstruction resources}
\begin{figure}[!htpb]
\centering
\captionsetup{font=small}
\setlength{\abovecaptionskip}{4pt}
\setlength{\belowcaptionskip}{0pt}

\begin{subfigure}{0.48\textwidth}
    \centering
    \includegraphics[width=\linewidth]{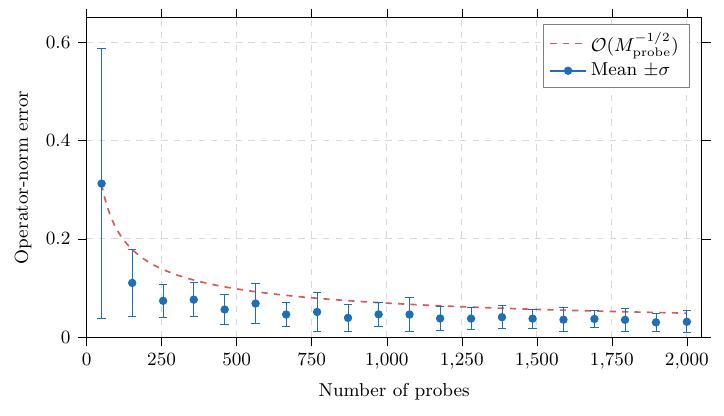}
    \caption{Operator-norm error versus number of probes for TFIM reconstruction at $n=10$, with $t\sim U[1,10]$ and $20{,}000$ shots.}
    \label{fig:probes_opnorm}
\end{subfigure}
\hfill
\begin{subfigure}{0.48\textwidth}
    \centering
    \includegraphics[width=\linewidth]{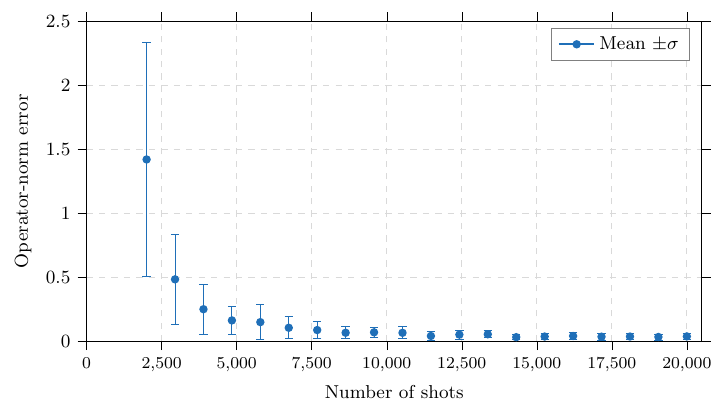}
    \caption{Operator-norm error versus number of shots for TFIM reconstruction at $n=10$, with $t\sim U[1,10]$ and $2000$ probes.}
    \label{fig:shots_opnorm}
\end{subfigure}

\caption{Scaling of reconstruction error with probe and shot number. Each point is averaged over 32 random TFIM reconstructions.}
\label{fig:resource_scaling}
\end{figure}

Figure~\ref{fig:resource_scaling} shows the reconstruction error for the transverse-field Ising model as the number of product-state probes and classical-shadow shots is varied.
In panel~(a), with the number of shots fixed at $20{,}000$, the operator-norm error decreases as the number of probes increases, and the observed scaling is consistent with the $O(M_{\mathrm{probe}}^{-1/2})$ behavior expected from empirical averaging over independent probes.
The error bars are largest at small probe number, where the learning matrix is poorly conditioned, and shrink as additional probes improve the stability of the recovered null vector.
In panel~(b), with $2000$ probes fixed, increasing the number of shots reduces the shadow-estimation noise, leading to a rapid decrease in reconstruction error.
Together, these experiments illustrate the two sources of statistical error in the algorithm: finite-probe covariance estimation and finite-shot shadow estimation.
The monotone decay in both panels supports the predicted polynomial sample behavior of the learning procedure for single-time Hamiltonian reconstruction.

\FloatBarrier
\subsection{Coefficient reconstructions}

\begin{figure}[!htbp]
\centering
\captionsetup{font=small}

\begin{subfigure}[t]{0.31\textwidth}
    \centering
    \includegraphics[width=\linewidth]{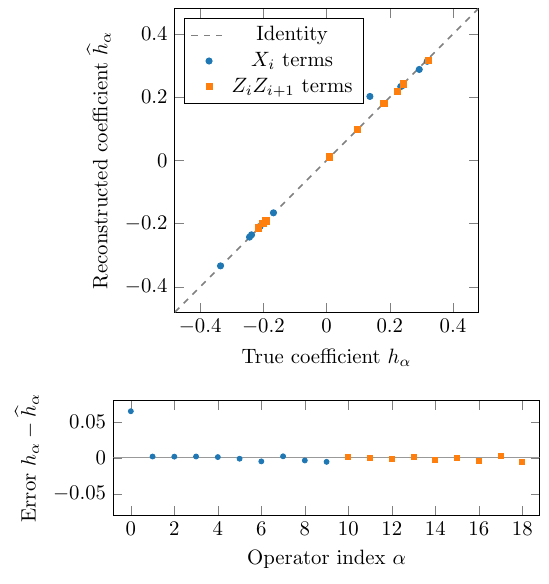}
    \caption{TFIM.}
    \label{fig:tfim_reconstruction_with_residuals}
\end{subfigure}
\hfill
\begin{subfigure}[t]{0.31\textwidth}
    \centering
    \includegraphics[width=\linewidth]{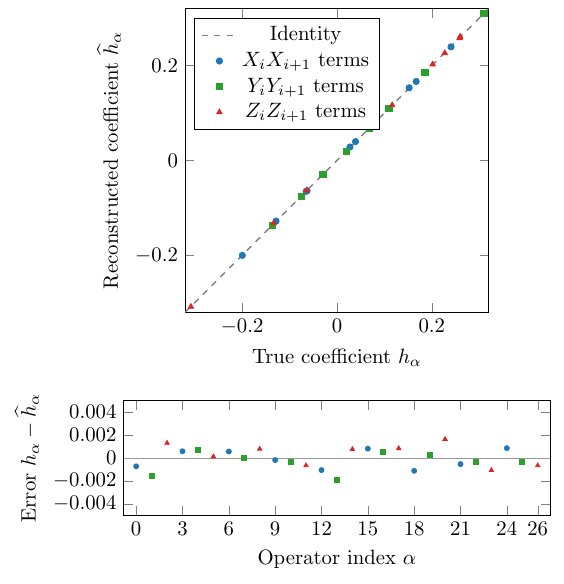}
    \caption{XYZ chain.}
    \label{fig:xyz_reconstruction_with_residuals}
\end{subfigure}
\hfill
\begin{subfigure}[t]{0.31\textwidth}
    \centering
    \includegraphics[width=\linewidth]{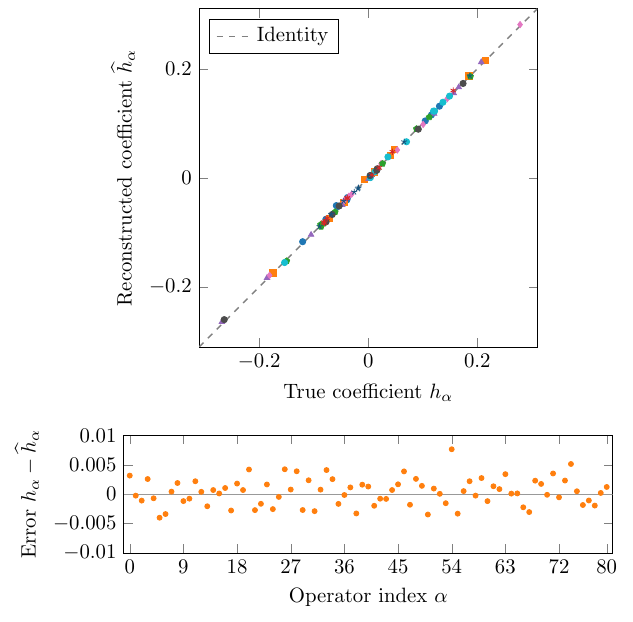}
    \caption{General nearest-neighbor 2-local model.}
    \label{fig:general_2local_nn_reconstruction_with_residuals}
\end{subfigure}

\caption{Representative coefficient reconstructions at $n=10$ with open boundary conditions, $1000$ probes, $20{,}000$ shots, and $t=1453.28$. The operator-norm errors are $0.095451$, $0.017448$, and $0.056607$ for TFIM, XYZ, and the general nearest-neighbor 2-local model, respectively; the corresponding maximum coefficient errors are $0.065303$, $0.0034702$, and $0.0076717$, and the normalized overlaps are $0.997787$, $0.999977$, and $0.999782$.}
\label{fig:reconstruction_examples}
\end{figure}

Figure~\ref{fig:reconstruction_examples} gives representative coefficient reconstructions for three Hamiltonian families at a long evolution time. In all three cases, the recovered coefficients lie close to the identity line and the residual plots remain small across the learned support. These examples are consistent with the algorithm recovering the Hamiltonian direction with inverse-polynomial accuracy even at large time and with finite-shot measurement noise.

\FloatBarrier
\subsection{Gap and equilibration diagnostics}

\begin{figure}[!htbp]
\centering
\captionsetup{font=small}

\begin{subfigure}[t]{0.32\textwidth}
    \centering
    \includegraphics[width=\textwidth]{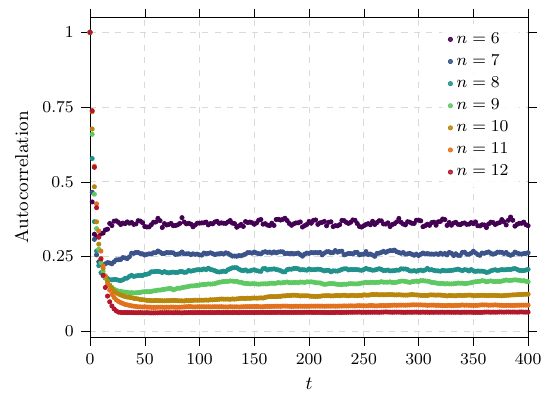}
    \caption{ $\max_{A\perp H}\frac{1}{d}\tr(A(t)A)$ for dense nearest-neighbor Gaussian Hamiltonians.}
    \label{fig:weak-equilibration-numerics}
\end{subfigure}
\hfill
\begin{subfigure}[t]{0.32\textwidth}
    \centering
    \includegraphics[width=\textwidth]{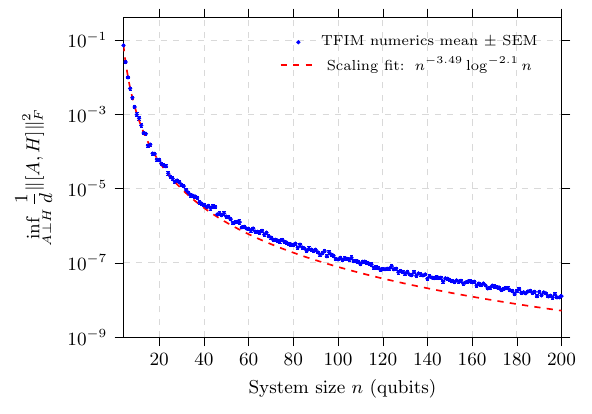}
    \caption{Static commutator gap scaling for Gaussian TFIM.}
    \label{fig:static-gap-numerics}
\end{subfigure}
\hfill
\begin{subfigure}[t]{0.32\textwidth}
    \centering
    \includegraphics[width=\textwidth]{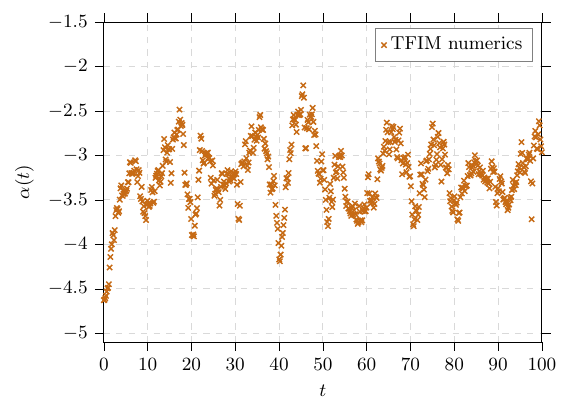}
    \caption{Extracted dynamical exponent $\alpha(t)$ from $\min_{A\perp H}\frac{1}{d}\|[U(t),A]\|_F^2\propto n^{\alpha(t)}$ for Gaussian TFIM.}
    \label{fig:dynamical-gap-exponent}
\end{subfigure}

\caption{Numerics for equilibration and commutator gap scalings.}
\label{fig:combined_numerics}
\end{figure}

Figure~\ref{fig:combined_numerics}(a) directly probes the weak-equilibration quantity appearing in the theoretical analysis. For each time $t$, the plotted value is
\begin{align} \max_{\substack{A\in \mathcal{O}_{\mathcal{V}}^{\mathrm{op}},\\ \tr(AH)=0, \\ \tfrac{1}{d}\tr(A^2)=1}} \frac{1}{d}\,\tr\Big(U(t)^\dagger A U(t)A\Big)\end{align}
where the supremum is over normalized local observables orthogonal to the Hamiltonian. This quantity measures the largest possible infinite-temperature autocorrelation among local directions orthogonal to the Hamiltonian. At $t=0$, the value is $1$, as every observable is perfectly correlated with itself. After a short transient, the curves drop well below $1$ and remain bounded away from $1$. This supports the proof of Appendix \ref{sec:equilibration}.

Figure~\ref{fig:combined_numerics}(b) examines the static commutator gap
\begin{align} \inf_{\substack{A\in \mathcal{O}_{\mathcal{V}}^{\mathrm{op}},\\\tr(AH)=0,\\\tfrac{1}{d}\tr(A^2)=1}}\frac{1}{d}\|[A,H]\|_F^2\end{align}
for the transverse-field Ising model with Gaussian coefficients. The data points show the numerically computed gap as a function of system size. For the normalized Gaussian TFIM bound proved here, the theoretical lower-bound scaling at fixed failure probability is of order $n/(N^7L_{\max}^2)$, which for the open-boundary chain $N=2n-1$ is $\Omega(n^{-6}\log^{-2}n)$. The numerical gaps decay with $n$, as expected, but remain substantially above the proven lower bound over the accessible system sizes. This is not surprising since the proofs are designed to give a uniform high-probability lower bound over all admissible observables and all random instances, so they are conservative. The numerics suggest that the actual finite-size gap is much larger than the guaranteed worst-case lower bound.

Figure~\ref{fig:combined_numerics}(c) studies the dynamical commutator gap
\begin{align} \inf_{\substack{A\in \mathcal{O}_{\mathcal{V}}^{\mathrm{op}},\\\tr(AH)=0,\\\tfrac{1}{d}\tr(A^2)=1}}\frac{1}{d}\|[U(t),A]\|_F^2\end{align}
for TFIM instances with Gaussian coefficients, by fitting the system-size dependence to a power law $n^{\alpha(t)}$. The extracted exponent $\alpha(t)$ fluctuates with time, reflecting the oscillatory dependence of $[U(t),A]$ on the energy gaps of $H$. This is consistent with the uniform sinc-reduction argument: although special times can produce cancellations in individual energy-gap coordinates, a typical sampled time does not allow all local directions orthogonal to $H$ to become simultaneously almost conserved. The observed exponents therefore provide numerical evidence for the dynamical version of the commutator gap that is actually used in the learning guarantee.

Overall, the numerical results support the main conceptual picture of this paper. In particular, the Hamiltonian is recovered accurately from a single unknown long-time evolution and the error decreases with the expected statistical resources. The relevant local commutator gaps are consistent with being inverse polynomial in the tested finite-size regimes.

\bibliographystyle{alpha}
\bibliography{references}

\end{document}